\documentclass{article}
\usepackage{microtype}
\usepackage{graphicx}
\usepackage{subcaption}
\usepackage{booktabs}
\usepackage{comment}
\usepackage{hyperref}

\usepackage{amsmath}
\usepackage{amssymb}
\usepackage{mathtools}
\usepackage{amsthm}
\usepackage[capitalize,noabbrev]{cleveref}
\theoremstyle{plain}
\newtheorem{theorem}{Theorem}[section]
\newtheorem{proposition}[theorem]{Proposition}
\newtheorem{lemma}[theorem]{Lemma}
\newtheorem{corollary}[theorem]{Corollary}
\theoremstyle{definition}
\newtheorem{definition}[theorem]{Definition}
\newtheorem{assumption}[theorem]{Assumption}
\theoremstyle{remark}

\theoremstyle{definition}
\newtheorem{example}{Example}[section]
\usepackage[textsize=tiny]{todonotes}

\newcommand{\E}{\mathbb{E}}
\newcommand{\Prob}{\mathbb{P}}
\newcommand{\Var}{\mathrm{Var}}

\newcommand{\cH}{\mathcal{H}}

\newcommand{\ip}[2]{\left\langle #1,#2\right\rangle}

\definecolor{mydarkblue}{rgb}{0,0.08,0.45}
\hypersetup{
    colorlinks=true,
    linkcolor=mydarkblue,
    citecolor=mydarkblue,
    filecolor=mydarkblue,
    urlcolor=mydarkblue,
    pdfview=FitH
}
\usepackage[main, preprint]{neurips_2026}
\usepackage[utf8]{inputenc}
\usepackage[T1]{fontenc}    
\usepackage{hyperref}      
\usepackage{url}            
\usepackage{booktabs}       
\usepackage{amsfonts}       
\usepackage{nicefrac}       
\usepackage{microtype}      
\usepackage{xcolor}        
\usepackage{algorithmic}
\usepackage{algorithm}

\title{An Asymptotic Analysis of the Shapley Value for Dataset Valuation}

\author{Mélissa Tamine \thanks{Corresponding author.} \\
Criteo AI lab, FairPlay joint team, France \\
CREST, ENSAE, Institut Polytechnique de Paris \\
\texttt{m.tamine@criteo.com} \\
\And
Benjamin Heymann \\
Criteo AI lab, FairPlay joint team, France \\
\texttt{b.heymann@criteo.com} \\
\AND
Maxime Vono \\
Criteo AI lab, FairPlay joint team, France \\
\texttt{m.vono@criteo.com} \\
\And
Patrick Loiseau \\
Inria, FairPlay joint team, France \\
\texttt{patrick.loiseau@inria.fr}
}

\begin{document}

\maketitle

\begin{abstract}
We propose an asymptotic analysis of the Shapley value in a \emph{dataset valuation} setting in which utilities are modeled as smooth functionals of empirical distributions via reproducing kernel Hilbert space (RKHS) mean embeddings. We prove that, despite its combinatorial definition, the Shapley value of a data source is asymptotically captured by a simple leading term. This term can be interpreted as the first-order contribution of a dataset relative to the surrounding data population. It also identifies the scale of the Shapley value as the number of data sources grows and provides a framework for analyzing existing Shapley value estimators. Moreover, for practitioners working with large numbers of datasets, the leading term becomes a tractable reference against which Shapley value approximations can be benchmarked.
\end{abstract}
\section{Introduction}
\label{sec:introduction}

Achieving strong generalization in machine learning often requires more data than any single party can access on its own. In many applications, the relevant data are naturally distributed across several contributors, each holding only a partial and potentially complementary view of the learning problem at hand. In online advertising, for instance, different retailers may hold complementary information about users’ browsing and purchasing behavior. Taken separately, these datasets provide only a partial picture, but taken together, they may support significantly better models. Such collaborative settings raise a natural question: if several contributors jointly enable the success of a learning task, how should that value be attributed back to their respective datasets? This is the \emph{dataset valuation} problem \citep{agarwal2019, tay2022, sim2020}. It generalizes the more popular \emph{data valuation} problem \cite{sim2022, jiang2023}, with data valuation being the special case of dataset valuation in which each owner contributes exactly one data point. A natural way to formalize it is through cooperative game theory: each dataset owner is modeled as a player, each coalition of players is assigned a utility reflecting the performance achieved by pooling their data, and a valuation rule is then used to distribute this utility among contributors \citep{ghorbani2019, jia2019b}. Among the solution concepts considered for this purpose, the Shapley value \citep{shapley1953} has become the most prominent in machine learning because it enjoys a strong axiomatic justification.
\\ \\
A major challenge in using the Shapley value is its computational intractability. For a fixed game with $I$ players, the Shapley value of one player averages its marginal contribution over all subsets of the other $I-1$ players. Computing it exactly, therefore, requires evaluating exponentially many coalitions. This has motivated a substantial literature on scalable approximations in the context of dataset valuation \citep{liu2022, wang2024a, cai2024, chi2025, wang2024b, sun2026, tamine2025}, including permutation-based Monte Carlo methods \citep{maleki2013, ghorbani2019, jia2019b}, group-testing-based accelerations \citep{jia2019b}, and dataset-level proxies such as DU-Shapley \citep{garrido2024}. These works address the problem of Shapley estimation for a fixed number of players.
\\ \\
Our paper is orthogonal to this fixed-game perspective and, to our knowledge, provides the first asymptotic analysis of the Shapley value in a dataset valuation setting. The main result is a finite-$I$ bound which, interpreted asymptotically, shows that the Shapley value is close to a simple leading term. This leading term depends on just a few interpretable quantities. So even though the Shapley formula is combinatorial, asymptotically, the value of a fixed dataset boils down to those interpretable quantities. This asymptotic characterization has two main outcomes. First, it identifies the scale of the Shapley value. Second, the leading term becomes an asymptotic reference: since the exact Shapley value is provably close to it in large‑$I$ regimes, any Shapley estimator can also be analyzed asymptotically by its distance to the same term, and at large scale, can even be benchmarked against it when the exact Shapley value computation becomes infeasible.

Conducting this asymptotic analysis, however, requires specifying what it means for the number of players in a cooperative game to grow, since the Shapley value is defined only for a fixed finite set of players, and there is no canonical way to let the player count tend to infinity. We resolve this by fixing a single data owner $i$, keeping its dataset $D_i$ unchanged, and constructing larger games by adding surrounding data owners sampled independently from a common population. The Shapley value of owner $i$ in the fixed game of $I$ players is denoted $\phi_i^I$, and the asymptotic object we study in this paper is the sequence $(\phi_i^I)_{I \ge 1}$. 

We also assume that the utility of a dataset depends smoothly on its empirical distribution. We represent empirical distributions through kernel mean embeddings in a reproducing kernel Hilbert space (RKHS). In this representation, large coalitions of surrounding owners concentrate around a deterministic population reference embedding. Adding the fixed dataset $D_i$ to a large coalition of datasets can then be viewed as a perturbation around this population reference. The smoothness of the utility allows us to analyze this perturbation by a first-order Taylor expansion, which reveals the leading term described above.

The rest of the paper develops this analysis. We summarize our main contributions as follows.

\begin{itemize}
    \item[(a)] We prove that the Shapley value $\phi_i^I$ is $O(1/I)$-close in $L^1$ to an interpretable leading term $\Theta_i^I$ (Theorem \ref{thm:leading-term}). This leading term depends, in particular, on a first-order signal that captures how player $i$'s dataset differs from the surrounding data population in directions relevant to the utility. As a consequence, under a non-degeneracy condition on this signal, we show that $\phi_i^I$ has scale $\log I/I$ in probability.
    \item[(b)] Using the scale identified by the leading term, we derive a general absolute-to-relative error principle: any estimator whose absolute error is $o(\log I/I)$ is asymptotically consistent in relative error. We apply this criterion to permutation-based Monte Carlo \cite{maleki2013} and group-testing-based estimators \citep{jia2019b}, subject to suitable computational budgets. We also introduce a structural class of leading-order estimators, defined by their closeness to the leading term, and show that it includes DU-Shapley \cite{garrido2024} and stratified Monte Carlo Shapley \cite{maleki2013}.
    \item[(c)] We complement the theory with an experimental benchmarking based on the leading term. Specifically, in large-scale regimes, we use the leading term as a reference against which Shapley value approximations can be benchmarked.
\end{itemize}

\paragraph{Related works.} The study of asymptotic properties of the Shapley value has a long tradition in cooperative game theory. Classical work on nonatomic games defines values on a continuum of infinitesimal players \citep{aumann1974}, while asymptotic-value theory studies limits of finite games induced by measure spaces of players \citep{neyman1981,neyman1988}. Our setting is different: each data owner remains a finite, non-negligible player, and coalition utilities are induced by empirical data distributions rather than by an abstract measure-theoretic game. More recent data valuation works share our goal of looking beyond a single fixed game. However, they replace the finite‑game Shapley value with a population quantity \cite{ghorbani2020, kwon2021efficient}, or treat it as random and study its sampling distribution for a fixed number of players \cite{wu2024, jia2026}. In contrast, we keep the Shapley definition unchanged and analyze its behavior as the number of players increases. Specifically, our object of interest is the sequence $(\phi_i^I)_{I\ge 1}$ of exact Shapley values of a fixed data owner. An extended discussion of related work is deferred to Appendix~\ref{sec:extended-related-work}.

\paragraph{Notation.} Throughout the paper, $\|\cdot\|$ denotes the norm of the ambient Hilbert space (or the Euclidean norm when applied to finite-dimensional vectors). For a sequence of random variables $(X_I)_{I\ge1}$ and a deterministic positive sequence $(a_I)_{I\ge1}$, we write $X_I=O_{\mathbb P}(a_I)$ if $X_I/a_I$ is bounded in probability, and $X_I=o_{\mathbb P}(a_I)$ if
$X_I/a_I \xrightarrow{\mathbb P}0$. Additional notations are along the main text and summarized in Table~\ref{tab:notation}.

\section{Problem Setup}
\label{sec:setup}

We consider a collaborative machine learning setting in which several data owners, henceforth called \emph{players}, pool their datasets to solve a common learning task. 

\subsection{Dataset valuation as a cooperative game}
\label{subsec:dataset-valuation-game}

Let $\mathcal Z$ be a measurable data space. In supervised learning, one may think of $\mathcal Z=\mathcal X\times\mathcal Y$, where $\mathcal X$ is the feature space and $\mathcal Y$ is the label space. However, in this paper, we keep the notation general because the analysis only uses the distributional content of the datasets.

Let $\mathcal{I}$ be a finite set of players. Each player $i \in \mathcal{I}$ owns a finite multiset $D_i=\{z_{i,1},\dots,z_{i,n_i}\}\in\mathcal M(\mathcal Z),$ with $n_i:=|D_i|\ge 1,$ where $\mathcal M(\mathcal Z)$ denotes the set of finite multisets over $\mathcal Z$. For a set of players $S$, we write $D_S:=\biguplus_{i\in S}D_i$ for the pooled dataset and $n_S:=|D_S|=\sum_{i\in S}n_i$ for its total number of datapoints. Here, $\biguplus$ denotes multiset union.

A dataset valuation problem is specified by a utility function $v:\mathcal M(\mathcal Z)\to\mathbb R^{+},$ which assigns a score to any pooled dataset. For the finite set of players $\mathcal I$, this induces the cooperative game
\begin{align*}
    (\mathcal I,u),
    \qquad
    u(S):=v(D_S),
    \quad S\subseteq\mathcal I.
\end{align*}
Thus, the value of a coalition is the utility obtained by pooling the datasets of its members.
\subsection{Shapley value}
\label{subsec:shapley-value}
The Shapley value of player $i\in\mathcal I$ is the average marginal contribution of $D_i$ over all possible coalitions of other players. In the finite $(\mathcal{I},u)$ game, it is
\begin{align}
    \phi_i(u)
    =
    \frac{1}{|\mathcal I|}
    \sum_{S\subseteq \mathcal I\setminus\{i\}}
    \binom{|\mathcal I|-1}{|S|}^{-1}
    \left[
        v(D_S\biguplus D_i)-v(D_S)
    \right].
    \label{eq:finite-shapley}
\end{align}
This is the finite-game object whose asymptotic behavior we study.

\subsection{From a single game to a sequence of games}
\label{subsec:growing-player-regime}

As explained in Section~\ref{sec:introduction}, an asymptotic analysis of the Shapley value requires specifying how games with different numbers of players are related.

To this end, we fix a data owner, denoted by $i$. Its dataset $D_i$ is kept fixed throughout the analysis; in particular, neither $D_i$ nor $n_i:=|D_i|$ depends on the number of players in the game. The remaining players represent the surrounding population of data sources.

For each $I\ge 2$, we consider an $I$-player game consisting of the fixed owner $i$ and $I-1$ surrounding owners. We index the surrounding owners by $[I-1]:=\{1,\dots,I-1\},$ and denote their datasets by $D_1,\dots,D_{I-1}$. Thus, the player set in this game is $\mathcal I_I := \{i\}\cup [I-1]$ where the symbol $i$ is kept distinct from the surrounding-player labels. For each $I$, the game is $(\mathcal{I}_I,u_I),$ with $u_I(S):=v(D_S)$ for all $S\subseteq \mathcal{I}_I.$

We denote by $\phi_i^I$ the Shapley value of the fixed player $i$ in this $I$-player game. Grouping coalitions by their cardinality in Eq.~\eqref{eq:finite-shapley} gives the following formulation of $\phi_i^I$
\begin{align}
    \phi_i^I
    =
    \frac{1}{I}
    \sum_{k=0}^{I-1}
    \mathbb E_{S_k}
    \left[
        v(D_{S_k}\biguplus D_i)-v(D_{S_k})
    \right],
    \label{eq:shapley-I-cardinality}
\end{align}
where $S_k$ is sampled uniformly among all subsets of $[I]\setminus\{i\}$ of size $k$. The asymptotic object we study in this paper is the sequence $(\phi_i^I)_{I \ge 1}$. 

\subsection{Smooth utilities of empirical embeddings}
\label{subsec:embedding-utility-model}
We now specify the class of utilities $v$ used in the asymptotic analysis. To analyze how utility changes when the fixed dataset $D_i$ is added to a large coalition, we choose a representation in which datasets can be compared via their empirical distributions, and in which small changes in these distributions lead to controlled changes in utility. We use kernel mean embeddings for this purpose. They map the empirical distribution of a dataset to an element of a reproducing kernel Hilbert space (RKHS), so that adding a dataset corresponds to perturbing an empirical mean in that space. This gives a convenient setting for applying Taylor expansions to the utility.

Let $(\mathcal H,\langle\cdot,\cdot\rangle)$ be a RKHS on $\mathcal Z$, with feature map $\varphi:\mathcal Z\to\mathcal H.$ For a nonempty finite multiset $D=\{z_1,\dots,z_n\}$, let $\widehat P_D:=\frac{1}{n}\sum_{r=1}^n \delta_{z_r}]$ be its empirical distribution. We define its empirical mean embedding by $\mu(\widehat P_D):= \frac{1}{n}\sum_{r=1}^n \varphi(z_r)$. For the empty dataset, we use the convention $\mu(\widehat P_\emptyset)=0.$ 

For the asymptotic analysis, we impose two regularity properties. Assumption \ref{ass:bounded-feature} ensures empirical embeddings of finite datasets cannot have an arbitrarily large norm. Assumption \ref{ass:smooth-utility} ensures that when a dataset is added to a large coalition, the resulting change in utility can be controlled by a first-order Taylor expansion.
\begin{assumption}[Bounded feature map]
\label{ass:bounded-feature}
There exists $\kappa<\infty$ such that $\|\varphi(z)\|\le \kappa, \forall z\in\mathcal Z.$
\end{assumption}

\begin{assumption}[Smooth embedding utility]
\label{ass:smooth-utility}
There exists a Fréchet differentiable functional $F:\mathcal H\to\mathbb R$
such that $v(D)=F\!\left(\mu(\widehat P_D)\right)$ for every finite multiset $D$. Moreover, there exist positive constants $G,M<\infty$ such that, for all $x,y\in\mathcal H$, $\|\nabla F(x)\|\le G,$ and $\|\nabla F(x)-\nabla F(y)\|\le M\|x-y\|$.
\end{assumption}

\subsection{Population model for surrounding players}
\label{subsec:population-model}
We now specify how the surrounding players other than $i$ are generated. This is the part of the setup that gives meaning to the growing-player regime: as $I$ increases, new players are added by sampling new datasets from the same population of data sources.

\begin{assumption}[Population of surrounding players]
\label{ass:population}
The surrounding players are independent draws from a common population of data sources. Formally, for each surrounding player $j$, let $\mathsf D_j$ be a random finite multiset with distribution $\mathcal P$ on $\mathcal M(\mathcal Z)$, and let $N_j := |\mathsf D_j|$ be its random dataset size. We assume that there exists $n_{\max}<\infty$ such that $1 \le N_j \le n_{\max}$ almost surely. We denote the average dataset size in the surrounding population by $\bar n := \mathbb E[N_j] \in [1,n_{\max}]$. After sampling, we denote by $D_j$ the realized dataset and by $n_j: = |D_j|$ its realized size.
\end{assumption}

Assumption~\ref{ass:population} is stated at the level of whole datasets. It does not require all players to have the same datapoint distribution. Rather, it says that each surrounding player is an independent draw from the same population of data sources. Thus, the model is exchangeable at the player level, while still allowing heterogeneity across players.

\begin{example}[Latent-type population]
\label{ex:latent-type-population}

A special case of Assumption~\ref{ass:population} is a hierarchical population model in which data owners belong to different latent subpopulations. Let $\mathcal T$ denote the set of possible subpopulation types and let $\Pi$ be a distribution on $\mathcal T$. For each
$j\neq i$, draw a latent type $\tau_j\sim\Pi$. Conditional on $\tau_j$, draw a dataset size $n_j\in\{1,\dots,n_{\max}\}$, possibly from a type dependent distribution, and then draw $z_{j,1},\dots,z_{j,n_j} \stackrel{\mathrm{i.i.d.}}{\sim} P_{\tau_j},$ where $P_\tau$ is a distribution on $\mathcal Z$. This construction satisfies Assumption~\ref{ass:population}: the datasets $D_j$ are independent and identically distributed, while the latent types allow different players to represent different parts of the population.
\end{example}

We now identify the embedding around which large coalitions of surrounding players concentrate. For a surrounding player $j\neq i$, define $X_j:=\sum_{z\in D_j}\varphi(z)\in\mathcal H$. This is the total contribution of player $j$'s dataset in the RKHS feature space. By Assumptions~\ref{ass:bounded-feature} and~\ref{ass:population}, $\|X_j\|\le n_{\max}\kappa$, so $\mathbb E[X_j]$ is well-defined.

For a nonempty coalition $S\subseteq[I]\setminus\{i\}$, the empirical embedding of its pooled dataset is $\mu(\widehat P_{D_S})= \frac{1}{n_S}\sum_{j\in S}X_j,$ where $n_S=\sum_{j\in S}n_j$. The numerator is the sum of the RKHS embedding of all datapoints pooled by the players in S, while the denominator is the total number of datapoints contributed by those players. By the law of large numbers, when $S$ contains many surrounding players, $\frac{1}{\lvert S \rvert}\sum_{j\in S}X_j$ is close to $\mathbb E[X_j]$ and $\frac{1}{\lvert S \rvert} n_S$ is close to $\bar n$. Hence, the empirical embedding of a large surrounding coalition is close to $\mu^\star:=\frac{\mathbb E[X_j]}{\bar n} =\frac{1}{\bar n}\mathbb E\left[\sum_{z\in D_j}\varphi(z)\right].$

The embedding $\mu^\star$ is the typical datapoint-level embedding of the surrounding population. Equivalently, it is the embedding approached by the pooled empirical distribution of a large coalition of independently sampled surrounding players. The next section uses the concentration of $\mu(\widehat P_{D_S})$ around $\mu^\star$ to identify the leading term governing the asymptotic behavior of $\phi_i^I$.
\section{Asymptotic Characterization of the Shapley Value}
\label{sec:asymptotic-characterization}

We now characterize the Shapley value of the fixed player $i$ as the number of surrounding players grows. The setup of Section~\ref{sec:setup} identifies the key intuition of the analysis: the empirical embedding of a large coalition of surrounding players concentrates around the reference embedding $\mu^\star$. This concentration allows us to approximate the marginal contribution of $D_i$ by a first-order expansion around $\mu^{\star}$. When averaged through the Shapley formula, this first-order approximation yields an explicit leading term, denoted by $\Theta_i^I$, which governs the large-I behavior of $\phi_i^I$. The purpose of this section is to explain how this concentration turns the Shapley formula into this interpretable \emph{leading term} asymptotically.

Recall from \eqref{eq:shapley-I-cardinality} that the Shapley value is an average over coalitions of players. To understand $\phi_i^I$, it is therefore enough to understand the marginal contribution of fixed player $i$ to a typical coalition $S$ of surrounding players. Let $\Delta_i(S) := v(D_S\biguplus D_i)-v(D_S)$ be this marginal contribution. For a nonempty coalition $S\subseteq[I]\setminus\{i\}$, write $\mu_S:=\mu(\widehat P_{D_S})$, and $\mu_i:=\mu(\widehat P_{D_i})$. Since $D_i$ is fixed throughout the asymptotic regime, both $n_i$ and $\mu_i$ are deterministic.

The first observation is geometric. Adding $D_i$ to $D_S$ changes the
coalition embedding by $\mu_{S\cup\{i\}}-\mu_S = \frac{n_i}{n_S+n_i}(\mu_i-\mu_S)$.
This identity is proved in Lemma~\ref{lem:update} of Appendix \ref{sec:technical-lemmas}. It shows that adding $D_i$ perturbs the current coalition embedding in the direction $\mu_i-\mu_S$, and the magnitude of the perturbation is the relative size of $D_i$ inside the enlarged coalition.

Since the utility has the form $v(D)=F(\mu(\widehat P_D)),$ the marginal contribution is the change of $F$ under this perturbation. Lemma~\ref{lem:first-order-marginal-expansion} in Appendix \ref{sec:technical-lemmas} gives the corresponding first-order expansion:
\begin{align}
\Delta_i(S) = \frac{n_i}{n_S+n_i} \left\langle \nabla F(\mu_S),\mu_i-\mu_S\right\rangle+R_S,
\label{eq:marginal-first-order}
\end{align}
where the Taylor remainder $R_S$ is controlled uniformly. Thus, at the level of a single coalition, the marginal contribution is governed by the local sensitivity of the utility at $\mu_S$ and by the direction in which adding $D_i$ moves the empirical embedding.

The asymptotic simplification comes from the population model. If $|S|=k$ is
large, then $S$ is representative of the surrounding population: Lemma~\ref{lem:coalition-embedding-concentration} in Appendix~\ref{sec:technical-lemmas} shows that $\mathbb E\|\mu_S-\mu^\star\|= O(k^{-1/2}).$
At the same time, the total number of datapoints in $S$ is close to its typical value $k\bar n$. Therefore, the population-level first-order approximation of $\Delta_i(S)$ is obtained from \eqref{eq:marginal-first-order} by replacing $\mu_S$ with $\mu^\star$ and $n_S$ with $k\bar n$, i.e, $\Delta_i(S) = \frac{n_i}{k\bar n} \left\langle \nabla F(\mu^\star), \mu_i-\mu^\star \right\rangle + \epsilon_S$
where $\epsilon_S$ collects the error made by this population-level replacement together with the Taylor remainder $R_S$.

This motivates the definition of the first-order signal.

\begin{definition}[First-order signal]
\label{def:first-order-signal}
The first-order signal of the fixed player $i$ is $c_i := \left\langle \nabla F(\mu^\star),\mu_i-\mu^\star\right\rangle$.
\end{definition}

The scalar $c_i$ measures the part of the difference between $D_i$ and the surrounding population that matters to the utility at first order. The vector $\mu_i-\mu^\star$ describes how the empirical distribution of $D_i$ differs from the population reference. The gradient $\nabla F(\mu^\star)$ describes how the utility changes locally around that reference. Their inner product keeps exactly the component of the distributional difference that changes the utility to first order.

For a coalition of size $k$, the resulting population-level marginal contribution is therefore $\frac{n_i}{k\bar n}c_i$. The factor $n_i/(k\bar n)$ is the relative size of the fixed dataset compared with a typical coalition of $k$ surrounding players. Now, the Shapley value averages marginal contributions uniformly over coalition sizes $k=0,\dots,I-1$. Therefore, averaging the population-level first-order contribution over $k=1,\dots,I-1$ gives the leading term
\begin{align*}
    \Theta_i^I:=\frac{1}{I}\sum_{k=1}^{I-1}\frac{n_i}{k\bar n}c_i
    =\frac{n_i c_i}{\bar n}
    \frac{H_{I-1}}{I},
    \qquad
    H_{I-1}:=\sum_{k=1}^{I-1}\frac{1}{k}.
\end{align*}
The empty-coalition contribution does not appear in $\Theta_i^I$, since it is weighted by $1/I$ in the Shapley formula, and is absorbed into the final $O(1/I)$ error given in Theorem \ref{thm:leading-term}.

The leading term separates three effects:
\begin{align*}
    \Theta_i^I
    =
    \underbrace{n_i}_{\text{dataset size}}
    \times
    \underbrace{c_i}_{\text{first-order signal}}
    \times
    \underbrace{\frac{H_{I-1}}{I\bar n}}_{\text{Shapley averaging over sizes}}.
\end{align*}
Thus, the asymptotic value of $D_i$ is driven by how much data the fixed player contributes, how this data differs from the surrounding population in a direction that affects the utility, and how the Shapley formula averages marginal effects over coalition sizes. Theorem \ref{thm:leading-term} states that this leading term captures the exact Shapley value asymptotically.

\begin{theorem}[Leading-term approximation]
\label{thm:leading-term}
Suppose Assumptions~\ref{ass:bounded-feature},
\ref{ass:smooth-utility}, and~\ref{ass:population} hold. Then, for the fixed
player $i$, $\mathbb E
    \left[
        \left|
            \phi_i^I-\Theta_i^I
        \right|
    \right]
    =
    O\!\left(\frac{1}{I}\right),$ where the expectation is over the surrounding players
$(D_j)_{j\neq i}$.
\end{theorem}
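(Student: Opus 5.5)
Via \eqref{eq:shapley-I-cardinality}, the plan is to split $\phi_i^I$ by coalition size and bound the error one size at a time. The empty coalition contributes $\frac1I\,[F(\mu_i)-F(0)]$. Since $\|\nabla F\|\le G$ and $\|\mu_i\|\le\kappa$, this is at most $G\kappa/I$. For $k\ge1$ we write, with $S_k$ uniform of size $k$,
\begin{align*}
\phi_i^I-\Theta_i^I=\frac{1}{I}\big[F(\mu_i)-F(0)\big]+\frac1I\sum_{k=1}^{I-1}\mathbb E_{S_k}\Big[\Delta_i(S_k)-\frac{n_i}{k\bar n}c_i\Big].
\end{align*}
Taking $\mathbb E|\cdot|$ over the surrounding players and using exchangeability, each $S_k$ has the law of $k$ i.i.d.\ draws from $\mathcal P$. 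It therefore suffices to prove the following: for every fixed coalition $S$ with $|S|=k$,
\begin{align*}
\mathbb E\Big|\Delta_i(S)-\frac{n_i}{k\bar n}c_i\Big|\le \frac{C}{k^{3/2}}+\frac{C'}{k^{2}},
\end{align*}
with constants independent of $k$ and $I$. Since $\sum_k k^{-3/2}<\infty$, summing and dividing by $I$ then gives $O(1/I)$. Note that the naive bound $C/k$ would only yield $O(\log I/I)$, so the whole proof rests on the extra $k^{-1/2}$ factor.

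To get this per-$k$ bound, we start from Lemma~\ref{lem:first-order-marginal-expansion}. Its remainder satisfies $|R_S|\le \frac M2\big(\frac{n_i}{n_S+n_i}\big)^2\|\mu_i-\mu_S\|^2\le 2M\kappa^2 n_i^2/k^2$, because $n_S\ge k$ and $\|\mu_S\|\le\kappa$. We then decompose the main term into three pieces:
\begin{align*}
\frac{n_i}{n_S+n_i}\langle\nabla F(\mu_S),\mu_i-\mu_S\rangle-\frac{n_i}{k\bar n}c_i
&=\Big(\frac{n_i}{n_S+n_i}-\frac{n_i}{k\bar n}\Big)\langle\nabla F(\mu_S),\mu_i-\mu_S\rangle\\
&\quad+\frac{n_i}{k\bar n}\big[\langle\nabla F(\mu_S)-\nabla F(\mu^\star),\mu_i-\mu_S\rangle+\langle\nabla F(\mu^\star),\mu^\star-\mu_S\rangle\big].
\end{align*}
We bound each piece in turn.

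\begin{itemize}
\item For the second group, the Lipschitz gradient bounds its first summand by $\frac{n_i}{k\bar n}M\cdot2\kappa\|\mu_S-\mu^\star\|$. Its second summand is at most $\frac{n_i}{k\bar n}G\|\mu_S-\mu^\star\|$. By Lemma~\ref{lem:coalition-embedding-concentration}, both are $O(k^{-1}\cdot k^{-1/2})$ in expectation.
\item For the size factor, $|n_S-k\bar n|$ has expectation $O(\sqrt k)$ by a variance bound, since $N_j\le n_{\max}$. Hence
\begin{align*}
\Big|\frac{n_i}{n_S+n_i}-\frac{n_i}{k\bar n}\Big|\le\frac{n_i\big(|n_S-k\bar n|+n_i\big)}{k\cdot k\bar n},
\end{align*}
which is $O(k^{-3/2})$ in expectation. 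The inner product it multiplies is bounded by $2G\kappa$.
\end{itemize}

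The main obstacle is that the naive per-coalition bound is only $O(1/k)$. Getting $O(k^{-3/2})$ means no error term may pick up a factor of order $1/k$ without also carrying a concentration factor $\|\mu_S-\mu^\star\|$ or $|n_S/k-\bar n|$, or an extra $1/k$ as the Taylor remainder does. Organizing the decomposition so that each term carries such a factor is the key step. The only probabilistic inputs are Lemma~\ref{lem:coalition-embedding-concentration} and the analogous scalar law of large numbers for $n_S$, which both follow from the i.i.d.\ Hilbert-space variance identity $\mathbb E\|\sum_{j\in S}(X_j-\mathbb E X_j)\|^2\le k\,n_{\max}^2\kappa^2$. Writing $\mu_S-\mu^\star=\frac{1}{n_S}\big(\sum_j X_j-\frac{n_S}{\bar n}\mathbb E X_j\big)$ reduces the vector case to the scalar one.
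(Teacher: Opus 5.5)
Your proposal is correct and takes essentially the same approach as the paper: isolate the $k=0$ term, expand $\Delta_i(S)$ to first order at $\mu_S$, then replace $(\mu_S,n_S)$ by $(\mu^\star,k\bar n)$ using Lemma~\ref{lem:coalition-embedding-concentration} and a variance bound on $n_S$, so each coalition size contributes a summable $O(k^{-3/2})$ error. The only difference is bookkeeping: the paper splits the error as $(\Delta_i(S)-\alpha_S c_i)+(\alpha_S-\frac{n_i}{k\bar n})c_i$ via Lemma~\ref{lem:marginal_bound}, whereas you pair the size-factor error with $\langle\nabla F(\mu_S),\mu_i-\mu_S\rangle$ and the deterministic weight $\frac{n_i}{k\bar n}$ with the concentration terms, which gives the same rates.
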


The proof is given in Appendix~\ref{subsec:proof-thm-leading-term}. Hence, although $\phi_i^I$ is defined through exponentially many coalitions, its asymptotic behavior is governed by the explicit quantity $\Theta_i^I$.
This approximation also identifies the order of magnitude of $\phi_i^I$ when the first-order signal is nonzero. Indeed, by Lemma~\ref{lem:harmonic-asymptotics} in Appendix~\ref{sec:technical-lemmas}, $H_{I-1}=\log I+O(1)$, and therefore $\Theta_i^I=\frac{n_i c_i}{\bar n}\frac{\log I}{I}+O\!\left(\frac{1}{I}\right)$. If $c_i\neq0$, the leading term is of order $\log I/I$. Moreover, Theorem~\ref{thm:leading-term} shows that
$\phi_i^I-\Theta_i^I=O_{L^1}\!\left(\frac{1}{I}\right),$ which is negligible compared with $\log I/I$. Thus, the exact Shapley value inherits the same asymptotic scale.

\begin{corollary}[Asymptotic scale of the Shapley value]
\label{cor:shapley-lower-bound}
Under the assumptions of Theorem~\ref{thm:leading-term}, if $c_i\neq 0$, then $|\phi_i^I|=\Theta_{\mathbb P}\!\left(\frac{\log I}{I}\right).$
\end{corollary}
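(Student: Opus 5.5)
The corollary asks for two bounds in probability: $|\phi_i^I|=O_{\mathbb P}(\log I/I)$ and $(\log I/I)/|\phi_i^I|=O_{\mathbb P}(1)$. The plan is to split $\phi_i^I=\Theta_i^I+(\phi_i^I-\Theta_i^I)$ and handle the deterministic leading term and the random remainder separately. $\Theta_i^I$ is deterministic, since $n_i$, $c_i$ and $\bar n$ do not depend on the surrounding players. By Lemma~\ref{lem:harmonic-asymptotics}, $H_{I-1}=\log I+O(1)$, so
\begin{align*}
\Theta_i^I=\frac{n_i c_i}{\bar n}\frac{\log I}{I}+O\!\left(\frac{1}{I}\right).
\end{align*}
Because $c_i\neq0$ and $n_i,\bar n\ge1$ are fixed, there is $I_0$ such that for all $I\ge I_0$,
\begin{align*}
\frac{|n_ic_i|}{2\bar n}\frac{\log I}{I}\le|\Theta_i^I|\le\frac{2|n_ic_i|}{\bar n}\frac{\log I}{I}.
\end{align*}

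The remainder is controlled by Theorem~\ref{thm:leading-term}, which gives $\mathbb E|\phi_i^I-\Theta_i^I|\le C/I$. Markov's inequality then yields, for any $\varepsilon>0$,
\begin{align*}
\mathbb P\big(|\phi_i^I-\Theta_i^I|>\varepsilon\log I/I\big)\le \frac{C}{\varepsilon\log I}\to0.
\end{align*}
Hence $\phi_i^I-\Theta_i^I=o_{\mathbb P}(\log I/I)$.

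\textbf{Upper bound.} By the triangle inequality, $|\phi_i^I|\le|\Theta_i^I|+|\phi_i^I-\Theta_i^I|$. The first term is at most a constant times $\log I/I$ for $I\ge I_0$, and the second is $o_{\mathbb P}(\log I/I)$. So $|\phi_i^I|=O_{\mathbb P}(\log I/I)$. The finitely many $I<I_0$ do not affect boundedness in probability.

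\textbf{Lower bound.} Set $a:=|n_ic_i|/(2\bar n)>0$. On the event $\{|\phi_i^I-\Theta_i^I|\le (a/2)\log I/I\}$, the reverse triangle inequality gives
\begin{align*}
|\phi_i^I|\ge|\Theta_i^I|-|\phi_i^I-\Theta_i^I|\ge \frac a2\,\frac{\log I}{I}.
\end{align*}
This event has probability tending to one by the Markov bound above. Therefore $\mathbb P\big(|\phi_i^I|<\tfrac a2\log I/I\big)\to0$, which is exactly boundedness in probability of $(\log I/I)/|\phi_i^I|$.

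There is no substantial obstacle, because Theorem~\ref{thm:leading-term} already does the heavy lifting. The one point needing care is that the $L^1$ rate $O(1/I)$ beats the scale $\log I/I$ only by a factor $1/\log I$. This gap is enough, since Markov's inequality only requires the ratio to vanish and no particular rate.
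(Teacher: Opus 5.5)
Your proof is correct and follows essentially the same route as the paper. Both arguments rescale by $\log I/I$, use Lemma~\ref{lem:harmonic-asymptotics} for the deterministic term $\Theta_i^I$, and use the $O(1/I)$ bound of Theorem~\ref{thm:leading-term} to show the remainder is negligible at that scale; the paper phrases this as $L^1$ convergence of $\frac{I}{\log I}\phi_i^I$ to $n_ic_i/\bar n$, where you apply Markov's inequality, and your reverse-triangle lower bound spells out the step the paper compresses into ``the limit is nonzero.''
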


The proof is given in Appendix~\ref{subsec:proof-cor-shapley-lower-bound}. The result shows that, under the non-degeneracy condition $c_i\neq0$, the exact Shapley value vanishes at order $\log I/I$. This order of magnitude determines how approximation errors should be compared to $\phi_i^I$. In particular, an absolute error that tends to zero is not sufficient: since $\phi_i^I$ itself goes to zero, such an error may still be large relative to the value being estimated. The meaningful requirement is that the absolute error be $o(\log I/I)$, i.e., smaller than the order of magnitude of $\phi_i^I$. This yields the following absolute-to-relative error criterion.

\begin{corollary}[Absolute error implies relative error]
\label{cor:absolute-to-relative}
Assume the conditions of Theorem~\ref{thm:leading-term} and suppose
$c_i\neq 0$. Let $\widehat\phi_i^I$ be any estimator of $\phi_i^I$ such that $\left|\widehat\phi_i^I-\phi_i^I\right|=o_{\mathbb P}\!\left(\frac{\log I}{I}\right)$. Then $\left|\frac{\widehat\phi_i^I}{\phi_i^I}
-1\right|\xrightarrow{\mathbb P}0.$
\end{corollary}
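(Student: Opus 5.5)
The plan is to reduce the relative error to a ratio of two quantities whose orders are already known. Write
\[
\left|\frac{\widehat\phi_i^I}{\phi_i^I}-1\right|
=\frac{\left|\widehat\phi_i^I-\phi_i^I\right|}{|\phi_i^I|}
=\frac{\left|\widehat\phi_i^I-\phi_i^I\right|\,I/\log I}{|\phi_i^I|\,I/\log I},
\]
on the event $\{\phi_i^I\neq 0\}$. By hypothesis the numerator $A_I:=|\widehat\phi_i^I-\phi_i^I|\,I/\log I$ tends to $0$ in probability. By Corollary~\ref{cor:shapley-lower-bound}, with $c_i\neq 0$, we have $|\phi_i^I|=\Theta_{\mathbb P}(\log I/I)$. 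In particular, the lower half of that statement says that the denominator $B_I:=|\phi_i^I|\,I/\log I$ is bounded away from zero in probability. That is, for every $\varepsilon>0$ there exist $\delta>0$ and $I_0$ such that $\mathbb P(B_I<\delta)\le\varepsilon$ for all $I\ge I_0$. This also shows that $\mathbb P(\phi_i^I=0)\to 0$, so the ratio is well defined with probability tending to one. The statement should be read in that sense, or with any convention for the null event.

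Given $\eta>0$ and $\varepsilon>0$, I choose $\delta$ as above and split:
\[
\mathbb P\!\left(\frac{A_I}{B_I}>\eta\right)\le \mathbb P(B_I<\delta)+\mathbb P(A_I>\eta\delta)\le \varepsilon+\mathbb P(A_I>\eta\delta).
\]
The last term tends to zero because $A_I\xrightarrow{\mathbb P}0$. Hence $\limsup_I\mathbb P(A_I/B_I>\eta)\le\varepsilon$ for every $\varepsilon>0$, which gives convergence in probability of the relative error to $0$. This is the standard Slutsky-type argument: $o_{\mathbb P}(1)/\Theta_{\mathbb P}(1)=o_{\mathbb P}(1)$.

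The only substantive ingredient is the lower bound in probability on $|\phi_i^I|$, and it is supplied entirely by Corollary~\ref{cor:shapley-lower-bound}. If one wanted to see it from Theorem~\ref{thm:leading-term} directly, the route would be as follows. Lemma~\ref{lem:harmonic-asymptotics} gives $|\Theta_i^I|\,I/\log I\to n_i|c_i|/\bar n>0$ deterministically. Markov's inequality applied to $\mathbb E|\phi_i^I-\Theta_i^I|=O(1/I)$ shows that $(\phi_i^I-\Theta_i^I)\,I/\log I\xrightarrow{\mathbb P}0$. The triangle inequality then gives $B_I\xrightarrow{\mathbb P}n_i|c_i|/\bar n$. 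This is even stronger than boundedness away from zero. The one point to handle carefully is the event where $\phi_i^I=0$, which I dispose of as noted above.
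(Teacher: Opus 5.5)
Your proposal is correct and follows the paper's argument. The paper likewise writes the relative error as $|(\widehat\phi_i^I-\phi_i^I)/a_I| \,/\, |\phi_i^I/a_I|$ with $a_I=\log I/I$, and uses that $\phi_i^I/a_I\to n_ic_i/\bar n\neq 0$ in probability, which is obtained in the proof of Corollary~\ref{cor:shapley-lower-bound} by exactly the route you sketch in your last paragraph, before concluding by Slutsky. Your explicit $\varepsilon$--$\delta$ split and your handling of the null event $\{\phi_i^I=0\}$ make rigorous what the paper leaves implicit.
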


The proof is given in Appendix~\ref{subsec:proof-cor-absolute-to-relative}. Corollary~\ref{cor:absolute-to-relative} connects the asymptotic characterization of the exact Shapley value to the asymptotic study of estimators. In fact, any approximation can now be evaluated through relative error. The next section applies this criterion to several Shapley value estimators.
\section{Implications for Shapley Estimators}
\label{sec:estimators}
Section~\ref{sec:asymptotic-characterization} establishes two conclusions about the Shapley value of the fixed player $i$. First, the exact Shapley value is close to the leading term (Theorem~\ref{thm:leading-term}). Second, when $c_i\neq0$, its order of magnitude is $\log I/I$. The key point for approximation is that errors must be compared to this $\log I/I$ scale. An estimator is relatively consistent as soon as its absolute error is $o_{\mathbb P}(\log I/I)$, by Corollary~\ref{cor:absolute-to-relative}. This section applies this principle to two types of estimators.
\paragraph{Budget-based Shapley estimators.}
We first consider estimators that depend on a computational budget. The role of our asymptotic analysis is to translate this budget into a relative-error guarantee. By Corollary~\ref{cor:absolute-to-relative}, it is enough to choose the budget so that the absolute estimation error is $o_{\mathbb P}(\log I/I)$. The first estimation is the permutation Monte Carlo Shapley \citep{jia2019b, ghorbani2019}, denoted by $\widehat\phi_{i,\mathrm{MC}}^I$, which averages $m_I$ independent random marginal contributions of player $i$. Its formal definition is recalled in Appendix~\ref{sec:missing-defs}. The second is the group-testing Shapley estimator of \citep{jia2019b}, denoted by $\widehat\phi_{i,\mathrm{GT}}^I$, which estimates the Shapley value from $T_I$ coalition queries. Its definition is also given in Appendix~\ref{sec:missing-defs}.

\begin{proposition}[$\widehat{\phi}_{i,\mathrm{MC}}^I$ and $\widehat{\phi}_{i,\mathrm{GT}}^I$ have converging relative error under suited budgets]
\label{prop:mc_gt_budgets}
Let $\widehat\varphi_{i,\mathrm{MC}}^I$ be the permutation Monte Carlo Shapley estimator \cite{maleki2013} based on $m_I$ independent random permutations. Let $\widehat\varphi_{i,\mathrm{GT}}^I$ be the group-testing Shapley estimator of \cite{jia2019b}. Under the assumptions of Theorem~\ref{thm:leading-term} and with $c_i \neq 0$:
\begin{enumerate}
\item If $m_I \gg \frac{I^2}{\log^2 I}$, then $\left|\frac{\widehat{\phi}_{i,\mathrm{MC}}^I}{\phi_i^I} - 1\right| \xlongrightarrow{\mathbb{P}} 0$.
\item If there exist sequences $\varepsilon_I=o(\log I/I)$ and $\delta_I\to0$ such that the group-testing budget $T_I$ is large enough for the estimator to be an
$(\varepsilon_I,\delta_I)$-approximation of the Shapley vector $(\phi_j^I)_{j=1}^{I}$ in $\ell_2$-norm., then $\left|\frac{\widehat{\phi}_{i,\mathrm{GT}}^I}{\phi_i^I} - 1\right| \xlongrightarrow{\mathbb{P}} 0$.
\end{enumerate}
\end{proposition}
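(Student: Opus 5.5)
Both parts reduce to Corollary~\ref{cor:absolute-to-relative}. It therefore suffices to show that, under the stated budgets, $|\widehat\phi_i^I-\phi_i^I|=o_{\mathbb P}(\log I/I)$. Throughout, we condition on the realized surrounding datasets $(D_j)_{j\neq i}$, so that $\phi_i^I$ is a fixed number and the only randomness left is the estimator's internal sampling. We then pass to unconditional probability, because a bound holding conditionally with deterministic constants also holds unconditionally.

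\textbf{Monte Carlo.} The permutation estimator is $\widehat\phi_{i,\mathrm{MC}}^I=\frac1{m_I}\sum_{t=1}^{m_I}\Delta_i(S^{(t)})$. Here $S^{(t)}$ is the set of predecessors of $i$ in an independent uniform permutation, so conditionally on the datasets each summand is an unbiased draw with mean $\phi_i^I$ by \eqref{eq:shapley-I-cardinality}. The key step is a uniform bound on the marginal contributions that does not depend on $I$. By Assumption~\ref{ass:smooth-utility}, $F$ is $G$-Lipschitz. By Assumption~\ref{ass:bounded-feature}, every embedding has norm at most $\kappa$, including the convention $\mu(\widehat P_\emptyset)=0$. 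Hence $|\Delta_i(S)|=|F(\mu_{S\cup\{i\}})-F(\mu_S)|\le 2G\kappa$ for every $S$. The conditional variance of $\widehat\phi_{i,\mathrm{MC}}^I$ is then at most $4G^2\kappa^2/m_I$. Chebyshev gives, for any $\eta>0$,
\begin{align*}
\mathbb P\Bigl(\bigl|\widehat\phi_{i,\mathrm{MC}}^I-\phi_i^I\bigr|>\eta\tfrac{\log I}{I}\Bigr)\le \frac{4G^2\kappa^2 I^2}{\eta^2 m_I\log^2 I}\longrightarrow 0,
\end{align*}
and the limit is zero exactly because $m_I\gg I^2/\log^2 I$. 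This is the required $o_{\mathbb P}(\log I/I)$ error, and Corollary~\ref{cor:absolute-to-relative} concludes.

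A sharper per-coalition bound, $|\Delta_i(S)|\lesssim n_i/(n_S+n_i)$ via Lemma~\ref{lem:update}, would allow a smaller budget. It is not needed for the stated condition.

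\textbf{Group testing.} By the $\ell_2$-approximation hypothesis, with probability at least $1-\delta_I$ we have $\|\widehat\phi_{\mathrm{GT}}^I-\phi^I\|_2\le\varepsilon_I$. The $i$-th coordinate error is bounded by the $\ell_2$ norm, so $|\widehat\phi_{i,\mathrm{GT}}^I-\phi_i^I|\le\varepsilon_I$ on the same event. Fix $\eta>0$. Since $\varepsilon_I=o(\log I/I)$, we have $\varepsilon_I<\eta\log I/I$ for all large $I$. Hence $\mathbb P(|\widehat\phi_{i,\mathrm{GT}}^I-\phi_i^I|>\eta\log I/I)\le\delta_I\to0$, and Corollary~\ref{cor:absolute-to-relative} applies again.

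\textbf{Main obstacle.} The only delicate point is bookkeeping the two sources of randomness: the surrounding players, and the estimator's internal sampling. In particular, the $(\varepsilon_I,\delta_I)$ guarantee and the Chebyshev bound must hold uniformly over realizations of the datasets. This holds because the variance bound $4G^2\kappa^2/m_I$ is deterministic, and the group-testing guarantee is assumed for every game of the given size. Averaging the conditional probabilities over the datasets therefore preserves convergence to zero. The non-degeneracy $c_i\neq0$ enters only through Corollary~\ref{cor:absolute-to-relative}.
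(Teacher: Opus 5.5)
Your proposal is correct and follows the paper's proof essentially step for step: the uniform bound $2G\kappa$ on marginal contributions plus a conditional Chebyshev argument for Monte Carlo, and, for group testing, the coordinate-versus-$\ell_2$ comparison with $\varepsilon_I=o(\log I/I)$ and $\delta_I\to0$; each part is closed by Corollary~\ref{cor:absolute-to-relative}. The only nuance is in the group-testing part: the paper justifies uniformity of the guarantee over realizations by noting that the budget in Theorem~3 of \cite{jia2019b} depends on the game only through the utility range $r=2G\kappa$, whereas you take this uniformity as part of the hypothesis, which the statement's phrasing permits.
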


The proof is given in Appendix~\ref{subsec:proof-prop-mc-gt}.

\paragraph{Leading-order Shapley estimators.}
We then consider estimators that directly follow the structure of the leading term $\Theta_i^I = \frac{1}{I}\sum_{k=1}^{I-1}\frac{n_i}{k\bar n}c_i$. This term averages a per‑size contribution $\frac{n_i}{k\bar n}c_i$ over all coalition sizes $k$. An estimator that, for each $k$, approximates this per‑size contribution well enough so that the average error is $O(1/I)$ will also be $O(1/I)$-close to $\phi_i^I$ (by Theorem~\ref{thm:leading-term}) and hence relatively consistent. The following definition formalizes this.

\begin{definition}[Leading‑order estimator]
\label{def:leading_order}
An estimator $\widehat{\phi}_i^I$ is called a leading‑order estimator if it can be written as $\widehat{\phi}_i^I = \frac{1}{I}\sum_{k=1}^{I-1} \frac{n_i}{k\bar{n}}\,\widehat{\Delta}_{k,I}^{(i)} + \frac{\widehat{\Delta}_{0,I}^{(i)}}{I},$
where:
\begin{itemize}
\item[(C1)] $\mathbb{E}|\widehat{\Delta}_{0,I}^{(i)}| \le C_0$ for some constant $C_0$ independent of $I$;
\item[(C2)] $\displaystyle\sum_{k=1}^{I-1}\frac{1}{k}\,\mathbb{E}\bigl|\widehat{\Delta}_{k,I}^{(i)} - c_i\bigr| \le C_\Delta$ for some constant $C_\Delta$ independent of $I$.
\end{itemize}
\end{definition}

Condition (C1) controls the empty-coalition term, whose Shapley weight is $1/I$. Condition (C2) controls the cumulative error made when replacing the first-order signal $c_i$ by the estimator-specific quantities $\widehat{\Delta}_{k,I}^{(i)}$. The factor $1/k$ is the same harmonic weight that appears in the leading term, so the condition matches the structure of $\Theta_i^I$.

\begin{theorem}[Relative consistency of leading‑order estimators]
\label{thm:leading_order_consistent}
Under Assumptions~\ref{ass:population}--\ref{ass:smooth-utility} and with $c_i \neq 0$, every leading‑order estimator satisfies $\left|\frac{\widehat{\phi}_i^I}{\phi_i^I} - 1\right| \xlongrightarrow{\mathbb{P}} 0.$
\end{theorem}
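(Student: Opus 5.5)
The plan is to reduce the statement to Corollary~\ref{cor:absolute-to-relative}. It suffices to show that every leading-order estimator satisfies $\mathbb E|\widehat\phi_i^I-\phi_i^I| = O(1/I)$. Markov's inequality then gives $|\widehat\phi_i^I-\phi_i^I| = O_{\mathbb P}(1/I)$, which is $o_{\mathbb P}(\log I/I)$ because $1/\log I\to 0$. Since $c_i\neq 0$, Corollary~\ref{cor:absolute-to-relative} applies directly and yields $|\widehat\phi_i^I/\phi_i^I-1|\to 0$ in probability.

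To obtain the $L^1$ bound, I will use the triangle inequality through the leading term:
\begin{align*}
\mathbb E\bigl|\widehat\phi_i^I-\phi_i^I\bigr| \le \mathbb E\bigl|\widehat\phi_i^I-\Theta_i^I\bigr| + \mathbb E\bigl|\Theta_i^I-\phi_i^I\bigr|.
\end{align*}
The second term is $O(1/I)$ by Theorem~\ref{thm:leading-term}.

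For the first term, I will use the representation in Definition~\ref{def:leading_order} together with $\Theta_i^I=\frac1I\sum_{k=1}^{I-1}\frac{n_i}{k\bar n}c_i$. This gives
\begin{align*}
\widehat\phi_i^I-\Theta_i^I = \frac{1}{I}\sum_{k=1}^{I-1}\frac{n_i}{k\bar n}\bigl(\widehat\Delta_{k,I}^{(i)}-c_i\bigr) + \frac{\widehat\Delta_{0,I}^{(i)}}{I}.
\end{align*}
Taking expectations and applying the triangle inequality, the first part is bounded by $\frac{n_i}{\bar n I}C_\Delta$ using (C2), and the second by $C_0/I$ using (C1). Since $n_i$ is fixed and $\bar n\ge 1$, both bounds are $O(1/I)$.

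No step here is a real obstacle once Theorem~\ref{thm:leading-term} and Corollary~\ref{cor:absolute-to-relative} are available. The only point needing care is the domain of the expectations. I need to check that the expectations in (C1)--(C2) are taken over the same probability space as in Theorem~\ref{thm:leading-term}, namely the surrounding datasets together with any internal randomness of the estimator. Then the triangle inequality and the passage from an $L^1$ bound to an $O_{\mathbb P}$ statement are legitimate.

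I also need Corollary~\ref{cor:absolute-to-relative} to require only convergence in probability of the absolute error, with no further condition on the estimator. This holds because its hypothesis is stated as an $o_{\mathbb P}$ condition.
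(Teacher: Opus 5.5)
Your proposal is correct and follows the paper's proof. It uses the same decomposition of $\widehat\phi_i^I-\Theta_i^I$ from the representation in Definition~\ref{def:leading_order}, bounded by $\frac{n_i}{\bar n I}C_\Delta+\frac{C_0}{I}$ via (C1)--(C2). You also spell out the closing steps the paper's written proof leaves implicit: the triangle inequality through $\Theta_i^I$ using Theorem~\ref{thm:leading-term}, Markov's inequality to obtain $o_{\mathbb P}(\log I/I)$, and then Corollary~\ref{cor:absolute-to-relative}, which is exactly what is needed to reach the stated relative-error conclusion.
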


The proof is given in Appendix~\ref{subsec:proof-thm-leading-order-estimator}. We now instantiate the leading-order definition with two estimators. The first is DU-Shapley \citep{garrido2024}, denoted by $\widehat\phi_{i,\mathrm{DU}}^I$. The second is the stratified Monte Carlo Shapley estimator \cite{maleki2013}, denoted by $\widehat\phi_{i,\mathrm{strat}}^I$. Formal definitions of these estimators are given in Appendix~\ref{sec:missing-defs}.

The reason these estimators fit the previous definition is that their cardinality-wise terms are close, in aggregate, to the same first-order signal that defines $\Theta_i^I$. This is made precise in Lemma~\ref{lem:du-close-leading} for DU-Shapley and in Lemma~\ref{lem:strat-close-leading} for stratified Monte Carlo.

\begin{proposition}[DU‑Shapley and stratified Monte Carlo are leading-order estimators]
\label{prop:du_strat}
Let $\widehat{\phi}_{i,\mathrm{DU}}^I$ denote the DU‑Shapley estimator \citep{garrido2024}.  
Let $\widehat{\phi}_{i,\mathrm{strat}}^I$ denote the stratified Monte Carlo Shapley estimator \cite{maleki2013}.  
Then, under the assumptions of Theorem~\ref{thm:leading-term}, both $\widehat{\phi}_{i,\mathrm{DU}}^I$ and $\widehat{\phi}_{i,\mathrm{strat}}^I$ are leading‑order estimators. Consequently,
$\left|\frac{\widehat{\phi}_{i,\mathrm{DU}}^I}{\phi_i^I} - 1\right| \xlongrightarrow{\mathbb{P}} 0,$ and $
\left|\frac{\widehat{\phi}_{i,\mathrm{strat}}^I}{\phi_i^I} - 1\right| \xlongrightarrow{\mathbb{P}} 0.$
\end{proposition}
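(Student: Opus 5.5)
Proving that the two estimators are leading-order estimators means exhibiting, for each, a decomposition of the form in Definition~\ref{def:leading_order} and then verifying (C1) and (C2). Relative consistency then follows immediately from Theorem~\ref{thm:leading_order_consistent}. I would treat both estimators through a common template. Both are built from cardinality-wise (stratum-wise) quantities $\widehat M_{k,I}$, each meant to approximate $\mathbb E_{S_k}[\Delta_i(S_k)]$ for $k=0,\dots,I-1$: for stratified Monte Carlo, a sample mean of marginal contributions over sampled coalitions of size $k$; for DU-Shapley, the marginal contribution of $D_i$ to a single proxy dataset whose size matches a typical $k$-coalition and whose embedding is built from the pooled surrounding data. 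Writing $\widehat\phi_i^I=\frac1I\sum_{k=0}^{I-1}\widehat M_{k,I}$, I set $\widehat\Delta_{0,I}^{(i)}:=\widehat M_{0,I}$ and $\widehat\Delta_{k,I}^{(i)}:=\frac{k\bar n}{n_i}\widehat M_{k,I}$ for $k\ge1$. This is exactly the required form.

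Condition (C1) is immediate. Since $\|\nabla F\|\le G$ and all embeddings have norm at most $\kappa$ by Assumption~\ref{ass:bounded-feature}, the mean value theorem gives $|\Delta_i(S)|\le 2G\kappa$ for every $S$, including the empty one. The same bound applies to any average of marginal contributions or to a proxy marginal contribution, so $C_0=2G\kappa$ works.

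For (C2), the key per-coalition estimate is the following. For any nonempty $S$ of size $k$,
\[
\Bigl|\tfrac{k\bar n}{n_i}\Delta_i(S)-c_i\Bigr|\le C\bigl(\|\mu_S-\mu^\star\|+|n_S/k-\bar n|+1/k\bigr).
\]
I would derive it from Lemma~\ref{lem:first-order-marginal-expansion}, using a remainder of order $(n_i/(n_S+n_i))^2\le n_i^2/k^2$. I would then replace $\nabla F(\mu_S)$ by $\nabla F(\mu^\star)$ at cost $M\|\mu_S-\mu^\star\|$ and $\mu_S$ by $\mu^\star$ at cost $G\|\mu_S-\mu^\star\|$. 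Finally, I would replace $\frac{n_i}{n_S+n_i}$ by $\frac{n_i}{k\bar n}$; after multiplying by $k\bar n/n_i$, this costs $O(|n_S/k-\bar n|+1/k)$, using $n_S\ge k$. Taking expectations, Lemma~\ref{lem:coalition-embedding-concentration} together with a variance bound on $n_S/k$ (bounded i.i.d.\ sizes; for uniformly random subsets of i.i.d.\ players, $S$ is itself an i.i.d.\ sample) gives $\mathbb E|\frac{k\bar n}{n_i}\Delta_i(S_k)-c_i|=O(k^{-1/2})$. For stratified MC, $\widehat\Delta_{k,I}^{(i)}-c_i$ is an average of such terms over sampled coalitions, so the triangle inequality yields the same $O(k^{-1/2})$ bound regardless of the number of samples. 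For DU-Shapley, the proxy coalition embedding is an average of $\mu$-type quantities built from the pooled surrounding data, and its size is $k$ times an empirical mean size. The same estimate applies with $\mu_S$ replaced by the empirical mean $\widehat\mu$ of all $I-1$ surrounding players (error $O(I^{-1/2})\le O(k^{-1/2})$) and $n_S/k$ replaced by the empirical mean size. Then $\sum_{k\ge1}\frac1k\cdot O(k^{-1/2})\le C\sum k^{-3/2}<\infty$, which gives (C2) with a constant independent of $I$.

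The main obstacle is the DU-Shapley instantiation. Its precise definition (in Appendix~\ref{sec:missing-defs}) determines whether the proxy uses the leave-one-out pool, an average embedding, or a rescaled dataset size, and whether the proxy "size $k$" is $k\bar n$ or $k\cdot\frac{1}{I-1}\sum_j n_j$. The mismatch between these choices has to be absorbed into the $|n/k-\bar n|$ term without losing the $k^{-1/2}$ (or at least $k^{-\epsilon}$) decay that is needed to make $\sum_k \frac1k(\cdot)$ summable. If the proxy error turns out only $O(I^{-1/2})$ uniformly in $k$, it is still fine, since $\sum_{k<I}\frac1k I^{-1/2}=O(\log I/\sqrt I)$ is bounded. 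I would make sure the argument accommodates either case.
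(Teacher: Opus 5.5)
Your route matches the paper's. You use the same rescaling $\widehat\Delta_{k,I}^{(i)}=\frac{k\bar n}{n_i}\Gamma_{k,I}$, the same (C1) argument, and the same first-order, embedding-concentration and size-concentration estimates. Your normalized per-$k$ bound $\mathbb E|\widehat\Delta_{k,I}^{(i)}-c_i|=O(k^{-1/2})$ is the paper's $\sum_k\mathbb E|\Gamma_{k,I}-\frac{n_i}{k\bar n}c_i|$ rewritten. The stratified Monte Carlo part is complete.

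The gap is in the DU-Shapley step. You replace the proxy embedding by the pooled mean $\mu_{-i}^I$ of all surrounding players, with error $O(I^{-1/2})$. But DU-Shapley (Appendix~\ref{app:def-du-shapley}) evaluates $v$ on a random subsample $\widetilde D_{k,I}$ of $m_{k,I}=\lfloor k\widehat n_{-i}^I\rfloor$ datapoints drawn without replacement from $D_{-i}^I$. Its embedding $\widetilde\mu_{k,I}$ therefore carries an extra subsampling fluctuation $\|\widetilde\mu_{k,I}-\mu_{-i}^I\|$ of order $m_{k,I}^{-1/2}$. For small $k$ (e.g.\ $k=1$) this does not vanish with $I$ at all, so it is not $O(I^{-1/2})$. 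None of the tools you invoke controls it. Lemma~\ref{lem:coalition-embedding-concentration} applies to unions of whole i.i.d.\ player datasets, and a datapoint-level subsample of the pool is not one.

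The missing ingredient is a concentration bound for sampling without replacement in $\mathcal H$. The paper uses Lemma~\ref{thm:schneider} and Corollary~\ref{cor:schneider_exp}, applied conditionally on the pool (uniformly, since features are bounded by $\kappa$). Combined with $m_{k,I}\ge k$, which follows from $\widehat n_{-i}^I\ge 1$, this gives $\mathbb E\|\widetilde\mu_{k,I}-\mu_{-i}^I\|\le\kappa\sqrt{8\pi/k}$. An elementary finite-population variance computation giving $\kappa/\sqrt{m_{k,I}}$ would also suffice.

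Splitting $\|\widetilde\mu_{k,I}-\mu^\star\|\le\|\widetilde\mu_{k,I}-\mu_{-i}^I\|+\|\mu_{-i}^I-\mu^\star\|$ then gives $O(k^{-1/2})+O(I^{-1/2})$. This slots into your per-coalition template and keeps $\sum_k\frac1k(\cdot)$ bounded. Also make the size term explicit: $|m_{k,I}/k-\bar n|\le 1/k+|\widehat n_{-i}^I-\bar n|$, with $\mathbb E|\widehat n_{-i}^I-\bar n|\le n_{\max}/\sqrt{I-1}$. The same inequality $m_{k,I}\ge k$ is what guarantees $\frac{k\bar n}{m_{k,I}+n_i}\le\bar n$ in your rescaled bound.
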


The proof is given in Appendix~\ref{subsec:proof-prop-du-strat}.
\section{Benchmarking Shapley Estimators using the Leading Term}
\label{sec:benchmarking}
The goal of this section is to illustrate how the leading term can be used as a practical large-scale benchmark for Shapley value approximations. Exact Shapley computation requires averaging over $2^{I-1}$ coalitions for a single player and is therefore unavailable in the large-$I$ regimes where the asymptotic theory is most relevant. In contrast, the leading term can be computed directly whenever the population quantities $\bar n$, $\mu^\star$, and $c_i$ are known. More generally, under the population model, these quantities are estimable from the surrounding data owners through empirical averages, leading to a plug-in estimate of the leading term. A full analysis of the additional error induced by this plug-in estimation is left for future work. In this section, we focus on a controlled synthetic benchmark in which the population quantities are known exactly. This allows us to isolate the behavior of Shapley value approximations relative to the leading term, without confounding it with the statistical error of estimating the benchmark itself.
\paragraph{Setting.} We instantiate the theoretical setup in a finite-dimensional RKHS $\mathcal H = \mathbb R^d$ with $d=30$. Datapoints are represented directly by bounded feature vectors. To obtain a simple and exactly reproducible population model, we use the finite-support construction given in Example \ref{ex:latent-type-population}. There are $T=4$ latent player types, with type probabilities $\pi = (0.15,0.35,0.30,0.20).$ Each type $t\in\{1,\dots,T\}$ is associated with a finite dictionary of $A_t=12$ prototype datapoints $a_{t,1},\dots,a_{t,A_t}\in\mathbb R^d$, where $\|a_{t,\ell}\|_2 = 0.9$. These prototype datapoints are generated once at the beginning of the experiment and kept fixed throughout. Specifically, for each $t$ and each $\ell\in\{1,\dots,A_t\}$, we first draw $g_{t,\ell} \sim \mathcal N(0,I_d)$, independently across $t$ and $\ell$, and then set $a_{t,\ell} = 0.9 \frac{g_{t,\ell}}{\|g_{t,\ell}\|_2}.$ Thus, all prototype datapoints lie on the Euclidean sphere of radius $0.9$, which ensures the bounded-feature condition required by the theory. Conditional on type $t$, datapoints are sampled uniformly from the corresponding prototypes.

For every player $j\neq i$, we generate its dataset as follows. First, we sample a type $\tau_j \sim \mathrm{Categorical}(\pi)$. Then, we sample a dataset size independently of the type, $n_j \sim \mathrm{Unif}\{1,\dots,n_{\max}\},$ with $n_{\max}=50.$ Finally, we sample datapoints $ z_{j,1},\dots,z_{j,n_j}$ i.i.d. from the uniform distribution over the prototypes of type $\tau_j$. This gives a player dataset $D_j = \{z_{j,1},\dots,z_{j,n_j}\}.$ Since the dataset size is independent of the type, the average player size is known exactly: $\bar n = \frac{1+n_{\max}}{2}=25$.

Let $\mu_t = \frac{1}{A_t}\sum_{\ell=1}^{A_t} a_{t,\ell}$ be the mean embedding of type $t$. Because the population distribution is known exactly, the size-weighted population embedding is also known exactly: $\mu^\star = \sum_{t=1}^T \pi_t \mu_t$. This makes the leading term $\Theta_i^I$ an oracle reference in this controlled benchmark. 

\paragraph{Fixed player.} The theory studies the Shapley value of a fixed player $i$ as the number of surrounding players grows. We follow this convention and fix player $i$ once at the beginning of the experiment. To make the first-order signal visible, we choose the fixed player's type as the type whose mean embedding is farthest from the population embedding: $\tau_i \in \arg\max_{t\in\{1,\dots,T\}} \|\mu_t-\mu^\star\|$. We set its dataset size to $n_i=n_{\max}=50$ and sample its datapoints uniformly from the prototype datapoints of type $\tau_i$. Once sampled, $D_i$ is kept fixed throughout all experiments. Its empirical embedding is $\mu_i = \frac{1}{n_i}\sum_{r=1}^{n_i}\varphi(z_{i,r}).$

\paragraph{Utility.}
For a finite multiset $D$, let $\mu(\widehat P_D)=\frac{1}{|D|} \sum_{z\in D}\varphi(z)$ denote its empirical mean embedding, with the convention $\mu(\widehat P_\emptyset)=0$. We define the utility as a smooth functional of
this empirical embedding: $v(D) = F\bigl(\mu(\widehat P_D)\bigr)$ with $F(\cdot) = \tanh\bigl(\beta\langle w,\cdot\rangle\bigr),$ and $\beta=1.5$. The direction $w$ is chosen to align with the fixed player's deviation from the population: $w = \frac{\mu_i-\mu^\star}{\|\mu_i-\mu^\star\|}.$ This choice ensures that the fixed player has a non-negligible first-order signal $c_i = \left\langle \nabla F(\mu^\star), \mu_i-\mu^\star \right\rangle,$
which is the regime in which relative-error convergence is meaningful. The function $F$ is smooth, bounded, and has a bounded derivative and Lipschitz gradient on $\mathbb R^d$, matching the regularity assumptions used in the theoretical analysis.

\paragraph{Expanding the player population.} A key aspect of the experiment is the way in which we vary the number of players $I$. The theory studies the value of a fixed player $i$ as more independent surrounding players are added. To match this regime, we do not draw a new unrelated collection of players for each value of $I$. Instead, we first generate a large collection of players $D_2,\dots,D_{I_{\max}}.$ Then, for any $I\leq I_{\max}$, we define the corresponding game by keeping the fixed player $i$ and using the first $I-1$ players from this collection: $\{i,2,\dots,I\}$. Thus, increasing $I$ amounts to adding new players to an existing population.

\paragraph{Estimators.} We benchmark three estimators against the leading term: $\widehat{\phi}{i,\mathrm{MC}}^I$ with budget $m_I \propto I^2/\log I$, $\widehat{\phi}{i,\mathrm{DU}}^I$, and $\widehat{\phi}{i,\mathrm{strat}}^I$. We do not include $\widehat{\phi}{i,\mathrm{GT}}^I$ in this benchmark because Proposition~\ref{prop:mc_gt_budgets} invokes the group-testing guarantee through a prescribed $(\varepsilon,\delta)$-accuracy for the full Shapley vector, rather than through a simple fixed-player budget scaling. We therefore focus the empirical comparison on estimators whose computational budgets are directly interpretable in the leading-term regime.

\paragraph{Evaluation metric.} For each estimator $\widehat\phi_i^I$, we report its relative error with respect to the leading term: $\mathrm{RelErr}(\widehat\phi_i^I,\Theta_i^I) = \left|\frac{\widehat\phi_i^I}{\Theta_i^I}-1\right|$. We repeat the full experiment $R=30$ times independently. In each repetition, we first draw one large population of players up to $I_{\max}=3500$. We then evaluate all estimators on the same increasing sequence of player populations, with $I \in \{50, 75, 100, 150, 225, 325, 475, 700, 1000, 1300, 1600, 1900, 2100, 2500, 3000, 3500\}$. For each value of $I$, we plot the empirical mean relative error across repetitions. Shaded regions correspond to one standard error of the mean. Because the theory predicts an $O(1/I)$ absolute error to a quantity of scale $\log(I)/I$, the corresponding relative error is expected to decrease at the rate $1/\log I$. We therefore include a rescaled $1/\log I$ reference curve as a visual guide. The resulting benchmark is shown in Figure \ref{fig:benchmark}. All experiments were run on CPU only and completed in less than a minute.

\begin{figure*}[ht]
    \centering
    \subfloat[$\widehat{\phi}_{i,\text{MC}}^I$]{%
        \includegraphics[width=0.325\textwidth]{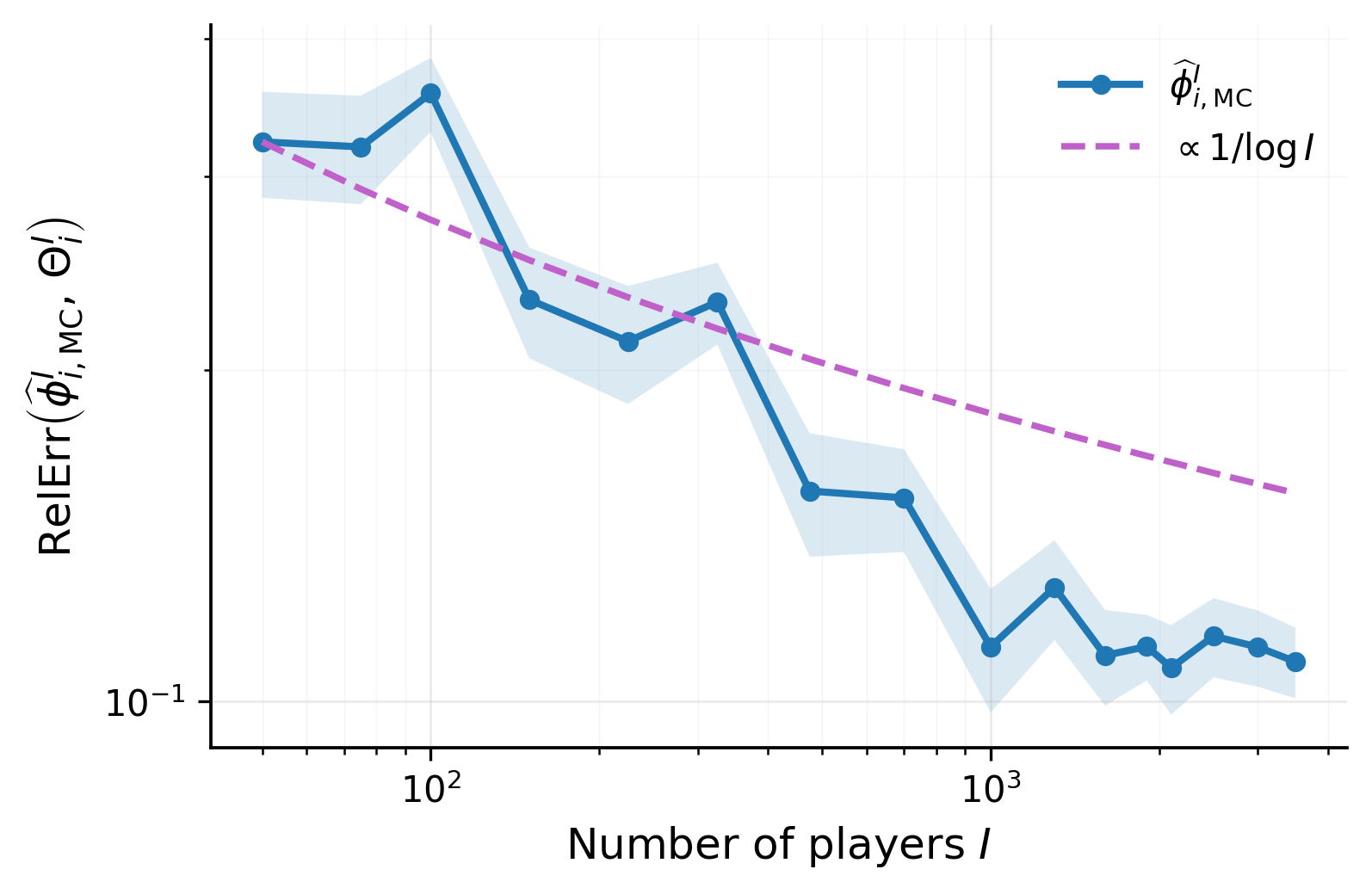}
    }
    \hfill
    \subfloat[$\widehat{\phi}_{i,\text{DU}}^I$]{%
        \includegraphics[width=0.325\textwidth]{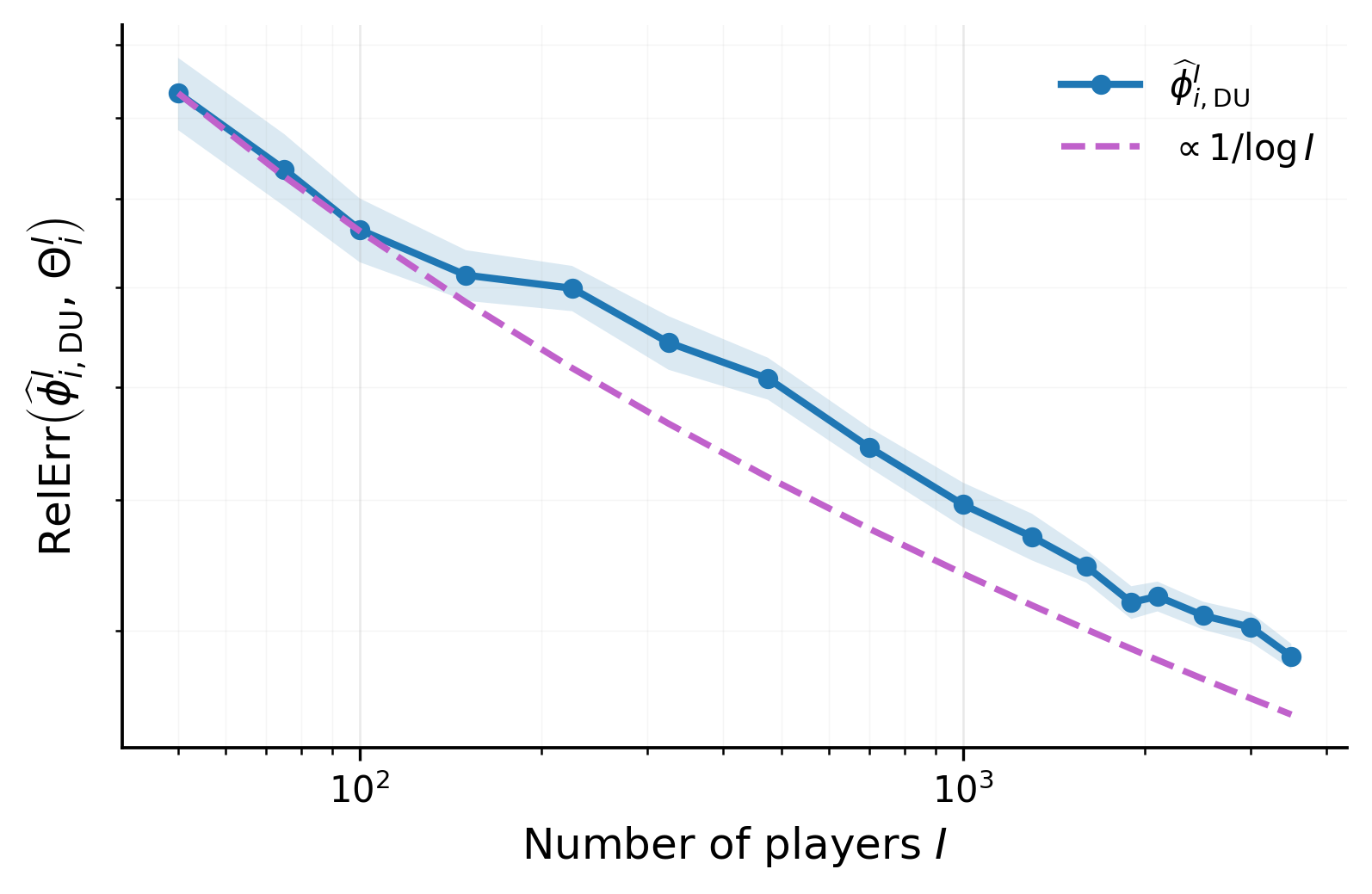}
    }
    \hfill
    \subfloat[$\widehat{\phi}_{i,\text{strat}}^I$]{%
        \includegraphics[width=0.325\textwidth]{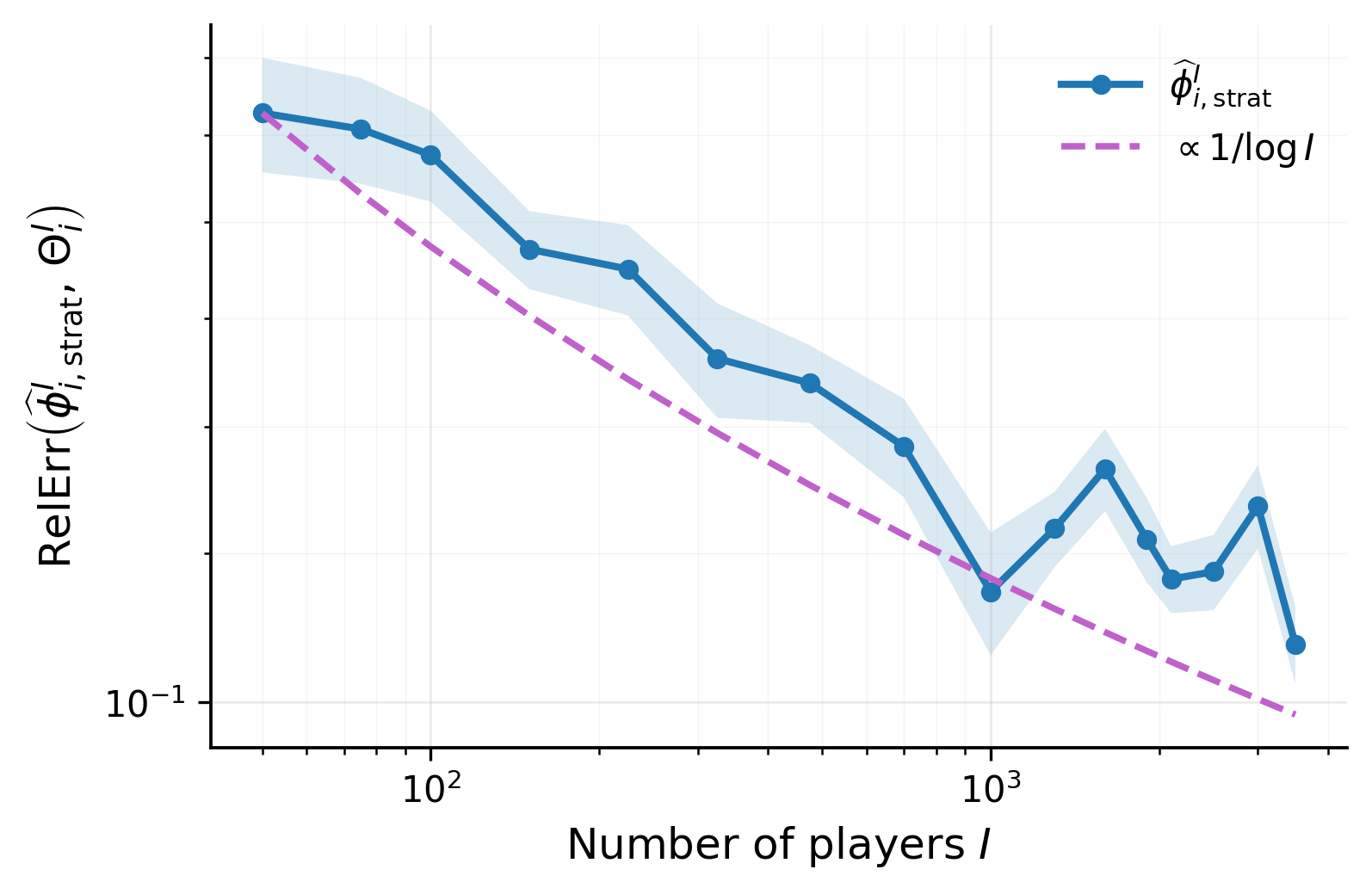}
    }
    \caption{Relative error of each estimator averaged over $30$ independent repetitions with standard errors. The dashed curve is a reference curve included as a visual guide to assess the relative error decay.}
    \label{fig:benchmark}
    \vskip -0.2in
\end{figure*}
\paragraph{Discussion.} All three estimators become closer to the leading term as the number of players increases. This supports the use of $\Theta_i^I$ as a large-scale reference for Shapley value approximations in this controlled setting. For $\widehat{\phi}_{i,\text{DU}}^I$ and $\widehat{\phi}_{i,\text{strat}}^I$, the observed behavior is consistent with the theory: these estimators are leading-order estimators, their relative error is therefore expected to decrease at the scale $1/\log I$. The curve of $\widehat{\phi}_{i,\mathrm{MC}}^I$ decreases faster than the displayed $1/\log I$ reference. This is not a contradiction: the reference is based on a conservative worst-case variance bound for permutation Monte Carlo. In this synthetic model, the variance of a random marginal contribution decreases with $I$, because most predecessor coalitions in a random permutation are large, and the marginal contribution of the fixed player becomes smaller for large coalitions. Therefore, the effective Monte Carlo error is smaller than the worst-case bound, which explains the faster empirical decay.
\section{Conclusion}
\label{sec:conclusion}
This paper develops an asymptotic analysis of the Shapley value for dataset valuation. We study the value of a single fixed data owner as the number of surrounding data sources grows, and show that under a smooth embedding utility model, the exact Shapley value is governed by an explicit and interpretable leading term $\Theta_i^I$. This result makes the asymptotic behavior of the Shapley value interpretable and clarifies what it means to approximate Shapley values in this regime.  
\paragraph{Limitations.} The scope of these results is tied to the assumptions of the analysis. We assume independent surrounding data sources and utilities that are smooth functionals of empirical RKHS mean embeddings. These assumptions facilitate the analysis, but they do not cover all the dataset valuation settings encountered in practice, especially machine learning utilities defined by complex training procedures. We view this work as opening the way to a broader asymptotic theory of Shapley-based dataset valuation, with extensions to richer utilities as natural directions for future work.
\begin{ack}
This work was partially supported by the French National Research Agency (ANR) through grants ANR-20-CE23-0007 and ANR-23-CE23-0002 and through the PEPR IA FOUNDRY project (ANR-23-PEIA-0003).
\end{ack}

\bibliographystyle{plain}
\bibliography{bibliography}

\newpage
\appendix
\paragraph{Outline.}
The appendices are organized as follows.
\begin{itemize}
    \item[--] Appendix~\ref{sec:extended-related-work} provides an extended discussion of related work, including Shapley-based data valuation, efficient Shapley estimation, and alternative valuation methods.
    \item[--] Appendix~\ref{sec:missing-defs} recalls the definitions of the estimators used in Section~\ref{sec:estimators}.
    \item[--] Appendix~\ref{sec:technical-lemmas} collects the technical lemmas used in the proofs.
    \item[--] Appendix~\ref{sec:additional-proofs} contains the proofs of the main theoretical results stated in Sections~\ref{sec:asymptotic-characterization} and~\ref{sec:estimators}.
\end{itemize}

\paragraph{Notation.}
We summarize the main notation used throughout the paper in Table~\ref{tab:notation}.

\begin{table}[ht]
\caption{Notation for the asymptotic analysis of the Shapley value.}
\label{tab:notation}
\begin{center}
\begin{tabular}{c c l}
\toprule
Notation & Type & Description \\
\midrule
$\mathcal Z$ & measurable space & Data domain \\
$\mathcal M(\mathcal Z)$ & set & Set of finite multisets over $\mathcal Z$ \\
$j$ & index & Player / dataset owner index \\
$i$ & index & Fixed player whose value is studied \\
$I$ & $\mathbb N$ & Total number of players in the game \\
$[I]$ & set & Player set $\{1,\dots,I\}$ \\
$D_j$ & $\mathcal M(\mathcal Z)$ & Dataset owned by player $j$ \\
$n_j$ & $\mathbb N$ & Size of dataset $D_j$ \\
$D_S$ & $\mathcal M(\mathcal Z)$ & Pooled dataset of coalition $S$ \\
$n_S$ & $\mathbb N$ & Total number of datapoints in $D_S$, $n_S=\sum_{j\in S}n_j$ \\
$\bar n$ & $\mathbb R_+$ & Mean dataset size of a surrounding player, $\bar n=\mathbb E[n_j]$ \\
$v$ & $\mathcal M(\mathcal Z)\to\mathbb R$ & Utility of a pooled dataset \\
$u_I$ & $2^{[I]}\to\mathbb R$ & Cooperative game utility, $u_I(S)=v(D_S)$ \\
$\phi_i^I$ & $\mathbb R$ & Shapley value of player $i$ in the game with $I$ players \\
$S_k$ & random subset & Uniform subset of $[I]\setminus\{i\}$ of cardinality $k$ \\
\midrule
$\mathcal H$ & Hilbert space & RKHS associated with the kernel \\
$\varphi$ & $\mathcal Z\to\mathcal H$ & Feature map \\
$\kappa$ & $\mathbb R_+$ & Uniform bound on feature embeddings, $\|\varphi(z)\|\le \kappa$ \\
$F$ & $\mathcal H\to\mathbb R$ & Smooth functional defining the utility \\
$G$ & $\mathbb R_+$ & Uniform bound on $\|\nabla F\|$ \\
$M$ & $\mathbb R_+$ & Lipschitz constant of $\nabla F$ \\
$\mu(\widehat P_D)$ & $\mathcal H$ & Empirical mean embedding of a finite dataset $D$ \\
$\mu_j$ & $\mathcal H$ & Empirical embedding of player $j$, $\mu_j=\mu(\widehat P_{D_j})$ \\
$\mu_i$ & $\mathcal H$ & Empirical embedding of the fixed player $i$ \\
$X_j$ & $\mathcal H$ & Feature sum of player $j$, $X_j=\sum_{z\in D_j}\varphi(z)$ \\
$\mu_S$ & $\mathcal H$ & Empirical embedding of coalition $S$, $\mu_S=\mu(\widehat P_{D_S})$ \\
$\mu^\star$ & $\mathcal H$ & Population reference embedding, $\mu^\star=\mathbb E[X_j]/\bar n$ \\
$g^\star$ & $\mathcal H$ & Population gradient, $g^\star=\nabla F(\mu^\star)$ \\
$c_i$ & $\mathbb R$ & First-order signal, $c_i=\langle g^\star,\mu_i-\mu^\star\rangle$ \\
\midrule
$\Delta_i(S)$ & $\mathbb R$ & Marginal contribution $v(D_S\biguplus D_i)-v(D_S)$ \\
$R_S$ & $\mathbb R$ & Taylor remainder in the marginal contribution expansion \\
$\varepsilon_S$ & $\mathbb R$ & Residual after subtracting the population first-order term \\
$H_{I-1}$ & $\mathbb R_+$ & Harmonic number, $H_{I-1}=\sum_{k=1}^{I-1}1/k$ \\
$\Theta_i^I$ & $\mathbb R$ & Leading term, $\Theta_i^I=\frac{n_i c_i}{\bar n}\frac{H_{I-1}}{I}$ \\
\bottomrule
\end{tabular}
\end{center}
\end{table}

\newpage
\section{Extended Related Work}
\label{sec:extended-related-work}

\subsection{Shapley-based data(set) valuation}
\label{subsec:rw-shapley-dataset-valuation}

The use of the Shapley value \cite{shapley1953} for data valuation was introduced to assign a contribution score to individual training examples in supervised learning \citep{ghorbani2019,jia2019b}. It has been widely studied as a data valuation method because it uniquely satisfies four key axioms:
\begin{enumerate}
    \item \emph{Null player.} If a player has zero marginal contribution to every coalition, i.e., $u(S\cup\{i\}) = u(S) \forall S\subseteq \mathcal I\setminus\{i\},$ then its value is zero: $\phi_i(u)=0$.
    \item \emph{Symmetry.} If two players have identical marginal contributions to every coalition, i.e., $u(S\cup\{i\}) = u(S\cup\{j\}) \forall S\subseteq \mathcal I\setminus\{i,j\},$ then they receive the same value: $\phi_i(u)=\phi_j(u)$.
    \item \emph{Efficiency.} The total value is distributed among all players: $\sum_{i\in\mathcal I}\phi_i(u)=u(\mathcal I)-u(\emptyset)$.
    \item \emph{Additivity.} For any two games $u$ and $w$ on the same player set, $\phi_i(u+w)=\phi_i(u)+\phi_i(w),\forall i\in\mathcal I$.
\end{enumerate}
In the game-theoretic formulation of data valuation, each data point is treated as a player in a cooperative game, and the utility of a coalition is typically defined as the performance of a model trained on the corresponding subset. This point-level setting has motivated a large literature on data valuation for noisy-label detection, data selection, data debugging, and data markets \citep{sim2022,jiang2023}, using the Shapley value as the underlying value notion. Dataset valuation extends this perspective from individual datapoints to data owners: each player contributes a finite dataset, and the goal is to value the incremental gain obtained by pooling this dataset with those of other owners \citep{agarwal2019,sim2020,tay2022,garrido2024}. The present paper is in this dataset-level setting.

Several works have proposed alternative Shapley-type or semivalue-based notions for machine learning. Beta-Shapley relaxes the efficiency axiom and uses a family of cardinality weights to reduce variance and improve robustness in data valuation tasks \citep{kwon2021beta}. Data Banzhaf uses the Banzhaf value and shows improved robustness to stochastic perturbations in model performance scores \citep{wang2023data}. More broadly, semivalue-based frameworks study valuation rules that preserve linearity, symmetry, and the dummy-player property while varying the coalition-size weighting scheme \citep{kwon2021beta,lin2022measuring,wang2023data}.
These works study alternative valuation rules or alternative finite-game weightings. In contrast, we keep the Shapley value itself fixed and study how the exact Shapley value of a given data owner evolves as the number of surrounding owners increases.

A recent line of work orthogonal to our paper also studies the robustness and sensitivity of semivalue-based data values, including Shapley values, to modeling choices \cite{tamine2026, diehl2025}.

A related line of work treats the Shapley value as a statistical object. Distributional Shapley replaces the fixed finite dataset with a population-level quantity and studies the expected contribution of samples drawn from an underlying distribution \citep{ghorbani2020,kwon2021efficient}. More recent work studies uncertainty quantification for Data Shapley by relating it to infinite-order U-statistics and deriving confidence intervals for valuation under changes in the data distribution \citep{wu2024uncertainty}. These works are close in spirit because they go beyond a single deterministic finite dataset. However, their object differs from ours: they either define a new population-level valuation or study the sampling distribution of a valuation estimator. Our object is the sequence of exact finite-game Shapley values $(\phi_i^I)_{I\ge1}$ obtained by fixing one data owner and adding surrounding owners sampled from a population.

Finally, classical cooperative game theory has studied values in nonatomic and large games \citep{aumann1974,neyman1981,neyman1988}. These works provide the mathematical foundations for value concepts beyond finite games, but their setting is different from dataset valuation: players are infinitesimal or embedded in a measure-theoretic game, whereas we keep the valued data owner finite and non-negligible. The main difficulty in our setting is therefore not only to define a limiting game, but to understand how the empirical distribution of large data coalitions affects the finite-game Shapley value.

\subsection{Efficient estimation of the Shapley value}
\label{subsec:rw-efficient-estimation}

The exact Shapley value is computationally intractable in general because it averages marginal contributions over all coalitions. A large literature therefore focuses on estimating the Shapley value for a fixed finite game. Permutation Monte Carlo estimators approximate the Shapley value by averaging marginal contributions along random permutations \citep{castro2009,maleki2013,ghorbani2019,jia2019b}. Variance-reduction and sampling-allocation strategies have also been studied to improve the efficiency of permutation-based estimation \citep{maleki2013,mitchell2022}. Group-testing-based estimators provide another approach by querying randomly sampled coalitions and recovering the Shapley vector through pairwise difference estimates \citep{jia2019b}.

Other works exploit structure in particular learning problems. For nearest-neighbor classifiers, exact or nearly exact Shapley values can be computed much more efficiently than by generic coalition enumeration \citep{jia2019a,wang2024a}. More recent methods avoid repeated model retraining by using approximations along a single training trajectory or by learning surrogate models for data values \citep{wang2024b,sun2026,tamine2025}. DU-Shapley is particularly related to our work because it is a dataset-level proxy designed to exploit the structure of dataset valuation rather than treating the utility as an arbitrary cooperative game \citep{garrido2024}. It also considers a regime with many data owners, but its goal is to propose and justify a proxy estimator.

Instead of starting from an estimator and proving a finite-game error bound, we first characterize the asymptotic scale and leading-order structure of the exact Shapley value itself. This then provides a principled way to analyze estimators in the growing-player regime: estimators targeting the exact Shapley value must have an absolute error of $o(\log I/I)$, while leading-order estimators can be studied by their distance to the leading term $\Theta_i^I$.

\subsection{Alternative valuation methods}
\label{subsec:rw-alternative-valuation}

Beyond Shapley-based methods, several approaches assign value to data by approximating influence on training dynamics or model predictions. Influence functions approximate the effect of upweighting or removing a training point through derivatives of the learned parameters \citep{koh2017influence}. TracIn and related methods track the effect of training examples along optimization trajectories by accumulating gradient-based quantities during training \citep{pruthi2020tracin,paul2021data}. Representer-point methods decompose predictions through training examples in the learned representation space \citep{yeh2018representer}. Data models instead learn a predictive model of how training subsets affect model outputs, providing a different way to model data influence through subset-response behavior \citep{ilyas2022datamodels}.

There are also valuation methods designed to reduce or remove the dependence on repeated model training. Robust volume assigns values using geometric diversity criteria and is designed to be validation-free and replication-robust \citep{xu2021validation}. Data-OOB uses out-of-bag estimates from random forests as an efficient proxy for data value \citep{kwon2023dataoob}. LAVA formulates data valuation using distances between training and validation distributions and does not require specifying the downstream learning algorithm in advance \citep{just2023lava}. Other training-free or initialization-based approaches use neural tangent kernel or generalization-bound arguments to estimate the value of data before full training \citep{wu2022davinz}.

These methods address important practical limitations of Shapley-based valuation, especially computational cost and dependence on model retraining. However, they generally define different valuation rules or proxies. Our goal is different: we study the Shapley value itself in a dataset valuation game and characterize its asymptotic behavior. The leading term obtained in our analysis can therefore be used both as an interpretability tool for the Shapley value and as a reference for evaluating Shapley approximations in regimes where exact computation is infeasible.
\newpage
\section{Missing definitions}
\label{sec:missing-defs}

This appendix collects the estimator definitions used in Section~\ref{sec:estimators}. We state them separately from the main text to keep the presentation focused on the asymptotic consequences of Theorem~\ref{thm:leading-term}. Throughout this appendix, $I\ge 2$ is fixed, the player of interest is $i\in[I]$, and
\begin{align*}
\Delta_i(S) := v(D_S\biguplus D_i)-v(D_S), \qquad S\subseteq[I]\setminus\{i\},
\end{align*}
denotes the marginal contribution of player $i$ to a coalition $S$.

\subsection{Permutation Monte Carlo Shapley}
\label{app:def-mc-shapley}

The permutation representation of the Shapley value writes
\begin{align*}
    \phi_i^I
    =
    \mathbb E_{\pi}
    \left[
        \Delta_i(P_i^\pi)
    \right],
\end{align*}
where $\pi$ is a uniformly random permutation of $[I]$, and $P_i^\pi\subseteq[I]\setminus\{i\}$ is the set of players appearing before $i$ in the order $\pi$. The permutation Monte Carlo Shapley estimator approximates this expectation by averaging over independent random permutations.

Let $\pi_1,\dots,\pi_{m_I}$ be independent uniformly random permutations of $[I]$. For each $b\in\{1,\dots,m_I\}$, let $P_i^{\pi_b}$ be the set of
predecessors of $i$ in $\pi_b$. The permutation Monte Carlo estimator \citep{maleki2013, jia2019b, ghorbani2019} is
\begin{align*}
    \widehat\phi_{i,\mathrm{MC}}^I
    :=
    \frac{1}{m_I}
    \sum_{b=1}^{m_I}
    \Delta_i(P_i^{\pi_b})
    =
    \frac{1}{m_I}
    \sum_{b=1}^{m_I}
    \left[
        v(D_{P_i^{\pi_b}}\biguplus D_i)
        -
        v(D_{P_i^{\pi_b}})
    \right].
\end{align*}
Conditionally on the realized $I$-player game, this estimator is unbiased for $\phi_i^I$.

\subsection{Group-testing Shapley}
\label{app:def-group-testing-shapley}

We recall the group-testing estimator of \cite{jia2019b}. Since adding a constant to the utility does not change marginal contributions or Shapley values, define the normalized game
\begin{align*}
    U_I(S):=v(D_S)-v(\emptyset),
    \qquad
    S\subseteq[I].
\end{align*}
Thus $U_I(\emptyset)=0$, as in the standard group-testing formulation.

Let
\begin{align*}
    Z_I
    :=
    2\sum_{k=1}^{I-1}\frac{1}{k},
\end{align*}
and define a probability distribution on coalition sizes
$\{1,\dots,I-1\}$ by
\begin{align*}
    q_I(k)
    :=
    \frac{1}{Z_I}
    \left(
        \frac{1}{k}
        +
        \frac{1}{I-k}
    \right).
\end{align*}
For each query $t\in\{1,\dots,T_I\}$, draw $K_t\sim q_I$, then draw a coalition $A_t\subseteq[I]$ uniformly among subsets of cardinality $K_t$, and evaluate $U_I(A_t)$. For each player $j\in[I]$, define
\begin{align*}
    \beta_{t,j}:=\mathbf{1}\{j\in A_t\}.
\end{align*}
For each pair $p,q\in[I]$, estimate the Shapley difference
$\phi_p^I-\phi_q^I$ by
\begin{align*}
    \widehat\Delta_{p,q}
    :=
    \frac{Z_I}{T_I}
    \sum_{t=1}^{T_I}
    U_I(A_t)
    \left(
        \beta_{t,p}-\beta_{t,q}
    \right).
\end{align*}
The estimated Shapley vector
\begin{align*}
    \widehat s^I
    =
    (\widehat s_1^I,\dots,\widehat s_I^I)
\end{align*}
is then obtained by solving the feasibility problem
\begin{align*}
    \sum_{j=1}^I \widehat s_j^I
    =
    U_I([I]),
\end{align*}
together with the pairwise constraints
\begin{align*}
    \left|
        (\widehat s_p^I-\widehat s_q^I)
        -
        \widehat\Delta_{p,q}
    \right|
    \le
    \frac{\varepsilon_I}{2\sqrt I},
    \qquad
    \forall p,q\in[I].
\end{align*}
The group-testing Shapley estimate of the fixed player $i$ is
\begin{align*}
    \widehat\phi_{i,\mathrm{GT}}^I
    :=
    \widehat s_i^I.
\end{align*}

\subsection{DU-Shapley}
\label{app:def-du-shapley}

DU-Shapley \citep{garrido2024} replaces coalitions of $k$ players by pseudo-coalitions containing approximately the same number of datapoints as a typical coalition of $k$ surrounding players.

Let
\begin{align*}
    D_{-i}^I
    :=
    D_{[I]\setminus\{i\}}
\end{align*}
be the pooled multiset of all datapoints from players other than $i$, and let
\begin{align*}
    \widehat n_{-i}^I
    :=
    \frac{1}{I-1}
    \sum_{j\in[I]\setminus\{i\}} n_j
\end{align*}
be the empirical average dataset size among the surrounding players. For each $k\in\{0,\dots,I-1\}$, define the target pseudo-coalition size
\begin{align*}
    m_{k,I}
    :=
    \left\lfloor
        k\widehat n_{-i}^I
    \right\rfloor.
\end{align*}
Let $\widetilde D_{k,I}$ be a multiset of $m_{k,I}$ datapoints sampled uniformly without replacement from $D_{-i}^I$. Define
\begin{align*}
    \widetilde\Delta_{k,I}
    :=
    v(\widetilde D_{k,I}\biguplus D_i)
    -
    v(\widetilde D_{k,I}).
\end{align*}
The DU-Shapley estimator is
\begin{align*}
    \widehat\phi_{i,\mathrm{DU}}^I
    :=
    \frac{1}{I}
    \sum_{k=0}^{I-1}
    \widetilde\Delta_{k,I}.
\end{align*}
This estimator preserves the Shapley averaging over coalition sizes but replaces player-level coalitions with data point-level pseudo-coalitions sampled from the surrounding population.

\subsection{Stratified Monte Carlo Shapley}
\label{app:def-stratified-mc-shapley}

The stratified Monte Carlo Shapley estimator \cite{maleki2013} samples coalitions separately at each cardinality. For every $k\in\{0,\dots,I-1\}$, let
$m_{k,I}\ge 1$ be the number of samples allocated to stratum $k$. For each $b\in\{1,\dots,m_{k,I}\}$, draw
\begin{align*}
    S_{k,b}
    \subseteq
    [I]\setminus\{i\}
\end{align*}
uniformly among subsets of cardinality $k$. The stratified estimator is
\begin{align*}
    \widehat\phi_{i,\mathrm{strat}}^I
    :=
    \frac{1}{I}
    \sum_{k=0}^{I-1}
    \frac{1}{m_{k,I}}
    \sum_{b=1}^{m_{k,I}}
    \Delta_i(S_{k,b}).
\end{align*}
Equivalently,
\begin{align*}
    \widehat\phi_{i,\mathrm{strat}}^I
    =
    \frac{1}{I}
    \sum_{k=0}^{I-1}
    \frac{1}{m_{k,I}}
    \sum_{b=1}^{m_{k,I}}
    \left[
        v(D_{S_{k,b}}\biguplus D_i)
        -
        v(D_{S_{k,b}})
    \right].
\end{align*}
Conditionally on the realized $I$-player game, this estimator is unbiased for $\phi_i^I$, because each stratum estimates the corresponding cardinality-wise expectation in the Shapley formula.
\newpage
\section{Technical lemmas}
\label{sec:technical-lemmas}
\begin{lemma}[Embedding update when adding a player]
\label{lem:update}
Let $S$ be a coalition that does not contain $i$, with pooled data point count $n_S$.
Let
\begin{align*}
\mu_S := \mu(\widehat P_{D_S})
\quad\text{and}\quad
\mu_{S\cup\{i\}} := \mu(\widehat P_{D_{S\cup\{i\}}}).
\end{align*}
Then
\begin{align}
\mu_{S\cup\{i\}} = \mu_S + \alpha_S(\mu_i-\mu_S),
\qquad
\alpha_S := \frac{n_i}{n_S+n_i}.\\
\label{eq:update}
\end{align}
\end{lemma}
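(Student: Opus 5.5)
The identity follows by writing both embeddings as sums over datapoints and using that multiset union adds the feature sums. Set $X_S := \sum_{z\in D_S}\varphi(z)$ and $X_i := \sum_{z\in D_i}\varphi(z)$, so that $X_i = n_i\mu_i$.

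For a nonempty coalition $S$, the definition of the empirical mean embedding gives $X_S = n_S\mu_S$. Since $D_{S\cup\{i\}} = D_S\biguplus D_i$, the pooled dataset has $n_S+n_i$ points and feature sum $X_S+X_i$. Hence
\begin{align*}
\mu_{S\cup\{i\}} = \frac{n_S\mu_S + n_i\mu_i}{n_S+n_i} = \Bigl(1-\frac{n_i}{n_S+n_i}\Bigr)\mu_S + \frac{n_i}{n_S+n_i}\mu_i = \mu_S+\alpha_S(\mu_i-\mu_S).
\end{align*}
This is the claim.

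The only step needing care is the empty coalition $S=\emptyset$, where the convention $\mu(\widehat P_\emptyset)=0$ is used and $n_S=0$. In that case $\alpha_\emptyset = 1$, so the right-hand side equals $0 + (\mu_i - 0) = \mu_i$. The left-hand side is $\mu(\widehat P_{D_i}) = \mu_i$, so the formula still holds. The convention $\mu(\widehat P_\emptyset)=0$ is consistent with the identity $X_\emptyset = 0 = n_\emptyset\,\mu_\emptyset$, which is all the argument above uses.

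There is no real obstacle in this proof. It is a direct computation, and the only point to check is that the empty-coalition convention is compatible with the formula.
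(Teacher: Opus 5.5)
Your proof is correct and matches the paper's argument: write $\mu_{S\cup\{i\}}$ as $(n_S\mu_S+n_i\mu_i)/(n_S+n_i)$ and rearrange it into $\mu_S+\alpha_S(\mu_i-\mu_S)$. Your explicit check of $S=\emptyset$ (using $n_\emptyset\mu_\emptyset=0$ and $\alpha_\emptyset=1$) is a small improvement, since the paper's proof writes $\mu_S$ with a $1/n_S$ factor and so tacitly assumes $S$ is nonempty.
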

\begin{proof}
By definition,
\begin{align*}
\mu_S=\frac{1}{n_S}\sum_{j\in S}\sum_{r=1}^{n_j}\phi(z_{j,r}),
\qquad
\mu_{S\cup\{i\}}=\frac{1}{n_S+n_i}\left(\sum_{j\in S}\sum_{r=1}^{n_j}\phi(z_{j,r})+\sum_{r=1}^{n_i}\phi(z_{i,r})\right).
\end{align*}
Rewrite the numerator using $n_S \mu_S$ and $n_i\mu_i$:
\begin{align*}
\mu_{S\cup\{i\}} = \frac{1}{n_S+n_i}\bigl(n_S \mu_S + n_i\mu_i\bigr)
= \frac{n_S}{n_S+n_i}\mu_S+\frac{n_i}{n_S+n_i}\mu_i.
\end{align*}
Subtract and add $\mu_S$:
\begin{align*}
\mu_{S\cup\{i\}}-\mu_S
=
\left(\frac{n_S}{n_S+n_i}-1\right)\mu_S+\frac{n_i}{n_S+n_i}\mu_i
=
-\frac{n_i}{n_S+n_i}\mu_S+\frac{n_i}{n_S+n_i}\mu_i
=
\alpha_S(\mu_i-\mu_S).
\end{align*}
This is exactly \eqref{eq:update}.
\end{proof}

\begin{lemma}[First-order Taylor remainder bound in a Hilbert space with Lipschitz gradient]
\label{lem:taylor}
Under \cref{ass:smooth-utility}, for all $x,h\in\cH$,
\begin{align*}
\bigl|F(x+h)-F(x)-\ip{\nabla F(x)}{h}\bigr|
\le \frac{M}{2}\|h\|^2.
\end{align*}
\end{lemma}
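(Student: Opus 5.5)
The plan is to reduce the claim to a one-dimensional statement along the segment from $x$ to $x+h$ and then use the Lipschitz bound on $\nabla F$ from Assumption~\ref{ass:smooth-utility}. Fix $x,h\in\cH$ and define $g(t):=F(x+th)$ for $t\in[0,1]$.

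First I will establish that $g$ is differentiable with
\begin{align*}
g'(t)=\ip{\nabla F(x+th)}{h}.
\end{align*}
This follows from Fréchet differentiability of $F$ and the chain rule along the affine path $t\mapsto x+th$. Next I will check that $g'$ is continuous, in fact Lipschitz. By Cauchy--Schwarz and the Lipschitz bound,
\begin{align*}
|g'(t)-g'(s)|\le \|\nabla F(x+th)-\nabla F(x+sh)\|\,\|h\|\le M|t-s|\,\|h\|^2 .
\end{align*}
Hence $g$ is $C^1$ on $[0,1]$, and the fundamental theorem of calculus applies:
\begin{align*}
F(x+h)-F(x)=\int_0^1 \ip{\nabla F(x+th)}{h}\,dt .
\end{align*}

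Subtracting $\ip{\nabla F(x)}{h}=\int_0^1\ip{\nabla F(x)}{h}\,dt$ gives
\begin{align*}
F(x+h)-F(x)-\ip{\nabla F(x)}{h}=\int_0^1 \ip{\nabla F(x+th)-\nabla F(x)}{h}\,dt .
\end{align*}
I then bound the integrand by Cauchy--Schwarz and the Lipschitz assumption, which gives $\|\nabla F(x+th)-\nabla F(x)\|\,\|h\|\le Mt\|h\|^2$. Since $\int_0^1 t\,dt=1/2$, integrating yields the bound $\frac{M}{2}\|h\|^2$.

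The only delicate point is justifying the integral identity. Fréchet differentiability alone does not give continuity of the derivative, but the Lipschitz gradient does, so $g\in C^1$ and the identity holds. If one prefers to avoid integration, an alternative route is the mean value theorem applied to $\psi(t):=g(t)-t\,g'(0)-\frac{M}{2}t^2\|h\|^2$: since $\psi'(t)=g'(t)-g'(0)-Mt\|h\|^2\le0$, $\psi$ is nonincreasing on $[0,1]$. Applying the same argument to $-\psi$ with the sign of the quadratic term reversed gives the two-sided bound.
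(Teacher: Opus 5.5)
Your proposal is correct and follows essentially the same route as the paper: restrict $F$ to the segment via $t\mapsto F(x+th)$, write the remainder as $\int_0^1\langle\nabla F(x+th)-\nabla F(x),h\rangle\,dt$, and bound the integrand by Cauchy--Schwarz and the Lipschitz gradient to obtain $\frac{M}{2}\|h\|^2$. Your remark that the Lipschitz gradient makes $g$ continuously differentiable, which justifies the fundamental theorem of calculus, is a sound detail the paper leaves implicit. Your alternative monotonicity argument is likewise valid.
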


\begin{proof}
Define $\psi(t):=F(x+th)$ for $t\in[0,1]$.
By Fréchet differentiability, $\psi$ is differentiable with
\begin{align*}
\psi'(t)=\ip{\nabla F(x+th)}{h}.
\end{align*}
By the fundamental theorem of calculus,
\begin{align*}
F(x+h)-F(x) - \ip{\nabla F(x)}{h}
&=
\int_0^1 \bigl(\psi'(t)-\psi'(0)\bigr)\,dt
=
\int_0^1 \ip{\nabla F(x+th)-\nabla F(x)}{h}\,dt.
\end{align*}
Take absolute values and apply Cauchy-Schwarz:
\begin{align*}
\bigl|F(x+h)-F(x)-\ip{\nabla F(x)}{h}\bigr|
\le \int_0^1 \|\nabla F(x+th)-\nabla F(x)\|\ \|h\|\,dt.
\end{align*}
Using the $M$-Lipschitz property of $\nabla F$,
\begin{align*}
\|\nabla F(x+th)-\nabla F(x)\|\le M\|th\| = Mt\|h\|.
\end{align*}
Therefore
\begin{align*}
\bigl|F(x+h)-F(x)-\ip{\nabla F(x)}{h}\bigr|
\le \int_0^1 Mt\|h\|^2\,dt
= \frac{M}{2}\|h\|^2.
\end{align*}
\end{proof}
\noindent For a set $S$ of size $k$, define $\mu_S=\mu(\widehat P_{D_S})$ as above.
\begin{lemma}[First-order expansion of a marginal contribution]
\label{lem:first-order-marginal-expansion}
Under Assumptions~\ref{ass:bounded-feature} and~\ref{ass:smooth-utility}, for
any nonempty coalition $S\subseteq [I]\setminus\{i\}$,
\begin{align*}
    \Delta_i(S)
    =
    \alpha_S
    \left\langle
        \nabla F(\mu_S),
        \mu_i-\mu_S
    \right\rangle
    +
    R_S,
    \qquad
    \alpha_S:=\frac{n_i}{n_S+n_i},
\end{align*}
where
\begin{align*}
    |R_S|
    \le
    2M\kappa^2\alpha_S^2.
\end{align*}
\end{lemma}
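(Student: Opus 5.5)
The plan is to combine Lemma~\ref{lem:update} with Lemma~\ref{lem:taylor}. Since $S$ is nonempty, $D_S$ and $D_S\biguplus D_i$ are nonempty, so Assumption~\ref{ass:smooth-utility} gives
\begin{align*}
v(D_S)=F(\mu_S),
\qquad
v(D_S\biguplus D_i)=F(\mu_{S\cup\{i\}}).
\end{align*}
By Lemma~\ref{lem:update}, $\mu_{S\cup\{i\}}=\mu_S+h$ with $h:=\alpha_S(\mu_i-\mu_S)$. Therefore
\begin{align*}
\Delta_i(S)=F(\mu_S+h)-F(\mu_S).
\end{align*}

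Define the remainder as
\begin{align*}
R_S:=F(\mu_S+h)-F(\mu_S)-\langle\nabla F(\mu_S),h\rangle.
\end{align*}
By linearity of the inner product, $\langle\nabla F(\mu_S),h\rangle=\alpha_S\langle\nabla F(\mu_S),\mu_i-\mu_S\rangle$. This gives the claimed decomposition exactly.

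Applying Lemma~\ref{lem:taylor} with $x=\mu_S$ yields
\begin{align*}
|R_S|\le\frac{M}{2}\|h\|^2=\frac{M}{2}\alpha_S^2\|\mu_i-\mu_S\|^2.
\end{align*}

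It remains to bound $\|\mu_i-\mu_S\|$. Both $\mu_i$ and $\mu_S$ are averages of feature vectors $\varphi(z)$, each of norm at most $\kappa$ by Assumption~\ref{ass:bounded-feature}. By the triangle inequality $\|\mu_i\|\le\kappa$ and $\|\mu_S\|\le\kappa$, so $\|\mu_i-\mu_S\|\le 2\kappa$. Substituting, $|R_S|\le \frac{M}{2}\alpha_S^2(2\kappa)^2=2M\kappa^2\alpha_S^2$, which is the stated bound.

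There is no real obstacle here. The only point needing care is that $S\neq\emptyset$: this avoids the convention $\mu(\widehat P_\emptyset)=0$, under which $v(D_\emptyset)$ need not equal $F$ of an average of feature vectors. This is exactly why the lemma is stated for nonempty coalitions only.
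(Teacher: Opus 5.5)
Your proof is correct and matches the paper's argument step for step: you rewrite $\Delta_i(S)$ via Lemma~\ref{lem:update}, apply Lemma~\ref{lem:taylor} with $x=\mu_S$ and $h=\alpha_S(\mu_i-\mu_S)$, and bound $\|\mu_i-\mu_S\|\le 2\kappa$. One minor correction to your closing remark: nonemptiness of $S$ is not actually needed, because under the convention $\mu(\widehat P_\emptyset)=0$ one has $\alpha_\emptyset=1$, the update identity still reads $\mu_i=0+1\cdot(\mu_i-0)$, and $\|\mu_\emptyset\|=0\le\kappa$, so the same bound also holds for $S=\emptyset$.
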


\begin{proof}
By Lemma~\ref{lem:update},
\begin{align*}
    \mu_{S\cup\{i\}}=\mu_S+\alpha_S(\mu_i-\mu_S).
\end{align*}
Since $v(D)=F(\mu(\widehat P_D))$,
\begin{align*}
    \Delta_i(S)
    =
    F\!\left(\mu_S+\alpha_S(\mu_i-\mu_S)\right)-F(\mu_S).
\end{align*}
Apply Lemma~\ref{lem:taylor} with
\begin{align*}
    x=\mu_S,
    \qquad
    h=\alpha_S(\mu_i-\mu_S).
\end{align*}
This gives the stated expansion with
\begin{align*}
    |R_S|
    \le
    \frac{M}{2}\alpha_S^2\|\mu_i-\mu_S\|^2.
\end{align*}
Because $\|\mu_i\|\le\kappa$ and $\|\mu_S\|\le\kappa$,
\begin{align*}
    \|\mu_i-\mu_S\|\le 2\kappa.
\end{align*}
Hence
\begin{align*}
    |R_S|
    \le
    2M\kappa^2\alpha_S^2.
\end{align*}
\end{proof}
\begin{lemma}[Concentration of coalition embeddings]
\label{lem:coalition-embedding-concentration}
Under Assumptions~\ref{ass:population} and~\ref{ass:bounded-feature}, for any
fixed $k\ge 1$ and any subset $S\subseteq \mathbb N\setminus\{i\}$ of size
$k$ generated from the surrounding-player population,
\begin{align*}
    \mathbb E\|\mu_S-\mu^\star\|
    \le
    \frac{2n_{\max}\kappa}{\sqrt{k}}.
\end{align*}
\end{lemma}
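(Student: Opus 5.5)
We bound $\mathbb E\|\mu_S-\mu^\star\|$ for a coalition $S$ of $k$ surrounding players by writing $\mu_S$ as a ratio of empirical averages. Set
\[
\bar X_S:=\frac{1}{k}\sum_{j\in S}X_j,\qquad \bar N_S:=\frac{n_S}{k}=\frac1k\sum_{j\in S}N_j ,
\]
so that $\mu_S=\bar X_S/\bar N_S$ and $\mu^\star=\mathbb E[X_j]/\bar n$. Under Assumption~\ref{ass:population} the pairs $(X_j,N_j)_{j\in S}$ are i.i.d. We also have $\|X_j\|\le N_j\kappa\le n_{\max}\kappa$ by Assumption~\ref{ass:bounded-feature}, and $\bar N_S\ge 1$.

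The first step is an algebraic decomposition that avoids dividing by a random quantity that could be small:
\[
\mu_S-\mu^\star=\frac{\bar X_S-\mathbb E X_j}{\bar N_S}+\mathbb E X_j\Bigl(\frac{1}{\bar N_S}-\frac{1}{\bar n}\Bigr)
=\frac{(\bar X_S-\mathbb E X_j)-\mu^\star(\bar N_S-\bar n)}{\bar N_S}.
\]
Since $\bar N_S\ge1$, this gives
\[
\|\mu_S-\mu^\star\|\le \|\bar X_S-\mathbb E X_j\|+\|\mu^\star\|\,|\bar N_S-\bar n|.
\]
A cleaner route is to combine the terms as $\frac1k\sum_{j\in S}Y_j$ with $Y_j:=X_j-N_j\mu^\star$. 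These $Y_j$ are i.i.d. Hilbert-valued with $\mathbb E Y_j=\mathbb E X_j-\bar n\mu^\star=0$. Hence
\[
\|\mu_S-\mu^\star\|\le\Bigl\|\tfrac1k\textstyle\sum_{j\in S}Y_j\Bigr\|.
\]

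The second step uses Jensen and orthogonality of independent centered Hilbert-valued variables:
\[
\mathbb E\Bigl\|\tfrac1k\textstyle\sum_{j\in S} Y_j\Bigr\|\le \Bigl(\mathbb E\Bigl\|\tfrac1k\textstyle\sum_{j\in S} Y_j\Bigr\|^2\Bigr)^{1/2}=\Bigl(\tfrac1k\,\mathbb E\|Y_1\|^2\Bigr)^{1/2}.
\]
The cross terms vanish because $\mathbb E\langle Y_j,Y_l\rangle=\langle\mathbb E Y_j,\mathbb E Y_l\rangle=0$ for $j\ne l$. Bochner integrability holds since everything is bounded.

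It remains to bound $\|Y_1\|$. We have $\|\mu^\star\|\le\kappa$, because $\mu^\star=\mathbb E[X_j]/\mathbb E[N_j]$ and $\|\mathbb E X_j\|\le\mathbb E\|X_j\|\le\kappa\,\mathbb E N_j$. Therefore
\[
\|Y_1\|\le N_1\kappa+N_1\kappa\le 2n_{\max}\kappa,
\]
which yields exactly $\mathbb E\|\mu_S-\mu^\star\|\le 2n_{\max}\kappa/\sqrt k$.

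The only delicate point is the random denominator in $\mu_S$. The lower bound $\bar N_S\ge 1$, which follows from $N_j\ge1$, is what allows us to discard it. The rest of the argument is routine.
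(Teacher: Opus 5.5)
Your proof is correct and is essentially the paper's argument. You center with $Y_j = X_j - N_j\mu^\star$, drop the random denominator using $n_S \ge k$, and then apply Jensen/Cauchy--Schwarz with orthogonality of the independent centered $Y_j$ and the bound $\|Y_j\| \le 2n_{\max}\kappa$ (from $\|\mu^\star\| \le \kappa$).
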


\begin{proof}
Let $S$ be a coalition of $k$ surrounding players. Define
\begin{align*}
    Y_j:=X_j-n_j\mu^\star.
\end{align*}
Then
\begin{align*}
    \mu_S-\mu^\star
    =
    \frac{1}{n_S}\sum_{j\in S}Y_j.
\end{align*}
Since $n_j\ge 1$, we have $n_S\ge k$, and therefore
\begin{align*}
    \|\mu_S-\mu^\star\|
    \le
    \frac{1}{k}
    \left\|\sum_{j\in S}Y_j\right\|.
\end{align*}
Taking expectations and applying Cauchy--Schwarz,
\begin{align*}
    \mathbb E\|\mu_S-\mu^\star\|
    \le
    \frac{1}{k}
    \left(
        \mathbb E
        \left\|\sum_{j\in S}Y_j\right\|^2
    \right)^{1/2}.
\end{align*}
By definition,
\begin{align*}
    \mu^\star=\frac{\mathbb E[X_j]}{\bar n},
    \qquad
    \bar n=\mathbb E[n_j],
\end{align*}
so
\begin{align*}
    \mathbb E[Y_j]
    =
    \mathbb E[X_j]-\mathbb E[n_j]\mu^\star
    =
    0.
\end{align*}
The variables $Y_j$ are i.i.d. and centered. Hence the cross terms vanish and
\begin{align*}
    \mathbb E
    \left\|
        \sum_{j\in S}Y_j
    \right\|^2
    =
    k\,\mathbb E\|Y_j\|^2.
\end{align*}
Moreover,
\begin{align*}
    \|X_j\|\le n_{\max}\kappa
    \qquad\text{and}\qquad
    \|\mu^\star\|
    \le
    \frac{\mathbb E\|X_j\|}{\bar n}
    \le
    \frac{\mathbb E[n_j]\kappa}{\bar n}
    =
    \kappa.
\end{align*}
Therefore,
\begin{align*}
    \|Y_j\|
    \le
    \|X_j\|+n_j\|\mu^\star\|
    \le
    2n_{\max}\kappa.
\end{align*}
Combining the previous displays gives
\begin{align*}
    \mathbb E\|\mu_S-\mu^\star\|
    \le
    \frac{1}{k}
    \sqrt{k(2n_{\max}\kappa)^2}
    =
    \frac{2n_{\max}\kappa}{\sqrt{k}}.
\end{align*}
\end{proof}
\begin{lemma}[First-order approximation of the Shapley marginal contribution and its error bound]
\label{lem:marginal_bound}
Under Assumptions \ref{ass:bounded-feature} and \ref{ass:smooth-utility}, for any coalition $S$ not containing $i$, 
\begin{align}
|\Delta_i(S)-\alpha_S c_i|
\le \alpha_S(G+2M\kappa)\|\mu_S-\mu_\star\| + 2M\kappa^2\,\alpha_S^2.
\label{eq:marginal_bound}
\end{align}
where $\Delta_i(S):=v(D_{S}\biguplus D_i)-v(D_{S})$.
\end{lemma}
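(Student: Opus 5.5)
Starting from Lemma~\ref{lem:first-order-marginal-expansion}, for any nonempty coalition $S$ not containing $i$ we have
\begin{align*}
\Delta_i(S)=\alpha_S\langle \nabla F(\mu_S),\mu_i-\mu_S\rangle+R_S,
\qquad |R_S|\le 2M\kappa^2\alpha_S^2 .
\end{align*}
This already accounts for the second term of \eqref{eq:marginal_bound}. What remains is to compare the first-order term with $\alpha_S c_i=\alpha_S\langle \nabla F(\mu^\star),\mu_i-\mu^\star\rangle$. Factor out $\alpha_S$ and telescope by adding and subtracting $\langle \nabla F(\mu^\star),\mu_i-\mu_S\rangle$:
\begin{align*}
\langle \nabla F(\mu_S),\mu_i-\mu_S\rangle-\langle \nabla F(\mu^\star),\mu_i-\mu^\star\rangle
=\langle \nabla F(\mu_S)-\nabla F(\mu^\star),\mu_i-\mu_S\rangle+\langle \nabla F(\mu^\star),\mu^\star-\mu_S\rangle .
\end{align*}

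Bound the two pieces by Cauchy--Schwarz. For the first, use the $M$-Lipschitz property of $\nabla F$ from Assumption~\ref{ass:smooth-utility}, together with $\|\mu_i-\mu_S\|\le 2\kappa$. That norm bound holds because empirical embeddings are averages of feature vectors of norm at most $\kappa$ (Assumption~\ref{ass:bounded-feature}), exactly as in the proof of Lemma~\ref{lem:first-order-marginal-expansion}. This gives at most $2M\kappa\|\mu_S-\mu^\star\|$. For the second, use $\|\nabla F(\mu^\star)\|\le G$, which gives at most $G\|\mu_S-\mu^\star\|$. Summing yields $(G+2M\kappa)\|\mu_S-\mu^\star\|$. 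Multiplying by $\alpha_S\ge 0$ and adding the remainder bound through the triangle inequality gives \eqref{eq:marginal_bound}.

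The one point needing care is the empty coalition, where $\mu_\emptyset=0$ by convention and $\alpha_\emptyset=1$. The algebra above still goes through, since Lemma~\ref{lem:update} holds with $n_S=0$: $\mu_{\{i\}}=\mu_i=0+1\cdot(\mu_i-0)$. The bound $\|\mu_i-0\|\le\kappa\le 2\kappa$ also holds, so the Taylor step of Lemma~\ref{lem:taylor} applies verbatim. I do not expect a genuine obstacle here. The main care is bookkeeping: use the norm bound $2\kappa$ on $\mu_i-\mu_S$ (rather than on $\mu_i-\mu^\star$) in the Lipschitz term, so that the constant comes out as exactly $G+2M\kappa$.
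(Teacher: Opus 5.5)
Your proposal is correct and matches the paper's proof. It uses the same Taylor remainder bound $2M\kappa^2\alpha_S^2$, which the paper re-derives inline from Lemma~\ref{lem:taylor} rather than citing Lemma~\ref{lem:first-order-marginal-expansion}, followed by the same add-and-subtract of $\langle \nabla F(\mu^\star),\mu_i-\mu_S\rangle$, bounded via Lipschitz continuity of $\nabla F$ with $\|\mu_i-\mu_S\|\le 2\kappa$ and via $\|\nabla F\|\le G$. Your remark on the empty coalition is a small bonus; the paper covers that case implicitly, since its argument only uses $\|\mu_S\|\le\kappa$, which holds for $\mu_\emptyset=0$.
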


\begin{proof}
Let $\alpha:=\alpha_S$ and $x:=\mu_S$.
By \cref{lem:taylor} with $h=\alpha(\mu_i-x)$,
\begin{align*}
F(x+\alpha(\mu_i-x))-F(x) = \alpha\ip{\nabla F(x)}{\mu_i-x}+R
\quad\text{with}\quad |R|\le \frac{M}{2}\alpha^2\|\mu_i-x\|^2.
\end{align*}
Since $\|\mu_i\|\le \kappa$ and $\|x\|=\|\mu(\widehat P_{D_S})\|\le \kappa$,
\begin{align*}
\|\mu_i-x\|\le \|\mu_i\|+\|x\|\le 2\kappa,
\end{align*}
hence $|R|\le \frac{M}{2}\alpha^2(2\kappa)^2=2M\kappa^2\alpha^2$.

\noindent Now compare $\alpha\ip{\nabla F(x)}{\mu_i-x}$ with $\alpha c_i=\alpha\ip{g_\star}{\mu_i-\mu_\star}$.
First, rewrite:
\begin{align*}
\alpha\ip{\nabla F(x)}{\mu_i-x}-\alpha\ip{g_\star}{\mu_i-\mu_\star}
=\alpha\ip{\nabla F(x)-g_\star}{\mu_i-x} - \alpha\ip{g_\star}{x-\mu_\star}.
\end{align*}
Bound the two terms:
\begin{align*}
\bigl|\alpha\ip{\nabla F(x)-g_\star}{\mu_i-x}\bigr|
\le \alpha\|\nabla F(x)-g_\star\|\ \|\mu_i-x\|
\le \alpha\cdot M\|x-\mu_\star\|\cdot 2\kappa
=2M\kappa\,\alpha\|x-\mu_\star\|,
\end{align*}
where we used the Lipschitz continuity of $\nabla F$.
Also
\begin{align*}
\bigl|\alpha\ip{g_\star}{x-\mu_\star}\bigr|
\le \alpha\|g_\star\|\ \|x-\mu_\star\|
\le \alpha G\|x-\mu_\star\|.
\end{align*}
Combine with the bound on $R$ to get \eqref{eq:marginal_bound}.
\end{proof}

\begin{lemma}[A deterministic bound on $\alpha_S$]
\label{lem:alpha_bound}
For any coalition $S$ of size $k\ge 1$, the coefficient $\alpha_S = \frac{n_i}{n_S + n_i}$ satisfies
\begin{align*}
\alpha_S \le \min\left\{1,\frac{n_i}{k}\right\} \qquad\text{and}\qquad \alpha_S^2 \le \min\left\{1,\frac{n_i^2}{k^2}\right\}.
\end{align*}
\end{lemma}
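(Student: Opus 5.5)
The bound on $\alpha_S$ is elementary and deterministic, so I would argue directly from the definition $\alpha_S = n_i/(n_S+n_i)$ together with two facts. First, $n_S\ge 0$ and $n_i\ge 1$. Second, every surrounding player has $n_j\ge 1$; in the random model this is guaranteed almost surely by Assumption~\ref{ass:population}.

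\emph{Bound by $1$.} Since $n_S\ge 0$, the denominator satisfies $n_S+n_i\ge n_i>0$. Hence $\alpha_S\le 1$, and in fact $\alpha_S\in(0,1]$.

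\emph{Bound by $n_i/k$.} For a coalition $S$ of size $k\ge 1$, each member contributes at least one datapoint, so $n_S=\sum_{j\in S}n_j\ge k$. The same inequality was already used in the proof of Lemma~\ref{lem:coalition-embedding-concentration}. Therefore $n_S+n_i\ge k+n_i> k$, which gives $\alpha_S\le n_i/(k+n_i)\le n_i/k$. Combining the two cases yields $\alpha_S\le\min\{1,n_i/k\}$.

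\emph{Squared bound.} Because $\alpha_S\ge 0$ and $\min\{1,n_i/k\}\ge 0$, and $t\mapsto t^2$ is increasing on $[0,\infty)$, squaring the first inequality gives $\alpha_S^2\le\min\{1,n_i/k\}^2=\min\{1,n_i^2/k^2\}$.

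There is no real obstacle here. The only point needing care is to state explicitly that $n_j\ge 1$ is used to get $n_S\ge k$. The proof then holds pointwise, for every realization of the surrounding datasets, which is what later allows it to be applied inside expectations.
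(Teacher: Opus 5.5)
Your proof is correct and is essentially the paper's argument: $\alpha_S\le 1$ because $n_S+n_i\ge n_i$, $\alpha_S\le n_i/(k+n_i)\le n_i/k$ because $n_S\ge k$ (each player has $n_j\ge 1$), and the squared bound follows by squaring. Your note that both sides are nonnegative before squaring is a small point of rigor the paper leaves implicit.
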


\begin{proof}
Since each player contributes at least one datapoint, any coalition of size $k$ has total $n_S \ge k$ datapoints. Consequently,
\begin{align*}
\alpha_S = \frac{n_i}{n_S + n_i} \le \frac{n_i}{k + n_i} \le \frac{n_i}{k},
\end{align*}
where the last inequality uses $k + n_i \ge k$. Moreover, from the definition $\alpha_S = n_i/(n_S+n_i)$, we have $\alpha_S \le 1$ because $n_S+n_i \ge n_i$. Combining these two bounds yields $\alpha_S \le \min\{1, n_i/k\}$. For the squared term, squaring the bound gives $\alpha_S^2 \le \min\{1, n_i^2/k^2\}$.
\end{proof}

\begin{lemma}[Corresponds to Eq.(2) in \cite{schneider2016} : without-replacement RKHS embedding concentration]
\label{thm:schneider}
Let $z_1,\dots,z_N$ be a finite population and let $P_N:=\frac{1}{N}\sum_{j=1}^N\delta_{z_j}$ be its empirical measure.
Let $P_n$ be the empirical measure of a subset of size $n$ sampled uniformly \emph{without replacement} from $\{z_1,\dots,z_N\}$.
Assume $\|\phi(z)\|\le d$ for all population points.
Then for all $\varepsilon>0$,
\begin{align}
\Prob\Bigl(\|\mu(P_n)-\mu(P_N)\|\ge \varepsilon\Bigr)
\le 2\exp\!\left(-\frac{n\varepsilon^2}{8d^2}\right).
\label{eq:schneider}
\end{align}
\end{lemma}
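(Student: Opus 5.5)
The plan is to prove the bound in two stages: first control the expectation of $f:=\|\mu(P_n)-\mu(P_N)\|$, then show that $f$ concentrates around its mean through a bounded-differences martingale argument adapted to sampling without replacement. The factor $2$ in front of the exponential will absorb the regime where $\varepsilon$ is comparable to the mean. Write the sample as the first $n$ entries $z_{\sigma(1)},\dots,z_{\sigma(n)}$ of a uniformly random permutation $\sigma$ of $[N]$. Then $\mu(P_n)=\frac1n\sum_{r=1}^n\phi(z_{\sigma(r)})$, and $\mu(P_N)$ is the deterministic population mean $\bar\phi:=\frac1N\sum_j\phi(z_j)$.

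\emph{Step 1 (mean).} By Jensen, $\mathbb E f\le(\mathbb E f^2)^{1/2}$. I would use the classical finite-population variance identity, which holds verbatim in a Hilbert space by expanding the inner products:
\begin{align*}
\mathbb E\bigl\|\mu(P_n)-\bar\phi\bigr\|^2=\frac{N-n}{n(N-1)}\,\sigma^2,
\qquad
\sigma^2:=\frac1N\sum_{j=1}^N\|\phi(z_j)-\bar\phi\|^2 .
\end{align*}
Since $\sigma^2\le\frac1N\sum_j\|\phi(z_j)\|^2\le d^2$ and $\frac{N-n}{N-1}\le1$, this gives $\mathbb E f\le d/\sqrt n$.

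\emph{Step 2 (concentration).} Consider the Doob martingale $M_r:=\mathbb E[f\mid\sigma(1),\dots,\sigma(r)]$ for $r=0,\dots,n$, so that $M_0=\mathbb E f$ and $M_n=f$. The key claim is $|M_r-M_{r-1}|\le 2d/n$.

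To prove it, take two values $a,b$ of the $r$-th draw, with the same history. Couple the remaining draws under $\sigma(r)=a$ and $\sigma(r)=b$ by the transposition exchanging $a$ and $b$. Under this coupling the two resulting samples differ in at most one element. Since $f$ is $1/n$-Lipschitz in each summand, with $\|\phi(z)-\phi(z')\|\le 2d$, the two values of $f$ differ by at most $2d/n$. Averaging over $a$ or $b$ gives the increment bound.

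Azuma--Hoeffding then yields
\begin{align*}
\Prob(f-\mathbb E f\ge t)\le\exp\!\Bigl(-\frac{2t^2}{n(2d/n)^2}\Bigr)=\exp\!\Bigl(-\frac{nt^2}{2d^2}\Bigr).
\end{align*}

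\emph{Step 3 (combine).} There are two cases.
\begin{itemize}
\item If $\varepsilon\ge 2d/\sqrt n$, then $\varepsilon-\mathbb E f\ge\varepsilon/2$. Taking $t=\varepsilon/2$ in Step 2 gives $\Prob(f\ge\varepsilon)\le\exp(-n\varepsilon^2/(8d^2))$.
\item If $\varepsilon<2d/\sqrt n$, then $2\exp(-n\varepsilon^2/(8d^2))>2e^{-1/2}>1$, so the claim holds trivially.
\end{itemize}
This yields \eqref{eq:schneider}.

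The main obstacle is Step 2: justifying the bounded-difference property for a martingale built on dependent draws without replacement. The transposition coupling is the device I would rely on; I would check that it preserves the conditional law of the remaining draws given the history.
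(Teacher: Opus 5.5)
Your proof is correct and recovers exactly the stated constants. It is not the route the paper relies on, however: the paper gives no proof and imports the bound from \cite{schneider2016}. That source generalizes Serfling's without-replacement martingale argument to Hilbert/Banach-space-valued processes, applied directly to the vector $\mu(P_n)-\mu(P_N)$. The constants $2$ and $8d^2$ are precisely what a Pinelis-type Hilbert-space martingale inequality gives with $n$ increments of norm at most $2d/n$. That route controls the norm of the vector martingale in one shot. The prefactor $2$ comes from the vector inequality itself, so no separate bound on $\E f$ or case split is needed, and the method extends to smooth Banach spaces and carries finite-population corrections in the tail.

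You instead scalarize. The exact finite-population variance identity gives $\E f\le d/\sqrt n$. The transposition coupling makes a McDiarmid-type bounded-differences argument valid for dependent draws; it does preserve the conditional law, since it is a bijection between orderings of the two remaining index sets. The prefactor $2$ is then spent absorbing the mean when $\varepsilon<2d/\sqrt n$. Your argument is fully self-contained and needs only Hoeffding's lemma. The price is the case split, and the finite-population factor is used only in the mean, not in the tail.

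One statement should be tightened. The exponent $2t^2/\sum_r c_r^2$ that you invoke is the Azuma--Hoeffding bound for increments confined to an interval of length $c_r$ with $\mathcal F_{r-1}$-measurable endpoints. From $|M_r-M_{r-1}|\le c_r$ alone you only get $\exp\bigl(-t^2/(2\sum_r c_r^2)\bigr)=\exp(-nt^2/(8d^2))$. Step 3 would then give $\exp(-n\varepsilon^2/(32d^2))$, which is too weak.

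What your coupling actually proves is an oscillation bound. Let $h=(\sigma(1),\dots,\sigma(r-1))$ and $g(a):=\E[f\mid h,\sigma(r)=a]$. Then $|g(a)-g(b)|\le 2d/n$ for all admissible $a,b$. Since $M_{r-1}$ is the average of $g$, this gives $M_r-M_{r-1}\in[\min g-M_{r-1},\max g-M_{r-1}]$, an interval of length at most $2d/n$ determined by $h$. State the key claim in that form and the rest of the proof goes through as written.
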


\begin{corollary}[Expected deviation of without‑replacement subsampling]
\label{cor:schneider_exp}
Under the assumptions of Lemma \ref{thm:schneider},
\begin{align*}
\E\|\mu(P_n)-\mu(P_N)\|
\le d\sqrt{\frac{8\pi}{n}}.
\end{align*}
\end{corollary}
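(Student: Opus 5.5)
The statement to prove is Corollary~\ref{cor:schneider_exp}, which gives a bound on an expectation. The plan is to integrate the tail bound of Lemma~\ref{thm:schneider} using the layer-cake formula. Write $Z:=\|\mu(P_n)-\mu(P_N)\|\ge 0$. Then
\begin{align*}
\E Z=\int_0^\infty \Prob(Z\ge\varepsilon)\,d\varepsilon
\le \int_0^\infty 2\exp\!\left(-\frac{n\varepsilon^2}{8d^2}\right)d\varepsilon .
\end{align*}
Here I use that $\Prob(Z>\varepsilon)\le\Prob(Z\ge\varepsilon)$, so the weak-inequality tail is enough.

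The remaining integral is a half-Gaussian integral, $\int_0^\infty e^{-a\varepsilon^2}\,d\varepsilon=\tfrac12\sqrt{\pi/a}$, with $a=n/(8d^2)$. This gives
\[
\E Z\le 2\cdot\tfrac12\sqrt{\frac{8\pi d^2}{n}}=d\sqrt{\frac{8\pi}{n}},
\]
which is exactly the claimed constant.

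There is no real obstacle. The only points to check are that $Z$ is a nonnegative measurable random variable, so the layer-cake formula applies, and that Lemma~\ref{thm:schneider} holds for every $\varepsilon>0$, so we may integrate over the whole half-line. The latter is true even though the bound exceeds $1$ for small $\varepsilon$, since we are only upper-bounding. An alternative that gives the same constant is to bound $\Prob(Z\ge\varepsilon)\le\min\{1,\cdot\}$ and optimize, but it is not needed.
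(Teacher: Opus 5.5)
Your proof is correct and is essentially identical to the paper's: both write $\E Z=\int_0^\infty \Prob(Z\ge t)\,dt$, insert the tail bound of Lemma~\ref{thm:schneider} over the whole half-line, and evaluate the half-Gaussian integral with $a=n/(8d^2)$ to obtain $d\sqrt{8\pi/n}$. One small correction to your closing aside: capping the tail at $\min\{1,\cdot\}$ would give a strictly smaller constant rather than the same one, though this does not affect the stated bound.
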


\begin{proof}
Let $X:=\|\mu(P_n)-\mu(P_N)\|\ge 0$.
Use the identity $\E[X]=\int_0^\infty \Prob(X\ge t)\,dt$ and apply \eqref{eq:schneider}:
\begin{align*}
\E[X]\le \int_0^\infty 2\exp\!\left(-\frac{n t^2}{8d^2}\right)\,dt.
\end{align*}
Let $a:=\frac{n}{8d^2}>0$. Then
\begin{align*}
\int_0^\infty 2e^{-a t^2}\,dt
=2\cdot \frac{1}{2}\sqrt{\frac{\pi}{a}}
=\sqrt{\frac{\pi}{a}}
=d\sqrt{\frac{8\pi}{n}}.
\end{align*}
\end{proof}
\begin{lemma}[Harmonic asymptotics]
\label{lem:harmonic-asymptotics}
Let
\begin{align*}
    H_n:=\sum_{k=1}^n\frac{1}{k}.
\end{align*}
Then
\begin{align*}
    H_n=\log n+O(1).
\end{align*}
Consequently,
\begin{align*}
    \frac{H_{I-1}}{\log I}\longrightarrow 1.
\end{align*}
\end{lemma}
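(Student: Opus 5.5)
The claim is the standard estimate $H_n=\log n+O(1)$ for the harmonic numbers, together with its consequence $H_{I-1}/\log I\to1$. The plan is to compare the sum $\sum_{k=1}^n 1/k$ with the integral $\int_1^n dt/t=\log n$, using only that $t\mapsto 1/t$ is positive and decreasing on $[1,\infty)$.

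\textbf{Upper bound.} For each $k\ge 2$ and $t\in[k-1,k]$ we have $1/k\le 1/t$. Integrating over $[k-1,k]$ and summing over $k=2,\dots,n$ gives
\begin{align*}
H_n-1=\sum_{k=2}^n\frac1k\le\int_1^n\frac{dt}{t}=\log n .
\end{align*}

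\textbf{Lower bound.} For each $k\ge 1$ and $t\in[k,k+1]$ we have $1/k\ge 1/t$. Summing over $k=1,\dots,n$ gives
\begin{align*}
H_n\ge\int_1^{n+1}\frac{dt}{t}=\log(n+1)\ge\log n .
\end{align*}
Combining the two bounds, $\log n\le H_n\le \log n+1$ for all $n\ge1$, which is exactly $H_n=\log n+O(1)$.

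\textbf{Consequence.} Take $n=I-1$ with $I\ge 3$, so that $\log I>0$. Since $\log(I-1)=\log I+\log(1-1/I)$ and $\log(1-1/I)\to0$, we get $H_{I-1}=\log I+O(1)$. Dividing by $\log I\to\infty$ then gives $H_{I-1}/\log I=1+O(1/\log I)\to1$.

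No step is a real obstacle. The only point needing care is keeping the endpoints of the integral comparison straight, in particular the $k=1$ term in the upper bound, so that the additive constant is uniform in $n$.
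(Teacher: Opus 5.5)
Your proof is correct and follows the paper's argument essentially verbatim: the same integral comparison giving $\log(n+1)\le H_n\le 1+\log n$, then $H_{I-1}=\log(I-1)+O(1)=\log I+O(1)$, then division by $\log I$. Your restriction to $I\ge 3$ is unnecessary, since $\log 2>0$ already, but it is harmless.
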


\begin{proof}
For every $n\ge 1$, the standard integral comparison gives
\begin{align*}
    \int_1^{n+1}\frac{1}{x}\,dx
    \le
    H_n
    \le
    1+\int_1^n\frac{1}{x}\,dx.
\end{align*}
Hence
\begin{align*}
    \log(n+1)
    \le
    H_n
    \le
    1+\log n.
\end{align*}
This implies $H_n=\log n+O(1)$. Taking $n=I-1$, we get
\begin{align*}
    H_{I-1}
    =
    \log(I-1)+O(1)
    =
    \log I+O(1),
\end{align*}
and therefore
\begin{align*}
    \frac{H_{I-1}}{\log I}
    =
    1+O\!\left(\frac{1}{\log I}\right)
    \longrightarrow 1.
\end{align*}
\end{proof}
\begin{lemma}[Lower bound on the leading term]
\label{lem:leading-lower-bound}
For every integer $I\ge 2$,
\begin{align*}
    H_{I-1}
    :=
    \sum_{k=1}^{I-1}\frac{1}{k}
    \ge
    \log I .
\end{align*}
Consequently, if
\begin{align*}
    \Theta_i^I
    =
    \frac{n_i c_i}{\bar n}\frac{H_{I-1}}{I},
\end{align*}
then
\begin{align*}
    |\Theta_i^I|
    \ge
    \frac{n_i |c_i|}{\bar n}
    \frac{\log I}{I}.
\end{align*}
\end{lemma}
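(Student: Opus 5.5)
The claim $H_{I-1}\ge\log I$ follows from an integral comparison, and the bound on $|\Theta_i^I|$ then follows by multiplying by the nonnegative prefactor. Since $x\mapsto 1/x$ is decreasing on $(0,\infty)$, for every integer $k\ge1$ we have $1/k\ge\int_k^{k+1}dx/x$. Summing over $k=1,\dots,I-1$ gives
\begin{align*}
H_{I-1}=\sum_{k=1}^{I-1}\frac1k\ \ge\ \int_1^{I}\frac{dx}{x}=\log I .
\end{align*}
This is the same comparison as the lower half of Lemma~\ref{lem:harmonic-asymptotics}, which already gives $H_n\ge\log(n+1)$; taking $n=I-1\ge1$ yields the claim directly. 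I would simply invoke that lemma and recall the one-line justification.

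For the consequence, I would write $|\Theta_i^I|=\frac{n_i|c_i|}{\bar n}\,\frac{H_{I-1}}{I}$. This uses $n_i\ge1$, $\bar n\ge1>0$ (Assumption~\ref{ass:population}), $I\ge2$ and $H_{I-1}>0$, so every factor except $c_i$ is positive and the absolute value acts only on $c_i$. Since $\frac{n_i|c_i|}{\bar n I}\ge0$, multiplying the inequality $H_{I-1}\ge\log I$ by it preserves the direction and gives the stated bound.

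There is no real obstacle here. The only point needing care is the sign bookkeeping: the prefactor multiplying $H_{I-1}$ must be nonnegative, which holds because it contains $|c_i|$ rather than $c_i$. The case $c_i=0$ is trivial, with both sides equal to zero.
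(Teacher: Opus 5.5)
Your proposal is correct and follows the paper's proof: the same integral comparison $\log I=\int_1^{I}dx/x\le H_{I-1}$ (using that $1/x$ is non-increasing), followed by multiplying through by the nonnegative factor $n_i|c_i|/(\bar n I)$. Your explicit sign bookkeeping ($n_i\ge 1$, $\bar n\ge 1$, $H_{I-1}>0$) is a detail the paper leaves as ``follows immediately.''
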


\begin{proof}
For every integer $n\ge 1$, since $x\mapsto 1/x$ is non-increasing on
$[1,\infty)$,
\begin{align*}
    \int_1^{n+1}\frac{1}{x}\,dx
    \le
    \sum_{k=1}^{n}\frac{1}{k}.
\end{align*}
Taking $n=I-1$, we obtain
\begin{align*}
    \log I
    =
    \int_1^{I}\frac{1}{x}\,dx
    \le
    \sum_{k=1}^{I-1}\frac{1}{k}
    =
    H_{I-1}.
\end{align*}
The bound on $|\Theta_i^I|$ follows immediately from the definition of
$\Theta_i^I$.
\end{proof}

\newpage
\section{Additional Proofs}
\label{sec:additional-proofs}
This appendix contains the missing proofs for the results stated in the main paper. It is organized as follows.
\begin{itemize}
    \item[--] Appendix~\ref{subsec:proof-thm-leading-term} introduces the proof of Theorem~\ref{thm:leading-term}.
    \item[--] Appendix~\ref{subsec:proof-cor-shapley-lower-bound} introduces the proof of  Corollary~\ref{cor:shapley-lower-bound}.
    \item[--] Appendix~\ref{subsec:proof-cor-absolute-to-relative} introduces the proof of  Corollary~\ref{cor:absolute-to-relative}.
    \item[--] Appendix~\ref{subsec:proof-prop-mc-gt} introduces the proof of  Proposition~\ref{prop:mc_gt_budgets}. 
    \item[--] Appendix~\ref{subsec:proof-thm-leading-order-estimator} introduces the proof of Theorem ~\ref{thm:leading_order_consistent}.
    \item[--] Appendix~\ref{subsec:proof-prop-du-strat} introduces the proof of  Proposition~\ref{prop:du_strat}.
\end{itemize}
\newpage

\subsection{Proof of Theorem \ref{thm:leading-term}}
\label{subsec:proof-thm-leading-term}
\begin{theorem}[Restate of Theorem \ref{thm:leading-term}]
Suppose Assumptions~\ref{ass:population}--\ref{ass:smooth-utility} hold. Then,
for the fixed player $i$,
\begin{align*}
    \mathbb E
    \left[
        \left|
            \phi_i^I-\Theta_i^I
        \right|
    \right]
    =
    O\!\left(\frac{1}{I}\right),
\end{align*}
where the expectation is over the surrounding players
$(D_j)_{j\neq i}$.
\end{theorem}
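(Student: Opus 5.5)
The plan is to write $\phi_i^I-\Theta_i^I$ as an average over coalition sizes and to show that the per-size error is summable in $k$. Once that holds, the $1/I$ prefactor of the Shapley formula delivers the $O(1/I)$ rate directly. Starting from \eqref{eq:shapley-I-cardinality} and the definition of $\Theta_i^I$, we have
\begin{align*}
\phi_i^I-\Theta_i^I
=\frac{1}{I}\Delta_i(\emptyset)
+\frac{1}{I}\sum_{k=1}^{I-1}\mathbb E_{S_k}\!\left[\Delta_i(S_k)-\frac{n_i}{k\bar n}c_i\right].
\end{align*}
We take absolute values and then the expectation over the surrounding datasets, moving it inside by Fubini.

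\textbf{Empty-coalition term.} Here $\Delta_i(\emptyset)=F(\mu_i)-F(0)$. Since $\|\nabla F\|\le G$ and $\|\mu_i\|\le\kappa$, the mean value inequality gives $|\Delta_i(\emptyset)|\le G\kappa$. This term therefore contributes $O(1/I)$.

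\textbf{Per-size terms, $k\ge1$.} For a fixed $k$ we split
\begin{align*}
\Delta_i(S)-\frac{n_i}{k\bar n}c_i=\bigl(\Delta_i(S)-\alpha_S c_i\bigr)+c_i\Bigl(\alpha_S-\frac{n_i}{k\bar n}\Bigr).
\end{align*}

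For the first piece we use Lemma~\ref{lem:marginal_bound} together with the deterministic bound $\alpha_S\le n_i/k$ from Lemma~\ref{lem:alpha_bound}. This gives
\begin{align*}
|\Delta_i(S)-\alpha_S c_i|\le \frac{n_i}{k}(G+2M\kappa)\|\mu_S-\mu^\star\|+2M\kappa^2\frac{n_i^2}{k^2}.
\end{align*}
Conditionally on the index set of $S_k$, the $k$ datasets are i.i.d.\ from $\mathcal P$ by Assumption~\ref{ass:population}. Lemma~\ref{lem:coalition-embedding-concentration} therefore gives $\mathbb E\|\mu_{S_k}-\mu^\star\|\le 2n_{\max}\kappa/\sqrt k$ uniformly over index sets. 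The expectation of the first piece is thus $O(k^{-3/2})+O(k^{-2})$.

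For the second piece we use $n_S\ge k$ to write
\begin{align*}
\Bigl|\alpha_S-\frac{n_i}{k\bar n}\Bigr|=\frac{n_i\,|k\bar n-n_S-n_i|}{(n_S+n_i)\,k\bar n}\le \frac{n_i\bigl(|n_S-k\bar n|+n_i\bigr)}{k^2\bar n}.
\end{align*}
The sizes $n_j$ are i.i.d., bounded by $n_{\max}$, and have mean $\bar n$. Cauchy--Schwarz therefore yields $\mathbb E|n_S-k\bar n|\le n_{\max}\sqrt k$. Also $|c_i|\le 2G\kappa$. The expectation of the second piece is thus again $O(k^{-3/2})+O(k^{-2})$.

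\textbf{Summation.} Combining the two pieces, $\mathbb E\bigl|\mathbb E_{S_k}[\Delta_i(S_k)]-\frac{n_i}{k\bar n}c_i\bigr|\le C_1k^{-3/2}+C_2k^{-2}$, with constants depending only on $n_i,n_{\max},\bar n,\kappa,G,M$. Since $\sum_k k^{-3/2}<\infty$, the sum over $k$ is bounded independently of $I$. Multiplying by $1/I$ and adding the empty-coalition term gives $\mathbb E|\phi_i^I-\Theta_i^I|=O(1/I)$.

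\textbf{Main obstacle.} The delicate point is obtaining per-size errors that decay strictly faster than $1/k$. A crude bound of order $1/k$ would only give $O(H_{I-1}/I)=O(\log I/I)$, which is the same order as $\Theta_i^I$ itself. The extra $k^{-1/2}$ factor comes from combining the deterministic $n_i/k$ bound on $\alpha_S$ with the $k^{-1/2}$ concentration of both $\mu_S$ and $n_S/k$. It also has to be checked that $\alpha_S$ is random and correlated with $\mu_S$. Because it is bounded deterministically before taking expectations, this dependence causes no difficulty.
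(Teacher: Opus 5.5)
Your proposal is correct and follows essentially the same route as the paper. It uses the same split of each per-size error into $\Delta_i(S_k)-\alpha_{S_k}c_i$ (handled by Lemmas~\ref{lem:marginal_bound}, \ref{lem:alpha_bound} and~\ref{lem:coalition-embedding-concentration}) and $c_i\bigl(\alpha_{S_k}-\frac{n_i}{k\bar n}\bigr)$, each shown to be $O(k^{-3/2})$ and hence summable in $k$. The differences are cosmetic: you use $\alpha_S\le n_i/k$ rather than $\min\{1,n_i/k\}$, and you compare $\alpha_S$ with $n_i/(k\bar n)$ in a single inequality with the absolute value inside the expectation, whereas the paper first removes the $+n_i$ and then compares $1/N_{S_k}$ with $1/(k\bar n)$.
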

\begin{proof}
Recall the definition of the Shapley value from Equation \eqref{eq:shapley-I-cardinality}:
\begin{align*}
\phi_i^I = \frac{1}{I}\sum_{k=0}^{I-1} \E_{S_k}[\Delta_i(S_k)],
\end{align*}
where $\Delta_i(S_k)$ denotes the marginal contribution $v(D_{S_k}\biguplus D_i)-v(D_{S_k})$ and $S_k$ is uniformly distributed over subsets of $[I]\setminus\{i\}$ of size $k$.
The term $k=0$ corresponds to the empty coalition. By the convention stated in Section \ref{sec:setup}, $\Delta_i(\varnothing)=v(D_i)-v(0)$. Since $\|\mu_i\|\le \kappa$ and $F$ is $G$-Lipschitz (from $\|\nabla F\|\le G$), we have
\begin{align*}
|\Delta_i(\varnothing)| = |F(\mu_i)-F(0)| \le G\|\mu_i-0\| \le G\kappa.
\end{align*}
Thus $|\Delta_i(\varnothing)|$ is bounded by a constant independent of $I$, and its contribution to $\mathbb{E}|\phi_i^I-\Theta_i^I|$ is $O(1/I)$ after multiplication by $1/I$.
We therefore focus on $k\ge 1$ and write
\begin{align*}
\phi_i^I - \Theta_i^I = \frac{1}{I}\sum_{k=1}^{I-1}\E_{S_k}[\Delta_i(S_k)] - \Theta_i^I + \frac{C_0}{I},
\end{align*}
where $C_0$ absorbs the $k=0$ term.

\paragraph{1. Decomposing the error.}
For each coalition $S_k$, we have:
\begin{align*}
\Delta_i(S_k) = \alpha_{S_k}c_i + \bigl(\Delta_i(S_k) - \alpha_{S_k}c_i\bigr).
\end{align*}
Taking expectation over $S_k$ and averaging over $k$, we obtain
\begin{align*}
\phi_i^I - \Theta_i^I = \underbrace{\frac{1}{I}\sum_{k=1}^{I-1}\E_{S_k}\bigl[\Delta_i(S_k)-\alpha_{S_k}c_i\bigr]}_{=:A_I}
+ \underbrace{\left(\frac{1}{I}\sum_{k=1}^{I-1}\E_{S_k}[\alpha_{S_k}]c_i - \Theta_i^I\right)}_{=:m_I}
+ \frac{C_0}{I},
\end{align*}
By the triangle inequality,
\begin{align*}
\E|\phi_i^I-\Theta_i^I| \le \E|A_I| + \E|m_I| + \frac{C_0}{I}.
\end{align*}
We now bound $\E|A_I|$ and $\E|m_I|$ separately.

\paragraph{2. Upper bound on $\E|A_I|$}
We have
\begin{align*}
\E|A_I| &= \E\left[ \left| \frac{1}{I}\sum_{k=1}^{I-1} \E_{S_k}[\Delta_i(S_k) - \alpha_{S_k}c_i] \right| \right] \\
&\le \frac{1}{I}\sum_{k=1}^{I-1} \E\left[ \E_{S_k}\bigl[ |\Delta_i(S_k) - \alpha_{S_k}c_i| \bigr] \right] \quad \text{(by Jensen and triangle inequality)}\\
&\le \frac{1}{I}\sum_{k=1}^{I-1} \E\left[ (G+2M\kappa)\E_{S_k}[\alpha_{S_k}|\mu_{S_k}-\mu_\star|] + 2M\kappa^2\E_{S_k}[\alpha_{S_k}^2] \right] \text{(by 
Lemma~\ref{lem:marginal_bound})} 
\end{align*}
Now, by the law of total expectation, $\E[\E_{S_k}[\alpha_{S_k}\|\mu_{S_k}-\mu_\star\|]] = \E[\alpha_{S_k}\|\mu_{S_k}-\mu_\star\|]$, where the outer expectation is over the random players (which determine the distribution of $S_k$). The same holds for $\alpha_{S_k}^2$. Hence,
\begin{align*}
\E|A_I| \le \frac{1}{I}\sum_{k=1}^{I-1}\Bigl[(G+2M\kappa)\E\bigl[\alpha_{S_k}\|\mu_{S_k}-\mu_\star\|\bigr] + 2M\kappa^2\E[\alpha_{S_k}^2]\Bigr].
\end{align*}

\subparagraph{2.1 Bound on $\E[\alpha_{S_k}\|\mu_{S_k}-\mu_\star\|]$.}
By Lemma~\ref{lem:alpha_bound}, $\alpha_{S_k} \le \min\{1,n_i/k\}$.
By Lemma~\ref{lem:coalition-embedding-concentration}, $\E\|\mu_{S_k}-\mu_\star\| \le 2n_{\max}\kappa/\sqrt{k}$.
Hence,
\begin{align*}
\E\bigl[\alpha_{S_k}\|\mu_{S_k}-\mu_\star\|\bigr] \le \min\Bigl\{1,\frac{n_i}{k}\Bigr\}\cdot \frac{2n_{\max}\kappa}{\sqrt{k}}.
\end{align*}
Summing over $k$,
\begin{align*}
\sum_{k=1}^{I-1}\E\bigl[\alpha_{S_k}\|\mu_{S_k}-\mu_\star\|\bigr]
\le 2n_{\max}\kappa\left(\sum_{k=1}^{\lfloor n_i\rfloor}\frac{1}{\sqrt{k}} + \sum_{k=\lfloor n_i\rfloor+1}^{I-1}\frac{n_i}{k^{3/2}}\right).
\end{align*}
The first sum is bounded by $2\sqrt{n_i}$ (since $\sum_{k=1}^{n_i}k^{-1/2}\le 2\sqrt{n_i}$).
The second sum is bounded by $n_i\int_{n_i}^{\infty}x^{-3/2}dx = 2\sqrt{n_i}$.
Thus the total sum is at most $8n_{\max}\kappa\sqrt{n_i}$, a finite constant independent of $I$.

\subparagraph{2.2 Bound on $\E[\alpha_{S_k}^2]$.}
Again by Lemma~\ref{lem:alpha_bound}, $\E[\alpha_{S_k}^2] \le \min\{1,n_i^2/k^2\}$.
Summing over $k$,
\begin{align*}
\sum_{k=1}^{I-1}\E[\alpha_{S_k}^2] \le \sum_{k=1}^{\lfloor n_i\rfloor}1 + \sum_{k=\lfloor n_i\rfloor+1}^{\infty}\frac{n_i^2}{k^2}
\le n_i + n_i^2\int_{n_i}^{\infty}x^{-2}dx = n_i + n_i = 2n_i.
\end{align*}
Hence, this sum is also bounded by a constant independent of $I$.
\\ \\
\noindent Putting these bounds together, we have shown that there exists a constant $C_A$ such that
\begin{align*}
\sum_{k=1}^{I-1}\Bigl[(G+2M\kappa)\E\bigl[\alpha_{S_k}\|\mu_{S_k}-\mu_\star\|\bigr] + 2M\kappa^2\E[\alpha_{S_k}^2]\Bigr] \le C_A.
\end{align*}
Consequently, $\E|A_I| \le C_A/I$.

\paragraph{3. Upper bound on $\E|m_I|$.}
Recall that $m_I = \frac{1}{I}\sum_{k=1}^{I-1}\E[\alpha_{S_k}]c_i - \Theta_i^I$ and $\Theta_i^I = \frac{n_i c_i}{\bar n}\frac{H_{I-1}}{I}$.
Thus,
\begin{align*}
|m_I| \le |c_i|\cdot \left|\frac{1}{I}\sum_{k=1}^{I-1}\E[\alpha_{S_k}] - \frac{n_i}{\bar n}\frac{H_{I-1}}{I}\right|.
\end{align*}
From Assumptions \ref{ass:bounded-feature} and~\ref{ass:smooth-utility}, we have $|c_i| \le \|g_\star\|\|\mu_i-\mu_\star\| \le G\cdot 2\kappa$.
It therefore suffices to bound the term in absolute value.

\subparagraph{3.1. Removing the $+n_i$ in the denominator.}
Since $\alpha_{S_k}=n_i/(N_{S_k}+n_i)$, define
\begin{align*}
    A_k
    :=
    \frac{n_i}{N_{S_k}+n_i}
    -
    \frac{n_i}{N_{S_k}} .
\end{align*}
For every realization of $S_k$, using $N_{S_k}\ge k$, we have
\begin{align*}
    |A_k|
    =
    \frac{n_i^2}{N_{S_k}(N_{S_k}+n_i)}
    \le
    \frac{n_i^2}{k^2}.
\end{align*}
Therefore,
\begin{align*}
\begin{aligned}
\left|
    \mathbb E[\alpha_{S_k}]
    -
    n_i\mathbb E\!\left[\frac{1}{N_{S_k}}\right]
\right|
&=
\left|
    \mathbb E[A_k]
\right| \\
&\le
\mathbb E[|A_k|] \\
&\le
\frac{n_i^2}{k^2}.
\end{aligned}
\end{align*}

\subparagraph{3.2. Comparing $\E[1/N_{S_k}]$ to $1/(k\bar n)$.}
Using the identity $|1/x-1/y| = |x-y|/(xy)$,
\begin{align*}
\left|\frac{1}{N_{S_k}} - \frac{1}{k\bar n}\right| = \frac{|N_{S_k}-k\bar n|}{N_{S_k}\,k\bar n} \le \frac{|N_{S_k}-k\bar n|}{k^2\bar n},
\end{align*}
since $N_{S_k}\ge k$. Taking expectations and applying Cauchy-Schwarz,
\begin{align*}
\E\left|\frac{1}{N_{S_k}} - \frac{1}{k\bar n}\right| \le \frac{\E|N_{S_k}-k\bar n|}{k^2\bar n}
\le \frac{\sqrt{\Var(N_{S_k})}}{k^2\bar n}.
\end{align*}
Now $N_{S_k} = \sum_{j\in S_k} n_j$, where the $n_j$ are i.i.d. with $\Var(n_j) \le n_{\max}^2$ (since $1\le n_j\le n_{\max}$). Hence $\Var(N_{S_k}) = k\Var(n_j) \le k n_{\max}^2$, so
\begin{align*}
\E\left|\frac{1}{N_{S_k}} - \frac{1}{k\bar n}\right| \le \frac{n_{\max}\sqrt{k}}{k^2\bar n} = \frac{n_{\max}}{\bar n}\frac{1}{k^{3/2}}.
\end{align*}
Multiplying by $n_i$ and combining with Step 3.1, we obtain
\begin{align*}
\left|\E[\alpha_{S_k}] - \frac{n_i}{k\bar n}\right| \le \frac{n_i^2}{k^2} + \frac{n_i n_{\max}}{\bar n}\frac{1}{k^{3/2}}.
\end{align*}

\subparagraph{3.3. Summation over $k$.}
The right-hand side is summable over $k$:
\begin{align*}
\sum_{k=1}^{\infty}\left(\frac{n_i^2}{k^2} + \frac{n_i n_{\max}}{\bar n}\frac{1}{k^{3/2}}\right) = n_i^2\sum_{k=1}^{\infty}\frac{1}{k^2} + \frac{n_i n_{\max}}{\bar n}\sum_{k=1}^{\infty}\frac{1}{k^{3/2}} < \infty.
\end{align*}
Denote this finite sum by $C_B$. Then
\begin{align*}
\sum_{k=1}^{I-1}\left|\E[\alpha_{S_k}] - \frac{n_i}{k\bar n}\right| \le C_B.
\end{align*}

\subparagraph{3.4. Final bound for $\E|m_I|$.}
We now have
\begin{align*}
\left|\frac{1}{I}\sum_{k=1}^{I-1}\E[\alpha_{S_k}] - \frac{n_i}{\bar n}\frac{H_{I-1}}{I}\right|
\le \frac{1}{I}\sum_{k=1}^{I-1}\left|\E[\alpha_{S_k}] - \frac{n_i}{k\bar n}\right|
\le \frac{C_B}{I}.
\end{align*}
Multiplying by $|c_i| \le 2G\kappa$ gives $\E|m_I| \le \frac{2G\kappa C_B}{I} =: \frac{C_B'}{I}$.

\paragraph{4. Conclusion.}
Combining the bounds for $\E|A_I|$, $\E|m_I|$, and the $k=0$ term, we obtain
\begin{align*}
\E|\phi_i^I - \Theta_i^I| \le \frac{C_A + C_B' + C_0}{I} = \frac{C_{\phi}}{I}.
\end{align*}
This completes the proof.
\end{proof}
\newpage

\subsection{Proof of Corollary \ref{cor:shapley-lower-bound}}
\label{subsec:proof-cor-shapley-lower-bound}
\begin{corollary}[Restate of Corollary \ref{cor:shapley-lower-bound}]
Under the assumptions of Theorem~\ref{thm:leading-term}, if $c_i\neq 0$, then $|\phi_i^I|=\Theta_{\mathbb P}\!\left(\frac{\log I}{I}\right).$
\end{corollary}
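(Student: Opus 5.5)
The plan is to combine the deterministic two-sided control of $\Theta_i^I$ with the $L^1$ bound of Theorem~\ref{thm:leading-term}, upgraded to a probability statement by Markov's inequality. Recall that $|\phi_i^I|=\Theta_{\mathbb P}(\log I/I)$ means two things. First, $|\phi_i^I|=O_{\mathbb P}(\log I/I)$. Second, for every $\eta>0$ there exist $\varepsilon>0$ and $I_0$ such that $\Prob\bigl(|\phi_i^I|<\varepsilon\log I/I\bigr)\le\eta$ for all $I\ge I_0$.

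\textbf{Step 1: deterministic bounds on the leading term.} Since $D_i$ is fixed, $c_i$, $n_i$ and $\bar n$ are deterministic constants. Lemma~\ref{lem:leading-lower-bound} gives
\[
|\Theta_i^I|\ge \frac{n_i|c_i|}{\bar n}\frac{\log I}{I}=:a\,\frac{\log I}{I},
\]
with $a>0$ because $c_i\neq 0$. Lemma~\ref{lem:harmonic-asymptotics} gives $H_{I-1}\le 1+\log(I-1)\le 2\log I$ for $I\ge 3$, so $|\Theta_i^I|\le 2a\log I/I$.

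\textbf{Step 2: control of the error term.} Set $E_I:=\phi_i^I-\Theta_i^I$. Theorem~\ref{thm:leading-term} gives $\E|E_I|\le C_\phi/I$. By Markov's inequality, for any $t>0$,
\[
\Prob\Bigl(|E_I|\ge t\frac{\log I}{I}\Bigr)\le \frac{C_\phi}{t\log I}\to 0 .
\]
Hence $E_I=o_{\mathbb P}(\log I/I)$.

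\textbf{Step 3: the two-sided bounds.} For the upper bound, $|\phi_i^I|\le |\Theta_i^I|+|E_I|\le (2a+1)\log I/I$ except on an event of probability at most $C_\phi/\log I$. This gives $O_{\mathbb P}(\log I/I)$. For the lower bound, the reverse triangle inequality gives $|\phi_i^I|\ge |\Theta_i^I|-|E_I|\ge a\log I/I-|E_I|$. Therefore
\[
\Prob\Bigl(|\phi_i^I|<\tfrac a2\frac{\log I}{I}\Bigr)\le \Prob\Bigl(|E_I|>\tfrac a2\frac{\log I}{I}\Bigr)\le \frac{2C_\phi}{a\log I}\to0 .
\]
So the lower bound holds with $\varepsilon=a/2$ and $I_0$ chosen large enough. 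As a byproduct, $|\phi_i^I|/|\Theta_i^I|\to 1$ in probability.

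\textbf{Main obstacle.} There is no real analytic difficulty. The point needing care is the lower bound, and in particular the fact that $c_i\neq0$ makes $a$ strictly positive. Beyond that, the Markov step only works because the error is $O(1/I)$ while the signal is of order $\log I/I$: the gap is just a $\log I$ factor, so the failure probabilities decay only like $1/\log I$. That rate is still sufficient for an in-probability statement.
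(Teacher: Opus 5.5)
Your proof is correct and follows essentially the paper's route: the paper divides by $a_I=\log I/I$ and uses $H_{I-1}/\log I\to 1$ together with the $O(1/I)$ bound of Theorem~\ref{thm:leading-term} to get $\phi_i^I/a_I\to n_ic_i/\bar n$ in $L^1$, concluding from the nonzero limit; you make the same argument explicit via the sandwich $a\log I/I\le|\Theta_i^I|\le 2a\log I/I$ and Markov's inequality. The only difference is that the paper records the explicit limit $n_ic_i/\bar n$, which it reuses in Corollary~\ref{cor:absolute-to-relative}, but your Markov bound also gives $E_I/\Theta_i^I\to 0$ in probability and hence that same limit.
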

\begin{proof}
Let
\begin{align*}
    a_I:=\frac{\log I}{I}.
\end{align*}
By the definition of the leading term,
\begin{align*}
    \Theta_i^I
    =
    \frac{n_i c_i}{\bar n}
    \frac{H_{I-1}}{I}.
\end{align*}
Therefore,
\begin{align*}
    \frac{\Theta_i^I}{a_I}
    =
    \frac{n_i c_i}{\bar n}
    \frac{H_{I-1}}{\log I}.
\end{align*}
Since by Lemma \ref{lem:harmonic-asymptotics}, $H_{I-1}/\log I\to 1$, we have
\begin{align*}
    \frac{\Theta_i^I}{a_I}
    \longrightarrow
    \frac{n_i c_i}{\bar n}.
\end{align*}
Moreover, Theorem~\ref{thm:leading-term} gives
\begin{align*}
    \mathbb E
    \left[
        \left|
            \phi_i^I-\Theta_i^I
        \right|
    \right]
    =
    O\!\left(\frac{1}{I}\right).
\end{align*}
Hence,
\begin{align*}
\begin{aligned}
    \mathbb E
    \left[
        \left|
            \frac{\phi_i^I}{a_I}
            -
            \frac{\Theta_i^I}{a_I}
        \right|
    \right]
    &=
    \frac{1}{a_I}
    \mathbb E
    \left[
        \left|
            \phi_i^I-\Theta_i^I
        \right|
    \right]  \\
    &=
    \frac{I}{\log I}
    O\!\left(\frac{1}{I}\right)
    =
    O\!\left(\frac{1}{\log I}\right)
    \longrightarrow 0.
\end{aligned}
\end{align*}
Combining the last two displays yields
\begin{align*}
    \frac{I}{\log I}\phi_i^I
    =
    \frac{\phi_i^I}{a_I}
    \xlongrightarrow{L^1}
    \frac{n_i c_i}{\bar n}.
\end{align*}

If $c_i\neq0$, the limit is nonzero. Therefore,
\begin{align*}
    |\phi_i^I|
    =
    \Theta_{\mathbb P}(a_I)
    =
    \Theta_{\mathbb P}\!\left(\frac{\log I}{I}\right).
\end{align*}
\end{proof}

\newpage 
\subsection{Proof of Corollary \ref{cor:absolute-to-relative}}
\label{subsec:proof-cor-absolute-to-relative}
\begin{corollary}[Restate of Corollary \ref{cor:absolute-to-relative}]
Assume the conditions of Theorem~\ref{thm:leading-term} and suppose $c_i\neq 0$. Let $\widehat\phi_i^I$ be any estimator of $\phi_i^I$ such that $\left|\widehat\phi_i^I-\phi_i^I \right| = o_{\mathbb{P}}\!\left(\frac{\log I}{I}\right)$. Then, $\left|\frac{\widehat\phi_i^I}{\phi_i^I}-1\right|\xlongrightarrow{\mathbb P}0$.
\end{corollary}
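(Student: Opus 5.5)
The plan is to write the relative error as a ratio of an absolute error to the Shapley value, and to control the two pieces separately. With $a_I:=\log I/I$,
\begin{align*}
\left|\frac{\widehat\phi_i^I}{\phi_i^I}-1\right|
=\frac{|\widehat\phi_i^I-\phi_i^I|}{|\phi_i^I|}
=\frac{|\widehat\phi_i^I-\phi_i^I|/a_I}{|\phi_i^I|/a_I}.
\end{align*}
By hypothesis the numerator tends to $0$ in probability. It is therefore enough to show that the denominator $|\phi_i^I|/a_I$ converges in probability to a strictly positive constant. The conclusion then follows from the continuous mapping theorem applied to the ratio, since $(x,y)\mapsto x/y$ is continuous at $(0,y_0)$ whenever $y_0>0$.

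The denominator is handled by what was established in the proof of Corollary~\ref{cor:shapley-lower-bound}. There, Theorem~\ref{thm:leading-term} and Lemma~\ref{lem:harmonic-asymptotics} were combined to show that $\phi_i^I/a_I\to n_i c_i/\bar n$ in $L^1$. That argument gives
\begin{align*}
\mathbb E\left|\frac{\phi_i^I}{a_I}-\frac{\Theta_i^I}{a_I}\right|=O\!\left(\frac{1}{\log I}\right),
\end{align*}
together with the deterministic limit $\Theta_i^I/a_I\to n_ic_i/\bar n$.

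Convergence in $L^1$ implies convergence in probability. Taking absolute values is continuous, so $|\phi_i^I|/a_I\xrightarrow{\mathbb P} L:=n_i|c_i|/\bar n$. Here $n_i\ge1$, $\bar n\in[1,n_{\max}]$, and $c_i\neq0$, so $L>0$.

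The only delicate step is ensuring the ratio is well defined, since $\phi_i^I$ could vanish on some event. I would handle this with an explicit event argument. Let $E_I:=\{|\phi_i^I|/a_I\ge L/2\}$. By the convergence just established, $\mathbb P(E_I)\to1$. Fix $\varepsilon>0$. On $E_I$ the relative error is at most $(2/L)\,|\widehat\phi_i^I-\phi_i^I|/a_I$. Hence
\begin{align*}
\mathbb P\!\left(\left|\frac{\widehat\phi_i^I}{\phi_i^I}-1\right|>\varepsilon\right)
\le \mathbb P(E_I^c)+\mathbb P\!\left(\frac{|\widehat\phi_i^I-\phi_i^I|}{a_I}>\frac{\varepsilon L}{2}\right).
\end{align*}
Both terms on the right tend to $0$: the first because $\mathbb P(E_I)\to1$, the second by the $o_{\mathbb P}(a_I)$ hypothesis. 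This completes the argument, and no further estimates are needed.
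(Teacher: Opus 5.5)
Your proposal is correct and follows essentially the same route as the paper: normalize numerator and denominator by $a_I=\log I/I$, use the convergence $\phi_i^I/a_I\to n_ic_i/\bar n$ from Corollary~\ref{cor:shapley-lower-bound}, and conclude that the ratio tends to $0$ in probability. The only difference is that the paper simply invokes Slutsky's theorem at this last step, whereas your event $E_I$ spells it out and explicitly handles the event where $\phi_i^I$ could vanish.
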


\begin{proof}
Let
\begin{align*}
    a_I:=\frac{\log I}{I}.
\end{align*}
By Corollary~\ref{cor:shapley-lower-bound},
\begin{align*}
    \frac{\phi_i^I}{a_I}
    \xlongrightarrow{\mathbb P}
    \frac{n_i c_i}{\bar n}.
\end{align*}
Since $c_i\neq0$, the limit is nonzero. Moreover, by assumption,
\begin{align*}
    \frac{\widehat\phi_i^I-\phi_i^I}{a_I}
    \xlongrightarrow{\mathbb P}0.
\end{align*}
Therefore,
\begin{align*}
    \left|
        \frac{\widehat\phi_i^I}{\phi_i^I}
        -
        1
    \right|
    =
    \frac{
        \left|
            (\widehat\phi_i^I-\phi_i^I)/a_I
        \right|
    }{
        \left|
            \phi_i^I/a_I
        \right|
    }
    \xlongrightarrow{\mathbb P}0,
\end{align*}
by Slutsky's theorem.
\end{proof}

\newpage
\subsection{Proof of Proposition \ref{prop:mc_gt_budgets}}
\label{subsec:proof-prop-mc-gt}
\begin{proposition}[Restate of Proposition \ref{prop:mc_gt_budgets}]
Let $\widehat\varphi_{i,\mathrm{MC}}^I$ be the permutation Monte Carlo Shapley estimator \cite{maleki2013} based on $m_I$ independent random permutations. Let $\widehat\varphi_{i,\mathrm{GT}}^I$ be the group-testing Shapley estimator of \cite{jia2019b}. Under the assumptions of Theorem~\ref{thm:leading-term} and with $c_i \neq 0$:
\begin{enumerate}
\item If $m_I \gg \frac{I^2}{\log^2 I}$, then
\begin{align*}
\left|\frac{\widehat{\phi}_{i,\mathrm{MC}}^I}{\phi_i^I} - 1\right| \xlongrightarrow{\mathbb{P}} 0.
\end{align*}
\item If there exist sequences $\varepsilon_I=o(\log I/I)$ and $\delta_I\to0$ such that the group-testing budget $T_I$ is large enough for the estimator to be an
$(\varepsilon_I,\delta_I)$-approximation of the Shapley vector $(\phi_j^I)_{j=1}^{I}$ in $\ell_2$-norm., then
\begin{align*}
\left|\frac{\widehat{\phi}_{i,\mathrm{GT}}^I}{\phi_i^I} - 1\right| \xlongrightarrow{\mathbb{P}} 0.
\end{align*}
\end{enumerate}
\end{proposition}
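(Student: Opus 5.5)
Both parts reduce to Corollary~\ref{cor:absolute-to-relative}. It suffices to show $|\widehat\phi_i^I-\phi_i^I| = o_{\mathbb P}(\log I/I)$ for each estimator; the relative-error conclusion then follows directly.

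\emph{Part 1 (permutation Monte Carlo).} We work conditionally on the realized game $(D_j)_{j\neq i}$. Given the game, the summands $\Delta_i(P_i^{\pi_b})$ are i.i.d. with mean $\phi_i^I$. We need a uniform bound on these marginal contributions. Since $\|\nabla F\|\le G$, the map $F$ is $G$-Lipschitz. All embeddings have norm at most $\kappa$, including the convention $\mu(\widehat P_\emptyset)=0$. Hence
$|\Delta_i(S)|\le G\|\mu_{S\cup\{i\}}-\mu_S\|\le 2G\kappa$ for every $S$. It follows that
\begin{align*}
\mathbb E\bigl[(\widehat\phi_{i,\mathrm{MC}}^I-\phi_i^I)^2 \,\big|\, (D_j)_{j\neq i}\bigr]\le \frac{4G^2\kappa^2}{m_I}.
\end{align*}
We apply the conditional Chebyshev inequality and then integrate over the game. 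For every $\eta>0$ this gives
\begin{align*}
\mathbb P\bigl(|\widehat\phi_{i,\mathrm{MC}}^I-\phi_i^I|>\eta \log I/I\bigr)\le \frac{4G^2\kappa^2 I^2}{\eta^2 m_I\log^2 I}.
\end{align*}
The right-hand side tends to $0$ precisely because $m_I\gg I^2/\log^2 I$. So the absolute error is $o_{\mathbb P}(\log I/I)$, and Corollary~\ref{cor:absolute-to-relative} concludes.

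A sharper variant would use Lemma~\ref{lem:alpha_bound} to show the conditional variance decays. This matches the faster decay seen in the experiments, but it is not needed here.

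\emph{Part 2 (group testing).} By assumption, with probability at least $1-\delta_I$ we have $\|\widehat s^I-\phi^I\|_2\le\varepsilon_I$. Since any single coordinate is bounded by the $\ell_2$ norm, on this event $|\widehat\phi_{i,\mathrm{GT}}^I-\phi_i^I|\le \varepsilon_I$.

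To conclude, fix $\eta>0$. Because $\varepsilon_I=o(\log I/I)$, we have $\varepsilon_I\le \eta\log I/I$ for all $I$ large enough. Therefore
\begin{align*}
\mathbb P\bigl(|\widehat\phi_{i,\mathrm{GT}}^I-\phi_i^I|>\eta\log I/I\bigr)\le\delta_I\to0.
\end{align*}
This is again the $o_{\mathbb P}(\log I/I)$ condition, and Corollary~\ref{cor:absolute-to-relative} applies.

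\emph{Main obstacle.} The main subtlety is the two sources of randomness: the random game and the estimator's internal randomness. These must be combined correctly.
\begin{itemize}
\item For Monte Carlo, the conditional Chebyshev bound is uniform over games, so integrating it over the game is immediate.
\item For group testing, the $(\varepsilon_I,\delta_I)$ guarantee must be read as holding for every realized game, or at least on average over games. That is the interpretation I would state explicitly.
\end{itemize}
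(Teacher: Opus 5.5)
Your proposal is correct and follows the paper's proof essentially step for step: the uniform bound $r=2G\kappa$ on marginal contributions, conditional Chebyshev integrated over the game for Monte Carlo, and for group testing the coordinate-versus-$\ell_2$ bound combined with $\varepsilon_I=o(\log I/I)$ and $\delta_I\to0$, each followed by Corollary~\ref{cor:absolute-to-relative}. The paper's one extra detail supports your ``uniform over realized games'' reading: it notes that the sufficient budget in Theorem~3 of \cite{jia2019b} depends on the game only through its utility range, which is at most $2G\kappa$ for every realization.
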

\begin{proof}
We prove that both estimators have absolute error
$o_{\mathbb P}(\log I/I)$. The relative-error convergence then follows
directly from Corollary~\ref{cor:absolute-to-relative}.

Throughout the proof, write
\begin{align*}
    a_I:=\frac{\log I}{I}.
\end{align*}
We first record a uniform boundedness fact. By Assumption~\ref{ass:smooth-utility},
$\|\nabla F(x)\|\le G$, hence $F$ is $G$-Lipschitz. Moreover, by
Assumption~\ref{ass:bounded-feature}, every nonempty empirical embedding has
norm at most $\kappa$, and by convention
$\mu(\widehat P_\emptyset)=0$. Therefore, for any two coalitions $S,T$,
\begin{align*}
    |v(D_S)-v(D_T)|
    =
    \left|
        F(\mu(\widehat P_{D_S}))
        -
        F(\mu(\widehat P_{D_T}))
    \right|
    \le
    2G\kappa .
\end{align*}
Thus all marginal contributions are uniformly bounded:
\begin{align*}
    \left|
        v(D_S\biguplus D_i)-v(D_S)
    \right|
    \le
    2G\kappa .
    \tag{D.9}
    \label{eq:uniform-marginal-bound}
\end{align*}
Set
\begin{align*}
    r:=2G\kappa .
\end{align*}

\paragraph{Permutation Monte Carlo Shapley.}
Let $\pi_1,\dots,\pi_{m_I}$ be independent uniformly random permutations of
$[I]$. For a permutation $\pi$, let $P_i^\pi$ be the set of predecessors
of $i$ in $\pi$. The permutation Monte Carlo Shapley estimator is
\begin{align*}
    \widehat\varphi_{i,\mathrm{MC}}^I
    :=
    \frac{1}{m_I}
    \sum_{b=1}^{m_I}
    \left[
        v(D_{P_i^{\pi_b}}\biguplus D_i)
        -
        v(D_{P_i^{\pi_b}})
    \right].
\end{align*}
Conditionally on the realized $I$-player game, the summands are independent
and have mean $\varphi_i^I$, by the permutation representation of the Shapley
value. Moreover, by \eqref{eq:uniform-marginal-bound}, their absolute value is
bounded by $r$, so their variance is at most $r^2$. Hence, conditionally on
the realized game,
\begin{align*}
    \mathrm{Var}
    \left(
        \widehat\varphi_{i,\mathrm{MC}}^I
        \,\middle|\,
        (D_j)_{j\neq i}
    \right)
    \le
    \frac{r^2}{m_I}.
\end{align*}
By Chebyshev's inequality, for any $\eta>0$,
\begin{align*}
\begin{aligned}
    \mathbb P
    \left(
        \left|
            \widehat\varphi_{i,\mathrm{MC}}^I-\varphi_i^I
        \right|
        >
        \eta a_I
        \,\middle|\,
        (D_j)_{j\neq i}
    \right)
    &\le
    \frac{r^2}{m_I\eta^2 a_I^2}.
\end{aligned}
\end{align*}
Taking expectation over the surrounding players gives the same unconditional
bound:
\begin{align*}
    \mathbb P
    \left(
        \left|
            \widehat\varphi_{i,\mathrm{MC}}^I-\varphi_i^I
        \right|
        >
        \eta a_I
    \right)
    \le
    \frac{r^2}{m_I\eta^2 a_I^2}.
\end{align*}
If $m_Ia_I^2\to\infty$, then the right-hand side converges to zero for every
$\eta>0$. Therefore,
\begin{align*}
    \widehat\varphi_{i,\mathrm{MC}}^I-\varphi_i^I
    =
    o_{\mathbb P}(a_I)
    =
    o_{\mathbb P}\!\left(\frac{\log I}{I}\right).
\end{align*}
By Corollary~\ref{cor:absolute-to-relative},
\begin{align*}
    \left|
        \frac{\widehat\varphi_{i,\mathrm{MC}}^I}{\varphi_i^I}
        -1
    \right|
    \xrightarrow{\mathbb P}0.
\end{align*}

\paragraph{Group-testing Shapley.}
We now consider the group-testing estimator of \cite{jia2019b}, whose definition is recalled in Appendix \ref{sec:missing-defs}, applied to the
$I$-player dataset valuation game.

To match the standard convention $U(\emptyset)=0$, define the normalized game
\begin{align*}
    U_I(S):=v(D_S)-v(\emptyset),
    \qquad S\subseteq[I].
\end{align*}
This normalization does not change any marginal contribution, hence it does not
change the Shapley values. Moreover, by the Lipschitz bound above, the range of
$U_I$ is at most $r=2G\kappa$.

The group-testing estimator proceeds as follows. Let
\begin{align*}
    Z_I:=2\sum_{k=1}^{I-1}\frac{1}{k},
\end{align*}
and define, for $k=1,\dots,I-1$,
\begin{align*}
    q_I(k)
    :=
    \frac{1}{Z_I}
    \left(
        \frac{1}{k}
        +
        \frac{1}{I-k}
    \right).
\end{align*}
For each test $t=1,\dots,T_I$, draw $K_t\sim q_I$, then draw a subset
$A_t\subseteq[I]$ uniformly among subsets of cardinality $K_t$, and evaluate
$U_I(A_t)$. Let
\begin{align*}
    \beta_{t,j}:=\mathbf 1\{j\in A_t\}.
\end{align*}
For each pair $p,q\in[I]$, estimate the Shapley difference
$\varphi_p^I-\varphi_q^I$ by
\begin{align*}
    \widehat\Delta_{p,q}
    :=
    \frac{Z_I}{T_I}
    \sum_{t=1}^{T_I}
    U_I(A_t)(\beta_{t,p}-\beta_{t,q}).
\end{align*}
The estimated Shapley vector
$\widehat s^I=(\widehat s_1^I,\dots,\widehat s_I^I)$ is then obtained by
solving the feasibility problem
\begin{align*}
    \sum_{j=1}^I \widehat s_j^I
    =
    U_I([I]),
    \qquad
    \left|
        (\widehat s_p^I-\widehat s_q^I)
        -
        \widehat\Delta_{p,q}
    \right|
    \le
    \frac{\varepsilon_I}{2\sqrt I},
    \quad
    \forall p,q\in[I].
\end{align*}
The group-testing estimate of the fixed player is
\begin{align*}
    \widehat\varphi_{i,\mathrm{GT}}^I:=\widehat s_i^I.
\end{align*}

We now apply the guarantee of \cite{jia2019b}. Define
\begin{align*}
    q_{\mathrm{tot},I}
    :=
    \frac{I-2}{I}q_I(1)
    +
    \sum_{k=2}^{I-1}
    q_I(k)
    \left[
        1+
        \frac{2k(k-I)}{I(I-1)}
    \right],
\end{align*}
and
\begin{align*}
    h(u):=(1+u)\log(1+u)-u.
\end{align*}
Theorem~3 of \cite{jia2019b}, applied with $N=I$, states that if
\begin{align*}
    T_I
    \ge
    \frac{
        8\log\!\left(\frac{I(I-1)}{2\delta_I}\right)
    }{
        (1-q_{\mathrm{tot},I}^2)
        h\!\left(
            \frac{
                \varepsilon_I
            }{
                Z_I r\sqrt I(1-q_{\mathrm{tot},I}^2)
            }
        \right)
    },
    \tag{D.10}
    \label{eq:jia-gt-budget}
\end{align*}
then, conditionally on the realized game,
\begin{align*}
    \mathbb P
    \left(
        \left\|
            \widehat s^I-\varphi^I
        \right\|_2
        \le
        \varepsilon_I
        \,\middle|\,
        (D_j)_{j\neq i}
    \right)
    \ge
    1-\delta_I,
\end{align*}
where
\begin{align*}
    \varphi^I
    :=
    (\varphi_1^I,\dots,\varphi_I^I).
\end{align*}
The bound is uniform over games whose utility range is at most $r$, so taking
expectation over the surrounding players yields
\begin{align*}
    \mathbb P
    \left(
        \left\|
            \widehat s^I-\varphi^I
        \right\|_2
        >
        \varepsilon_I
    \right)
    \le
    \delta_I.
\end{align*}
In particular,
\begin{align*}
    \left|
        \widehat\varphi_{i,\mathrm{GT}}^I-\varphi_i^I
    \right|
    \le
    \left\|
        \widehat s^I-\varphi^I
    \right\|_2.
\end{align*}
Therefore, for any $\eta>0$,
\begin{align*}
\begin{aligned}
    \mathbb P
    \left(
        \left|
            \widehat\varphi_{i,\mathrm{GT}}^I-\varphi_i^I
        \right|
        >
        \eta a_I
    \right)
    &\le
    \mathbb P
    \left(
        \left\|
            \widehat s^I-\varphi^I
        \right\|_2
        >
        \eta a_I
    \right).
\end{aligned}
\end{align*}
Since $\varepsilon_I=o(a_I)$, for all sufficiently large $I$,
$\varepsilon_I\le \eta a_I$. Hence, for all sufficiently large $I$,
\begin{align*}
    \mathbb P
    \left(
        \left|
            \widehat\varphi_{i,\mathrm{GT}}^I-\varphi_i^I
        \right|
        >
        \eta a_I
    \right)
    \le
    \delta_I.
\end{align*}
Because $\delta_I\to0$, we obtain
\begin{align*}
    \widehat\varphi_{i,\mathrm{GT}}^I-\varphi_i^I
    =
    o_{\mathbb P}(a_I)
    =
    o_{\mathbb P}\!\left(\frac{\log I}{I}\right).
\end{align*}
By Corollary~\ref{cor:absolute-to-relative},
\begin{align*}
    \left|
        \frac{\widehat\varphi_{i,\mathrm{GT}}^I}{\varphi_i^I}
        -1
    \right|
    \xrightarrow{\mathbb P}0.
\end{align*}
This completes the proof.
\end{proof}

\newpage

\subsection{Proof of Theorem \ref{thm:leading_order_consistent}}
\label{subsec:proof-thm-leading-order-estimator}
\begin{theorem}[Restate of Theorem \ref{thm:leading_order_consistent}]
Under Assumptions~\ref{ass:population}--\ref{ass:smooth-utility} and with $c_i \neq 0$, every leading‑order estimator satisfies
\begin{align*}
\left|\frac{\widehat{\phi}_i^I}{\phi_i^I} - 1\right| \xlongrightarrow{\mathbb{P}} 0.
\end{align*}
\end{theorem}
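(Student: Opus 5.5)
The plan is to show that every leading-order estimator is $O(1/I)$-close to $\Theta_i^I$ in $L^1$, and then pass through the absolute-to-relative criterion. The key point is that, by the definition of a leading-order estimator, $\widehat\phi_i^I$ has exactly the structure of $\Theta_i^I$ with $c_i$ replaced by $\widehat\Delta_{k,I}^{(i)}$. Subtracting the two gives
\begin{align*}
\widehat\phi_i^I-\Theta_i^I
=
\frac{1}{I}\sum_{k=1}^{I-1}\frac{n_i}{k\bar n}\bigl(\widehat\Delta_{k,I}^{(i)}-c_i\bigr)
+\frac{\widehat\Delta_{0,I}^{(i)}}{I}.
\end{align*}

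First I will bound the expectation of this difference. Applying the triangle inequality and taking expectations gives
\begin{align*}
\mathbb E\bigl|\widehat\phi_i^I-\Theta_i^I\bigr|
\le
\frac{n_i}{\bar n I}\sum_{k=1}^{I-1}\frac1k\,\mathbb E\bigl|\widehat\Delta_{k,I}^{(i)}-c_i\bigr|
+\frac{\mathbb E|\widehat\Delta_{0,I}^{(i)}|}{I}.
\end{align*}
Condition (C2) bounds the sum and condition (C1) bounds the last term. Since $n_i$ and $\bar n\ge1$ are fixed constants, this yields
\begin{align*}
\mathbb E\bigl|\widehat\phi_i^I-\Theta_i^I\bigr|\le \frac{(n_i/\bar n)C_\Delta+C_0}{I}.
\end{align*}

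Next I will combine this with Theorem~\ref{thm:leading-term}, which gives $\mathbb E|\phi_i^I-\Theta_i^I|=O(1/I)$. By the triangle inequality, $\mathbb E|\widehat\phi_i^I-\phi_i^I|=O(1/I)$.

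Markov's inequality then converts this $L^1$ bound into a bound in probability. For any $\eta>0$,
\begin{align*}
\mathbb P\bigl(|\widehat\phi_i^I-\phi_i^I|>\eta\log I/I\bigr)\le \frac{O(1/I)}{\eta\log I/I}=O\!\left(\frac{1}{\eta\log I}\right)\to0.
\end{align*}
This means $|\widehat\phi_i^I-\phi_i^I|=o_{\mathbb P}(\log I/I)$, and since $c_i\neq0$, Corollary~\ref{cor:absolute-to-relative} gives the relative-error convergence.

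I expect no serious obstacle in this argument. The only care needed is in the expectations: (C1) and (C2) are stated in joint expectation over the surrounding players and any internal randomness of the estimator, so the Markov bound must be taken unconditionally. The remaining check is that the $O(1/I)$ constants coming from Theorem~\ref{thm:leading-term} and from (C1)--(C2) do not depend on $I$, which holds by their statements.
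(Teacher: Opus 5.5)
Your proposal is correct and follows the paper's argument: you subtract $\Theta_i^I$ from the leading-order representation, apply the triangle inequality, and use (C1)--(C2) to obtain $\mathbb{E}|\widehat{\phi}_i^I-\Theta_i^I|\le C/I$. The paper's written proof stops at that bound, so your final step (combining with Theorem~\ref{thm:leading-term}, applying Markov's inequality at scale $\log I/I$, and invoking Corollary~\ref{cor:absolute-to-relative}) spells out exactly the link the paper leaves implicit.
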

\begin{proof}
From the representation of $\widehat{\phi}_i^I$, we have
\begin{align*}
\widehat{\phi}_i^I - \Theta_i^I = \frac{1}{I}\sum_{k=1}^{I-1}\frac{n_i}{k\bar\mu}\bigl(\widehat{\Delta}_{k,I}^{(i)} - c_i\bigr) + \frac{\widehat{\Delta}_{0,I}^{(i)}}{I}.
\end{align*}
Taking absolute values, expectations, and using the triangle inequality, we get
\begin{align*}
\mathbb{E}|\widehat{\phi}_i^I - \Theta_i^I|
\le \frac{n_i}{\bar\mu I}\sum_{k=1}^{I-1}\frac{1}{k}\mathbb{E}|\widehat{\Delta}_{k,I}^{(i)} - c_i|
   + \frac{\mathbb{E}|\widehat{\Delta}_{0,I}^{(i)}|}{I}
\le \frac{n_i}{\bar\mu I}C_{\Delta} + \frac{C_0}{I}
   = \frac{C}{I}.
\end{align*}
\end{proof}

\newpage
\subsection{Proof of Proposition \ref{prop:du_strat}}
\label{subsec:proof-prop-du-strat}
We first state two lemmas before proving Proposition \ref{prop:du_strat}.
\begin{lemma}[DU-Shapley summability bound]
\label{lem:du-close-leading}
Let $\widehat\phi_{i,\mathrm{DU}}^I$ denote the DU-Shapley estimator defined by
\begin{align*}
    \widehat\phi_{i,\mathrm{DU}}^I
    :=
    \frac{1}{I}
    \sum_{k=0}^{I-1}
    \widetilde\Delta_{k,I},
    \qquad
    \widetilde\Delta_{k,I}
    :=
    v(\widetilde D_{k,I}\uplus D_i)-v(\widetilde D_{k,I}).
\end{align*}
where, $\widetilde D_{k,I}$ is a multiset of $m_{k,I} := \left\lfloor k \frac{1}{I-1}\sum_{j\in [I]\setminus \{i\}} n_j\right\rfloor$
datapoints sampled uniformly without replacement from the pooled multiset $D_{[I]\setminus\{i\}}$. 
Under the assumptions of Theorem~\ref{thm:leading-term}, there exists a
constant $C_{\Delta,\mathrm{DU}}<\infty$, independent of $I$, such that
\begin{align}
    \sum_{k=1}^{I-1}
    \mathbb E
    \left[
        \left|
            \widetilde\Delta_{k,I}
            -
            \frac{n_i}{k\bar n}c_i
        \right|
    \right]
    \le
    C_{\Delta,\mathrm{DU}}.
    \label{eq:du-summability}
\end{align}

\end{lemma}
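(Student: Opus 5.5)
The plan is to treat each pseudo-coalition $\widetilde D_{k,I}$ exactly as the proof of Theorem~\ref{thm:leading-term} treats a player coalition $S_k$. For each $k\ge1$ we compare $\widetilde\Delta_{k,I}$ to $\widetilde\alpha_k c_i$, where $\widetilde\alpha_k := n_i/(m_{k,I}+n_i)$. We then compare $\widetilde\alpha_k c_i$ to $n_i c_i/(k\bar n)$. The goal is to bound each of the two pieces, in expectation, by a constant times $k^{-3/2}$ or $k^{-2}$, which is summable uniformly in $I$.

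\textbf{Preliminary facts.} Since every $n_j\ge1$, we have $\widehat n_{-i}^I\ge1$. Hence $m_{k,I}=\lfloor k\widehat n_{-i}^I\rfloor\ge k\ge1$, so $\widetilde D_{k,I}$ is nonempty. Also $m_{k,I}\le (I-1)\widehat n_{-i}^I=|D_{-i}^I|$, so sampling without replacement is well defined. Lemma~\ref{lem:update} applies verbatim to the multiset $\widetilde D_{k,I}$, with $n_S$ replaced by $m_{k,I}$. As in Lemma~\ref{lem:alpha_bound}, this gives $\widetilde\alpha_k\le n_i/k$.

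\textbf{Step 1 (Taylor part).} The proof of Lemma~\ref{lem:marginal_bound} uses only that the base embedding has norm at most $\kappa$. Repeating it with $x=\widetilde\mu_k:=\mu(\widehat P_{\widetilde D_{k,I}})$ gives
\[
|\widetilde\Delta_{k,I}-\widetilde\alpha_k c_i|\le \widetilde\alpha_k(G+2M\kappa)\|\widetilde\mu_k-\mu^\star\|+2M\kappa^2\widetilde\alpha_k^2 .
\]
We split $\|\widetilde\mu_k-\mu^\star\|\le\|\widetilde\mu_k-\mu_{-i}\|+\|\mu_{-i}-\mu^\star\|$, where $\mu_{-i}$ is the embedding of the full pool $D_{-i}^I$.

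For the first term, we condition on the surrounding players and apply Corollary~\ref{cor:schneider_exp} with $d=\kappa$ and $n=m_{k,I}\ge k$. This gives a conditional expectation of at most $\kappa\sqrt{8\pi/k}$.

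For the second term, Lemma~\ref{lem:coalition-embedding-concentration} with the $I-1$ surrounding players gives at most $2n_{\max}\kappa/\sqrt{I-1}\le 2n_{\max}\kappa/\sqrt k$, since $k\le I-1$.

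Using the deterministic bound $\widetilde\alpha_k\le n_i/k$, Step 1 contributes at most $C(n_i/k^{3/2}+n_i^2/k^2)$ in expectation.

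\textbf{Step 2 (coefficient part).} First, $|n_i/(m+n_i)-n_i/m|\le n_i^2/k^2$, as in Step 3.1 of the proof of Theorem~\ref{thm:leading-term}. Next, write $m=k\widehat n_{-i}^I-\theta$ with $\theta\in[0,1)$. Then
\[
\left|\tfrac1m-\tfrac{1}{k\bar n}\right|\le\frac{k|\widehat n_{-i}^I-\bar n|+1}{k^2\bar n}.
\]
By Cauchy--Schwarz, $\mathbb E|\widehat n_{-i}^I-\bar n|\le n_{\max}/\sqrt{I-1}\le n_{\max}/\sqrt k$. Multiplying by $n_i$ and by $|c_i|\le 2G\kappa$ yields a bound of the form $C'(k^{-3/2}+k^{-2})$.

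\textbf{Conclusion.} Summing Steps 1 and 2 over $k$ gives a constant $C_{\Delta,\mathrm{DU}}$ built from $\zeta(3/2)$ and $\zeta(2)$, independent of $I$. This gives \eqref{eq:du-summability}. Then set $\widehat\Delta^{(i)}_{k,I}:=(k\bar n/n_i)\widetilde\Delta_{k,I}$; this turns \eqref{eq:du-summability} into condition (C2) up to the factor $\bar n/n_i$. Condition (C1) follows from the bound $|\widetilde\Delta_{0,I}|\le G\kappa$.

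The step I expect to be most delicate is the concentration in Step 1. The pseudo-coalition is a without-replacement subsample of a pool that is itself random, and its size $m_{k,I}$ is random too. The plan is to condition on the surrounding players first, so that Corollary~\ref{cor:schneider_exp} applies with the bound $m_{k,I}\ge k$, and only then average. The pool-level deviation decays in $I$ rather than in $k$; it is converted to a $k^{-1/2}$ rate via $k\le I-1$.
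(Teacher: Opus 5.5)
Your proposal is correct and follows essentially the same route as the paper: the same split into a Taylor part and a coefficient part, the same triangle inequality through $\mu_{-i}^I$ using the without-replacement corollary (after conditioning on the surrounding players) and Lemma~\ref{lem:coalition-embedding-concentration}, and the same variance bound on $\widehat n_{-i}^I$. The differences are cosmetic. You absorb the pool-level $1/\sqrt{I-1}$ terms into $1/\sqrt{k}$ via $k\le I-1$, and you use a two-term rather than three-term split of the coefficient error; the paper instead sums those terms directly and bounds $\log I/\sqrt{I-1}$ by a constant.
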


\begin{proof}
For $k\ge 1$, let
\begin{align*}
    \widetilde\mu_{k,I}
    :=
    \mu(\widehat P_{\widetilde D_{k,I}}),
    \qquad
    \widetilde\alpha_{k,I}
    :=
    \frac{n_i}{m_{k,I}+n_i},
    \qquad
    m_{k,I}:=\lfloor k\widehat n_{-i}^I\rfloor .
\end{align*}
Since $\widehat n_{-i}^I\ge 1$, we have $m_{k,I}\ge k$. Hence
\begin{align*}
    \widetilde\alpha_{k,I}
    \le
    \min\left\{1,\frac{n_i}{k}\right\},
    \qquad
    \widetilde\alpha_{k,I}^2
    \le
    \min\left\{1,\frac{n_i^2}{k^2}\right\}.
\end{align*}

We decompose
\begin{align*}
    \widetilde\Delta_{k,I}
    -
    \frac{n_i}{k\bar n}c_i
    =
    \left(
        \widetilde\Delta_{k,I}
        -
        \widetilde\alpha_{k,I}c_i
    \right)
    +
    \left(
        \widetilde\alpha_{k,I}
        -
        \frac{n_i}{k\bar n}
    \right)c_i.
\end{align*}
Therefore,
\begin{align*}
\begin{aligned}
    \mathbb E
    \left[
        \left|
            \widetilde\Delta_{k,I}
            -
            \frac{n_i}{k\bar n}c_i
        \right|
    \right]
    &\le
    \mathbb E
    \left[
        \left|
            \widetilde\Delta_{k,I}
            -
            \widetilde\alpha_{k,I}c_i
        \right|
    \right]
    +
    |c_i|
    \mathbb E
    \left[
        \left|
            \widetilde\alpha_{k,I}
            -
            \frac{n_i}{k\bar n}
        \right|
    \right].
\end{aligned}
\end{align*}

For the first term, applying the same first-order marginal bound as
Lemma~\ref{lem:marginal_bound}, with
$(\mu_S,\alpha_S)$ replaced by
$(\widetilde\mu_{k,I},\widetilde\alpha_{k,I})$, gives
\begin{align*}
    \left|
        \widetilde\Delta_{k,I}
        -
        \widetilde\alpha_{k,I}c_i
    \right|
    \le
    \widetilde\alpha_{k,I}(G+2M\kappa)
    \|\widetilde\mu_{k,I}-\mu^\star\|
    +
    2M\kappa^2\widetilde\alpha_{k,I}^2.
\end{align*}
Moreover, by the triangle inequality,
\begin{align*}
    \|\widetilde\mu_{k,I}-\mu^\star\|
    \le
    \|\widetilde\mu_{k,I}-\mu_{-i}^I\|
    +
    \|\mu_{-i}^I-\mu^\star\|,
\end{align*}
where $\mu_{-i}^I:=\mu(\widehat P_{D_{-i}^I})$. By the
without-replacement concentration corollary and $m_{k,I}\ge k$,
\begin{align*}
    \mathbb E\|\widetilde\mu_{k,I}-\mu_{-i}^I\|
    \le
    \kappa\sqrt{\frac{8\pi}{k}}.
\end{align*}
By Lemma~\ref{lem:coalition-embedding-concentration} applied to the coalition
$[I]\setminus\{i\}$,
\begin{align*}
    \mathbb E\|\mu_{-i}^I-\mu^\star\|
    \le
    \frac{2n_{\max}\kappa}{\sqrt{I-1}}.
\end{align*}
Thus, for constants $C_1,C_2<\infty$,
\begin{align*}
    \mathbb E\|\widetilde\mu_{k,I}-\mu^\star\|
    \le
    \frac{C_1}{\sqrt{k}}
    +
    \frac{C_2}{\sqrt{I-1}}.
\end{align*}
Consequently,
\begin{align*}
\begin{aligned}
\mathbb E
\left[
    \left|
        \widetilde\Delta_{k,I}
        -
        \widetilde\alpha_{k,I}c_i
    \right|
\right]
&\le
C
\min\left\{1,\frac{n_i}{k}\right\}
\left(
    \frac{1}{\sqrt{k}}
    +
    \frac{1}{\sqrt{I-1}}
\right)
+
C
\min\left\{1,\frac{n_i^2}{k^2}\right\}.
\end{aligned}
\end{align*}
Summing over $k$ yields
\begin{align*}
    \sum_{k=1}^{I-1}
    \mathbb E
    \left[
        \left|
            \widetilde\Delta_{k,I}
            -
            \widetilde\alpha_{k,I}c_i
        \right|
    \right]
    \le
    \widetilde C_A,
\end{align*}
because the sums
\begin{align*}
    \sum_{k\ge1}
    \min\left\{1,\frac{n_i}{k}\right\}k^{-1/2},
    \qquad
    \sum_{k\ge1}
    \min\left\{1,\frac{n_i^2}{k^2}\right\}
\end{align*}
are finite, and
\begin{align*}
    \frac{1}{\sqrt{I-1}}
    \sum_{k=1}^{I-1}
    \min\left\{1,\frac{n_i}{k}\right\}
    \le
    C\frac{1+\log I}{\sqrt{I-1}}
    \le C.
\end{align*}

For the second term, since
\begin{align*}
    \left|
        \widetilde\alpha_{k,I}
        -
        \frac{n_i}{k\bar n}
    \right|
    \le
    \left|
        \frac{n_i}{m_{k,I}+n_i}
        -
        \frac{n_i}{k\widehat n_{-i}^I+n_i}
    \right|
    +
    \left|
        \frac{n_i}{k\widehat n_{-i}^I+n_i}
        -
        \frac{n_i}{k\bar n+n_i}
    \right|
    +
    \left|
        \frac{n_i}{k\bar n+n_i}
        -
        \frac{n_i}{k\bar n}
    \right|,
\end{align*}
using $|m_{k,I}-k\widehat n_{-i}^I|\le 1$, $m_{k,I}\ge k$, and
$\widehat n_{-i}^I\ge 1$, we get
\begin{align*}
    \mathbb E
    \left[
        \left|
            \widetilde\alpha_{k,I}
            -
            \frac{n_i}{k\bar n}
        \right|
    \right]
    \le
    \frac{C}{k^2}
    +
    \frac{C}{k\sqrt{I-1}}.
\end{align*}
Indeed,
\begin{align*}
    \mathbb E|\widehat n_{-i}^I-\bar n|
    \le
    \sqrt{\operatorname{Var}(\widehat n_{-i}^I)}
    \le
    \frac{n_{\max}}{\sqrt{I-1}}.
\end{align*}
Hence
\begin{align*}
    \sum_{k=1}^{I-1}
    \mathbb E
    \left[
        \left|
            \widetilde\alpha_{k,I}
            -
            \frac{n_i}{k\bar n}
        \right|
    \right]
    \le
    \widetilde C_B
\end{align*}
for a constant $\widetilde C_B<\infty$ independent of $I$. Since
\begin{align*}
    |c_i|
    =
    |\langle\nabla F(\mu^\star),\mu_i-\mu^\star\rangle|
    \le
    2G\kappa,
\end{align*}
we obtain
\begin{align*}
    \sum_{k=1}^{I-1}
    \mathbb E
    \left[
        \left|
            \widetilde\Delta_{k,I}
            -
            \frac{n_i}{k\bar n}c_i
        \right|
    \right]
    \le
    \widetilde C_A+2G\kappa\,\widetilde C_B
    =:
    C_{\Delta,\mathrm{DU}}.
\end{align*}
This proves \eqref{eq:du-summability}.
\end{proof}
\begin{lemma}[Stratified Monte Carlo summability bound]
\label{lem:strat-close-leading}
Let $\widehat\phi_{i,\mathrm{strat}}^I$ be the stratified Monte Carlo Shapley estimator defined by
\begin{align*}
    \widehat\phi_{i,\mathrm{strat}}^I
    :=
    \frac{1}{I}
    \sum_{k=0}^{I-1}
    \frac{1}{m_{k,I}}
    \sum_{b=1}^{m_{k,I}}
    \Delta_i(S_{k,b}),
\end{align*}
where $m_{k,I}\ge 1$ and $S_{k,b}$ is sampled uniformly among subsets of
$[I]\setminus\{i\}$ of size $k$. Under the assumptions of
Theorem~\ref{thm:leading-term}, there exists a constant
$C_{\Delta,\mathrm{strat}}<\infty$, independent of $I$, such that
\begin{align}
    \sum_{k=1}^{I-1}
    \mathbb E
    \left[
        \left|
            \frac{1}{m_{k,I}}
            \sum_{b=1}^{m_{k,I}}
            \Delta_i(S_{k,b})
            -
            \frac{n_i}{k\bar n}c_i
        \right|
    \right]
    \le
    C_{\Delta,\mathrm{strat}}.
    \label{eq:strat-summability}
\end{align}
\end{lemma}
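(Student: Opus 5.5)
The plan is to reduce to a per-sample bound and then reuse the summability estimates from Steps 2 and 3 of the proof of Theorem~\ref{thm:leading-term}. Write $\bar\Delta_{k}:=\frac{1}{m_{k,I}}\sum_{b=1}^{m_{k,I}}\Delta_i(S_{k,b})$. By the triangle inequality,
\begin{align*}
\mathbb E\left|\bar\Delta_k-\frac{n_i}{k\bar n}c_i\right|
\le \frac{1}{m_{k,I}}\sum_{b=1}^{m_{k,I}}\mathbb E\left|\Delta_i(S_{k,b})-\frac{n_i}{k\bar n}c_i\right|
= \mathbb E\left|\Delta_i(S_k)-\frac{n_i}{k\bar n}c_i\right|.
\end{align*}
The equality holds because each $S_{k,b}$ is, marginally over both the sampling and the surrounding players, a uniformly chosen $k$-subset of i.i.d.\ players. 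So the averaging over $b$ costs nothing, and no lower bound on $m_{k,I}$ beyond $m_{k,I}\ge1$ is needed. It therefore suffices to show
\begin{align*}
\sum_{k=1}^{I-1}\mathbb E\left|\Delta_i(S_k)-\frac{n_i}{k\bar n}c_i\right|\le C
\end{align*}
with $C$ independent of $I$.

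For this, split the error as
\begin{align*}
\Delta_i(S_k)-\frac{n_i}{k\bar n}c_i=\bigl(\Delta_i(S_k)-\alpha_{S_k}c_i\bigr)+\Bigl(\alpha_{S_k}-\frac{n_i}{k\bar n}\Bigr)c_i .
\end{align*}

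\textbf{First piece.} Apply Lemma~\ref{lem:marginal_bound}, then Lemma~\ref{lem:alpha_bound} and Lemma~\ref{lem:coalition-embedding-concentration}. This gives a bound
\[
\min\{1,n_i/k\}\,\frac{2n_{\max}\kappa(G+2M\kappa)}{\sqrt k}+2M\kappa^2\min\{1,n_i^2/k^2\}.
\]
Step~2 of the proof of Theorem~\ref{thm:leading-term} already shows this is summable in $k$ (total at most $C_A$). Lemma~\ref{lem:coalition-embedding-concentration} applies to $S_k$ because, given its size, $S_k$ is independent of the players' data, so $\mu_{S_k}$ has the law of a fixed $k$-coalition embedding.

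\textbf{Second piece.} This is where the argument must differ slightly from Theorem~\ref{thm:leading-term}. There, Step~3 bounded $|\mathbb E[\alpha_{S_k}]-n_i/(k\bar n)|$, with the absolute value outside the expectation. Here we need $\mathbb E|\alpha_{S_k}-n_i/(k\bar n)|$, with the absolute value inside. However, Steps~3.1 and~3.2 in fact bounded the pointwise quantities $|A_k|\le n_i^2/k^2$ and $\mathbb E|1/N_{S_k}-1/(k\bar n)|\le n_{\max}/(\bar n k^{3/2})$. These already control the expectation of the absolute value, so the same argument gives
\[
\mathbb E\Bigl|\alpha_{S_k}-\frac{n_i}{k\bar n}\Bigr|\le \frac{n_i^2}{k^2}+\frac{n_in_{\max}}{\bar n\,k^{3/2}},
\]
which is summable to $C_B$. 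Multiplying by $|c_i|\le 2G\kappa$ and adding the first piece gives $C_{\Delta,\mathrm{strat}}=C_A+2G\kappa C_B$.

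The main point to check carefully is the exchangeability claim. Because the $S_{k,b}$ are drawn independently of $(D_j)$, the law of $(n_{S_{k,b}},\mu_{S_{k,b}})$ coincides with that for a fixed coalition of $k$ i.i.d.\ players. This justifies invoking Lemma~\ref{lem:coalition-embedding-concentration} and the variance computation for $N_{S_k}$ unconditionally.
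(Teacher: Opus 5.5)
Your proposal is correct and follows essentially the same route as the paper: reduce via the triangle inequality to a single uniformly drawn $k$-coalition, split into $(\Delta_i(S_k)-\alpha_{S_k}c_i)+(\alpha_{S_k}-\frac{n_i}{k\bar n})c_i$, and reuse the summability bounds from the proof of Theorem~\ref{thm:leading-term} to obtain $C_A+2G\kappa C_B$. Your remarks that the absolute value must sit inside the expectation and that the sampled coalitions are independent of the data are points the paper uses implicitly, and you handle both correctly.
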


\begin{proof}
By Jensen's inequality,
\begin{align*}
\begin{aligned}
&\mathbb E
\left[
    \left|
        \frac{1}{m_{k,I}}
        \sum_{b=1}^{m_{k,I}}
        \Delta_i(S_{k,b})
        -
        \frac{n_i}{k\bar n}c_i
    \right|
\right] \\
&\qquad\le
\frac{1}{m_{k,I}}
\sum_{b=1}^{m_{k,I}}
\mathbb E
\left[
    \left|
        \Delta_i(S_{k,b})
        -
        \frac{n_i}{k\bar n}c_i
    \right|
\right] \\
&\qquad=
\mathbb E_{S_k}
\left[
    \left|
        \Delta_i(S_k)
        -
        \frac{n_i}{k\bar n}c_i
    \right|
\right].
\end{aligned}
\end{align*}
Let
\begin{align*}
    \alpha_{S_k}:=\frac{n_i}{N_{S_k}+n_i}.
\end{align*}
Then
\begin{align*}
    \Delta_i(S_k)-\frac{n_i}{k\bar n}c_i
    =
    \bigl(\Delta_i(S_k)-\alpha_{S_k}c_i\bigr)
    +
    \left(
        \alpha_{S_k}-\frac{n_i}{k\bar n}
    \right)c_i.
\end{align*}
By Lemma~\ref{lem:marginal_bound},
\begin{align*}
    |\Delta_i(S_k)-\alpha_{S_k}c_i|
    \le
    \alpha_{S_k}(G+2M\kappa)
    \|\mu_{S_k}-\mu^\star\|
    +
    2M\kappa^2\alpha_{S_k}^2.
\end{align*}
Using Lemmas~\ref{lem:alpha_bound} and
\ref{lem:coalition-embedding-concentration}, the same summability argument as
in the proof of Theorem~\ref{thm:leading-term} gives
\begin{align*}
    \sum_{k=1}^{I-1}
    \mathbb E_{S_k}
    \left[
        \left|
            \Delta_i(S_k)-\alpha_{S_k}c_i
        \right|
    \right]
    \le
    C_A.
\end{align*}
Moreover,
\begin{align*}
    \mathbb E_{S_k}
    \left[
        \left|
            \alpha_{S_k}-\frac{n_i}{k\bar n}
        \right|
    \right]
    \le
    \frac{n_i^2}{k^2}
    +
    \frac{n_i n_{\max}}{\bar n\,k^{3/2}}.
\end{align*}
Indeed, this follows by writing
\begin{align*}
\left|
    \alpha_{S_k}-\frac{n_i}{k\bar n}
\right|
\le
\left|
    \frac{n_i}{N_{S_k}+n_i}
    -
    \frac{n_i}{N_{S_k}}
\right|
+
\left|
    \frac{n_i}{N_{S_k}}
    -
    \frac{n_i}{k\bar n}
\right|,
\end{align*}
using $N_{S_k}\ge k$, and applying
\begin{align*}
    \mathbb E_{S_k}|N_{S_k}-k\bar n|
    \le
    \sqrt{\operatorname{Var}(N_{S_k})}
    \le
    n_{\max}\sqrt{k}.
\end{align*}
Therefore,
\begin{align*}
    \sum_{k=1}^{I-1}
    \mathbb E_{S_k}
    \left[
        \left|
            \alpha_{S_k}-\frac{n_i}{k\bar n}
        \right|
    \right]
    \le
    C_B
\end{align*}
for a constant $C_B<\infty$ independent of $I$. Since
$|c_i|\le 2G\kappa$, we obtain
\begin{align*}
    \sum_{k=1}^{I-1}
    \mathbb E
    \left[
        \left|
            \frac{1}{m_{k,I}}
            \sum_{b=1}^{m_{k,I}}
            \Delta_i(S_{k,b})
            -
            \frac{n_i}{k\bar n}c_i
        \right|
    \right]
    \le
    C_A+2G\kappa C_B
    =:
    C_{\Delta,\mathrm{strat}}.
\end{align*}
This proves \eqref{eq:strat-summability}.
\end{proof}
\begin{proposition}[Restate of Proposition \ref{prop:du_strat}]
Let $\widehat{\phi}_{i,\mathrm{DU}}^I$ denote the DU‑Shapley estimator \citep{garrido2024}.  
Let $\widehat{\phi}_{i,\mathrm{strat}}^I$ denote the stratified Monte Carlo Shapley estimator \cite{maleki2013}. Then, under the assumptions of Theorem~\ref{thm:leading-term}, both $\widehat{\phi}_{i,\mathrm{DU}}^I$ and $\widehat{\phi}_{i,\mathrm{strat}}^I$ are leading‑order estimators. Consequently,
\begin{align*}
\left|\frac{\widehat{\phi}_{i,\mathrm{DU}}^I}{\phi_i^I} - 1\right| \xlongrightarrow{\mathbb{P}} 0,\qquad
\left|\frac{\widehat{\phi}_{i,\mathrm{strat}}^I}{\phi_i^I} - 1\right| \xlongrightarrow{\mathbb{P}} 0.
\end{align*}
\end{proposition}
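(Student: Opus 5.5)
The plan is to verify directly that both estimators can be written in the form of Definition~\ref{def:leading_order}, with conditions (C1) and (C2) holding. Relative consistency then follows from Theorem~\ref{thm:leading_order_consistent}. All the analytic work is already done in Lemmas~\ref{lem:du-close-leading} and~\ref{lem:strat-close-leading}; what remains is a rescaling that matches their summability bounds to the harmonic weighting in (C2).

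\textbf{Choice of cardinality-wise quantities.} For DU-Shapley, set $\widehat\Delta_{0,I}^{(i)}:=\widetilde\Delta_{0,I}$. For $k\ge1$, set
\begin{align*}
\widehat\Delta_{k,I}^{(i)}:=\frac{k\bar n}{n_i}\,\widetilde\Delta_{k,I},
\end{align*}
so that $\frac{1}{I}\sum_{k=1}^{I-1}\frac{n_i}{k\bar n}\widehat\Delta_{k,I}^{(i)}+\widehat\Delta_{0,I}^{(i)}/I$ is exactly $\widehat\phi_{i,\mathrm{DU}}^I$. For stratified Monte Carlo, do the same with $\widetilde\Delta_{k,I}$ replaced by the stratum average $\frac{1}{m_{k,I}}\sum_{b=1}^{m_{k,I}}\Delta_i(S_{k,b})$.

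\textbf{Condition (C1).} In DU-Shapley, $m_{0,I}=0$, so $\widetilde D_{0,I}=\emptyset$ and $\widehat\Delta_{0,I}^{(i)}=v(D_i)-v(\emptyset)$. In the stratified estimator, $S_{0,b}=\emptyset$, which gives the same term. Either way, the uniform marginal bound \eqref{eq:uniform-marginal-bound} gives $|\widehat\Delta_{0,I}^{(i)}|\le 2G\kappa$, so (C1) holds with $C_0=2G\kappa$.

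\textbf{Condition (C2).} The rescaling cancels the harmonic weight:
\begin{align*}
\frac1k\,\mathbb E\bigl|\widehat\Delta_{k,I}^{(i)}-c_i\bigr|=\frac{\bar n}{n_i}\,\mathbb E\Bigl|\widetilde\Delta_{k,I}-\frac{n_i}{k\bar n}c_i\Bigr|.
\end{align*}
Summing over $k=1,\dots,I-1$ and applying Lemma~\ref{lem:du-close-leading} gives (C2) with $C_\Delta=\frac{\bar n}{n_i}C_{\Delta,\mathrm{DU}}$. The same computation with Lemma~\ref{lem:strat-close-leading} gives $C_\Delta=\frac{\bar n}{n_i}C_{\Delta,\mathrm{strat}}$ for the stratified estimator. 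Both constants are independent of $I$, because $n_i$ is fixed and $\bar n\in[1,n_{\max}]$. Here the expectations are taken jointly over the surrounding players and the internal sampling randomness of each estimator; this is the setting in which the two lemmas were proved.

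\textbf{Conclusion.} Both estimators are therefore leading-order estimators, and Theorem~\ref{thm:leading_order_consistent} gives the two relative-consistency statements. Concretely, the proof of that theorem yields $\mathbb E|\widehat\phi_i^I-\Theta_i^I|=O(1/I)$. Combined with Theorem~\ref{thm:leading-term}, this gives $\mathbb E|\widehat\phi_i^I-\phi_i^I|=O(1/I)=o(\log I/I)$. By Markov's inequality this is $o_{\mathbb P}(\log I/I)$, and Corollary~\ref{cor:absolute-to-relative} applies since $c_i\neq0$. The only step needing care is the bookkeeping of the factor $k\bar n/n_i$ in the rescaling; once it is in place, the remaining steps are immediate.
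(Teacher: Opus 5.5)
Your proposal is correct and matches the paper's proof essentially step for step. It uses the same rescaling of each cardinality-wise term by $\frac{k\bar n}{n_i}$, obtains (C1) from the Lipschitz bound on $F(\mu_i)-F(0)$ (the paper gets $G\kappa$ directly rather than your $2G\kappa$, which is immaterial), and obtains (C2) by transferring Lemmas~\ref{lem:du-close-leading} and~\ref{lem:strat-close-leading} with constant $\frac{\bar n}{n_i}C_{\Delta,E}$. Your final chain (Theorem~\ref{thm:leading-term}, Markov, and Corollary~\ref{cor:absolute-to-relative}) usefully spells out a step that the paper's proof of Theorem~\ref{thm:leading_order_consistent} leaves implicit after showing $\mathbb E|\widehat\phi_i^I-\Theta_i^I|=O(1/I)$.
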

\begin{proof}
We prove the statement for both estimators at once.

Let $E\in\{\mathrm{DU},\mathrm{strat}\}$. For each estimator, write
\begin{align*}
    \widehat\phi_{i,E}^I
    =
    \frac{1}{I}
    \sum_{k=0}^{I-1}
    \Gamma_{k,I}^{E},
\end{align*}
where
\begin{align*}
    \Gamma_{k,I}^{\mathrm{DU}}
    :=
    \widetilde\Delta_{k,I}
    =
    v(\widetilde D_{k,I}\uplus D_i)-v(\widetilde D_{k,I}),
\end{align*}
and
\begin{align*}
    \Gamma_{k,I}^{\mathrm{strat}}
    :=
    \frac{1}{m_{k,I}}
    \sum_{b=1}^{m_{k,I}}
    \Delta_i(S_{k,b}).
\end{align*}
For $k=0$, both estimators use the empty-coalition marginal contribution,
\begin{align*}
    \Gamma_{0,I}^{E}
    =
    v(D_i)-v(\emptyset)
    =
    F(\mu_i)-F(0).
\end{align*}
For $k\ge 1$, define
\begin{align*}
    \widehat\Delta_{k,I}^{(i,E)}
    :=
    \frac{k\bar n}{n_i}\Gamma_{k,I}^{E},
    \qquad
    \widehat\Delta_{0,I}^{(i,E)}
    :=
    \Gamma_{0,I}^{E}.
\end{align*}
Then
\begin{align*}
    \widehat\phi_{i,E}^I
    =
    \frac{1}{I}
    \sum_{k=1}^{I-1}
    \frac{n_i}{k\bar n}
    \widehat\Delta_{k,I}^{(i,E)}
    +
    \frac{\widehat\Delta_{0,I}^{(i,E)}}{I}.
\end{align*}
Thus, both estimators have the structural form required in
Definition~\ref{def:leading_order}. It remains to verify whether conditions (C1) and (C2) hold.

\paragraph{Condition (C1).}
Since $F$ is $G$-Lipschitz and $\|\mu_i\|\le \kappa$,
\begin{align*}
    \left|
        \widehat\Delta_{0,I}^{(i,E)}
    \right|
    =
    |F(\mu_i)-F(0)|
    \le
    G\kappa.
\end{align*}
Therefore,
\begin{align*}
    \mathbb E
    \left[
        \left|
            \widehat\Delta_{0,I}^{(i,E)}
        \right|
    \right]
    \le
    G\kappa,
\end{align*}
uniformly in $I$, for both $E=\mathrm{DU}$ and
$E=\mathrm{strat}$.

\paragraph{Condition (C2).}
Let
\begin{align*}
    \beta_k:=\frac{n_i}{k\bar n}.
\end{align*}
For $k\ge 1$,
\begin{align*}
    \widehat\Delta_{k,I}^{(i,E)}-c_i
    =
    \frac{k\bar n}{n_i}
    \left(
        \Gamma_{k,I}^{E}-\beta_k c_i
    \right).
\end{align*}
Hence
\begin{align*}
    \frac{1}{k}
    \mathbb E
    \left[
        \left|
            \widehat\Delta_{k,I}^{(i,E)}-c_i
        \right|
    \right]
    =
    \frac{\bar n}{n_i}
    \mathbb E
    \left[
        \left|
            \Gamma_{k,I}^{E}-\beta_k c_i
        \right|
    \right].
\end{align*}
Summing over $k$, we obtain
\begin{align*}
    \sum_{k=1}^{I-1}
    \frac{1}{k}
    \mathbb E
    \left[
        \left|
            \widehat\Delta_{k,I}^{(i,E)}-c_i
        \right|
    \right]
    =
    \frac{\bar n}{n_i}
    \sum_{k=1}^{I-1}
    \mathbb E
    \left[
        \left|
            \Gamma_{k,I}^{E}-\frac{n_i}{k\bar n}c_i
        \right|
    \right].
\end{align*}

We now use the summability estimates proved in Lemmas~\ref{lem:du-close-leading}
and~\ref{lem:strat-close-leading}. More precisely, 
Lemma~\ref{lem:du-close-leading} establishes that there exists a constant
$C_{\Delta,\mathrm{DU}}<\infty$, independent of $I$, such that
\begin{align*}
    \sum_{k=1}^{I-1}
    \mathbb E
    \left[
        \left|
            \widetilde\Delta_{k,I}
            -
            \frac{n_i}{k\bar n}c_i
        \right|
    \right]
    \le
    C_{\Delta,\mathrm{DU}}.
\end{align*}
Likewise, Lemma~\ref{lem:strat-close-leading} establishes that
there exists a constant $C_{\Delta,\mathrm{strat}}<\infty$, independent of
$I$, such that
\begin{align*}
    \sum_{k=1}^{I-1}
    \mathbb E
    \left[
        \left|
            \frac{1}{m_{k,I}}
            \sum_{b=1}^{m_{k,I}}
            \Delta_i(S_{k,b})
            -
            \frac{n_i}{k\bar n}c_i
        \right|
    \right]
    \le
    C_{\Delta,\mathrm{strat}}.
\end{align*}
Therefore, for $E\in\{\mathrm{DU},\mathrm{strat}\}$,
\begin{align*}
    \sum_{k=1}^{I-1}
    \frac{1}{k}
    \mathbb E
    \left[
        \left|
            \widehat\Delta_{k,I}^{(i,E)}-c_i
        \right|
    \right]
    \le
    \frac{\bar n}{n_i}C_{\Delta,E},
\end{align*}
where $C_{\Delta,E}$ denotes the corresponding constant. This verifies
condition (C2) for both estimators.

Thus both $\widehat\phi_{i,\mathrm{DU}}^I$ and
$\widehat\phi_{i,\mathrm{strat}}^I$ are leading-order estimators.

Finally, by Theorem~\ref{thm:leading_order_consistent}, every leading-order estimator is relatively consistent whenever $c_i\neq 0$. Hence
\begin{align*}
    \left|
        \frac{\widehat\phi_{i,\mathrm{DU}}^I}{\phi_i^I}
        -1
    \right|
    \xrightarrow{\mathbb P}0,
    \qquad
    \left|
        \frac{\widehat\phi_{i,\mathrm{strat}}^I}{\phi_i^I}
        -1
    \right|
    \xrightarrow{\mathbb P}0.
\end{align*}
\end{proof}

\end{document}